\newtheorem{theorem}{Theorem}
\newtheorem{lemma}{Lemma}
\newtheorem{corollary}{Corollary}
\newtheorem{proposition}{Proposition}
\newtheorem{remark}{Remark}
\DeclareMathOperator*{\argmin}{arg\,min}
\begin{document}

\title{Max-Min Fair Transmit Precoding for Multi-group Multicasting in Massive MIMO}

\author{\small{Meysam Sadeghi, Student Member, IEEE\thanks{M. Sadeghi (meysam@mymail.sutd.edu.sg) and C. Yuen (yuenchau@sutd.edu.sg) are with Singapore University of Technology and Design, Singapore.
		E. Bj\"{o}rnson (emil.bjornson@liu.se) and E. G. Larsson (erik.g.larsson@liu.se) are with Department of Electrical Engineering (ISY), Link\"{o}ping University, Link\"{o}ping, Sweden.
		T. L. Marzetta (tom.marzetta@nokia-bell-labs.com) is with Bell Laboratories, Alcatel-Lucent, 600 Moutain Avenue, Murray Hill, NJ, 07974, USA. 
		This work was supported in part by the Swedish Research Council (VR), the Swedish Foundation for Strategic Research (SSF), and ELLIIT.
		\newline \indent Parts of this paper has been submitted to IEEE GLOBECOM 2017 \cite{GC2017}. In \cite{GC2017}, we briefly present ZF-undp and ZF-mudp, refer to Fig. \ref{figint}. In this manuscript, we present a detailed description of all six scenarios of Fig. \ref{figint}. },
	Emil Bj\"{o}rnson, Senior Member, IEEE,
	Erik~G.~Larsson, Fellow, IEEE,
	Chau~Yuen, Senior Member, IEEE,
	and~Thomas~L.~Marzetta, Fellow, IEEE}
	\vspace{-0.8cm}}

\maketitle

\begin{abstract}
This paper considers the downlink precoding for physical layer multicasting in massive multiple-input-multiple-output (MIMO) systems. We study the max-min fairness (MMF) problem, where channel state information (CSI) at the transmitter is used to design precoding vectors that maximize the minimum spectral efficiency (SE) of the system, given fixed power budgets for uplink training and downlink transmission. Our system model accounts for channel estimation, pilot contamination, arbitrary path-losses, and multi-group multicasting. We consider six scenarios with different transmission technologies (unicast and multicast), different pilot assignment strategies (dedicated or shared pilot assignments), and different precoding schemes (maximum ratio transmission and zero forcing), and derive achievable spectral efficiencies for all possible combinations. Then we solve the MMF problem for each of these scenarios and for any given pilot length we find the SE maximizing uplink pilot and downlink data transmission policies, all in closed-forms. We use these results to draw a general guideline for massive MIMO multicasting design, where for a given number of base station antennas, number of users, and coherence interval length, we determine the multicasting scheme that shall be used.
\end{abstract}

\begin{IEEEkeywords}
	Multicast transmission, Massive MIMO, physical layer precoding, large-scale antenna systems.
\end{IEEEkeywords}

\section{Introduction}
The advent of smartphones and tablets, data hungry applications, and the ever growing amount of digital content have increased the mobile data traffic unprecedentedly \cite{Cisco2016}. It is anticipated that the mobile data traffic will grow at a compound annual growth rate of $53 \%$ from $2015$ to $2020$ and reach $30.6$ exabytes per month \cite{Cisco2016}. A considerable portion of this traffic belongs to the contents that are of interest for groups of users in the network, for example, live broadcast of sporting events, mobile TV, and regular system updates  \cite{lecompte2012evolved,DVB-GFaria,DVB-Elhajjar}. Although these types of traffic can be delivered by unicast\footnote{We present a formal definition of unicast and multicast transmissions in Section II. B.} transmission, theoretically it is more efficient to employ multicast transmission\footnote{For the sake of brevity, in this paper we refer to physical layer multicasting as multicasting.} \cite{sidiropoulos2006transmit} and therefore it has been considered in different releases of the 3rd Generation Partnership Project \cite{lecompte2012evolved}.

Multicasting can be performed in two different ways, either with the blind isotropic transmission as in digital video broadcasting \cite{DVB-GFaria,DVB-Elhajjar} or by downlink precoding based on channel state information (CSI) \cite{sidiropoulos2006transmit,karipidis2008quality}. As detailed in \cite{karipidis2008quality}, the latter approach is more desirable for wireless systems. In this paper by multicasting we refer to this second approach, where the multi-antenna transmitter employs its CSI to perform precoding such that a desired metric of interest is optimized \cite{sidiropoulos2006transmit,karipidis2008quality}. A seminal study of multicasting is presented in \cite{sidiropoulos2006transmit}, where the precoder design for the so-called max-min fairness (MMF) and quality of service (QoS) problems is investigated. Considering a single-group single-cell system, it is shown that both MMF and QoS problems are NP-hard and a suboptimal solution is presented. This work is then extended to a multi-group single-cell scenario and it is shown that there exists a duality between the MMF and QoS problems \cite{karipidis2008quality}. The MMF problem is then revisited under per-antenna power constraint for multi-group single-cell systems in \cite{christopoulos2014weighted}. Also, the coordinated multicasting transmission for a single-group multi-cell scenario is investigated in \cite{xiang2013coordinated}. Note that \cite{sidiropoulos2006transmit,karipidis2008quality,christopoulos2014weighted,xiang2013coordinated} assume perfect CSI is available at the base station (BS) and also at the user terminals (UTs).

The aforementioned works (among many others) are based on the semidefinite relaxation (SDR) technique and suffer from high computational complexity. Considering a multicasting system with an $N$-antenna BS and $G$ different multicasting groups, the complexity of SDR based techniques is of $\mathcal{O}(G^{3.5}N^{6.5})$ \cite{karipidis2008quality}. This high complexity makes the SDR based multicasting algorithms impractical for large dimensional systems, e.g. massive MIMO systems where they deploy hundreds of antennas \cite{marzetta2010noncooperative}.

Due to significant performance of massive MIMO in terms of energy and spectral efficiency \cite{hoydis2013massive,ngo2013energy,LSAPowerNorm}, it is a promising candidate for the fifth generation of cellular networks \cite{andrews2014will,boccardi2014five}. Therefore recent works on multicasting have tried to address the high computational complexity of massive MIMO multicasting \cite{tran2014conic,christopoulos2015multicast,MeysamMultiComplexity}. Particularly, \cite{tran2014conic} presents a successive convex approximation technique for single-group single-cell multicasting of large-scale antenna arrays which reduces the computational complexity to $\mathcal{O}(N^{3.5})$. The system set-up of \cite{tran2014conic} is extended to a  multi-group single-cell multicasting in \cite{christopoulos2015multicast}. Therein a feasible point pursuit based algorithm with a complexity of $\mathcal{O}((GN)^{3.5})$ is presented. However, the complexity is still high for large-scale antenna systems with hundreds of antennas. Recently a low-complexity algorithm, $\mathcal{O}(N)$ for single-group and $\mathcal{O}(GN^{2})$ for multi-group multicasting, for massive MIMO system is presented in \cite{MeysamMultiComplexity}. This algorithm not only reduces the complexity but also significantly outperforms the SDR based methods.

The common denominator of the aforementioned algorithms is the perfect CSI assumption, both at the BS and at the UTs. However, in practice the CSI is not available neither at the BS nor at the UTs, and should be obtained. This introduces new challenges to the multicasting problem, which is already NP-hard. To address the CSI acquisition problem, two approaches have been presented in the literature. The first approach leverages the asymptotic orthogonality of the channels in massive MIMO, which simplifies the precoding design \cite{Zhengzheng2014,zhou2015joint,sadeghi2015multi}. The main problem with the asymptotic approach is that a very large number of antennas, e.g.,  $N>4000$, is required to get close to the asymptotic performance, while the performance is poor for realistic antenna numbers \cite{zhou2015joint,sadeghi2015multi}.

The second approach relies on employing predefined multicasting precoders \cite{YangMulticat}. More precisely, considering a single-cell multi-group multicasting system, \cite{YangMulticat} presents a maximum ratio transmission (MRT) based multicast precoder with a novel pilot allocation strategy. Contrary to the common approach where a dedicated pilot is used per UT, it uses a shared pilot for all the UTs within each multicasting group, hereafter called co-pilot assignment. They show numerically that MRT multicasting with co-pilot assignment substantially outperforms the MRT unicasting with dedicated pilot assignment in terms of minimum spectral efficiency (SE). 

The improvement in the SE of multicast transmission, shown by \cite{YangMulticat}, has motivated the application of co-pilot assignment in the subsequent works \cite{Zhengzheng2014,zhou2015joint,sadeghi2015multi}. But as this improved SE is observed by numerical comparison of MRT multicast transmission with co-pilot assignment, and MRT unicast transmission with dedicated pilot assignment, a series of questions remain to be answered:
\begin{itemize}
	\item Does the same observation hold for zero forcing (ZF)?
	\item When is beneficial to employ co-pilot assignment instead of dedicated pilot assignment? 
	\item Given a set of system parameters, which precoder and pilot assignment shall be used?  
\end{itemize}
To answer these questions, we study six different possible scenarios as shown in Fig. \ref{figint}. The first layer  of Fig. \ref{figint} considers the two possible transmission technology, unicast (un) and multicast (mu). The second layer considers the employed pilot assignment strategy\footnote{Note that for unicast we just consider dedicated pilot assignment as the co-pilot assignment results in very weak performance due to high inter-group interference and extreme pilot contamination.}, i.e., dedicated pilot (dp) or co-pilot (cp). The third layer determines the precoding scheme, which is either MRT or ZF. Then the six considered scenarios are: MRT-undp, ZF-undp, MRT-mudp, ZF-mudp, MRT-mucp, and ZF-mucp.\footnote{As an example note that MRT-mucp means MRT multicasting with co-pilot assignment.} 

\begin{figure}[]
	\centering
	\includegraphics[width=1\columnwidth, trim={0.25cm 2.9cm 0.2cm 2.8cm},clip]{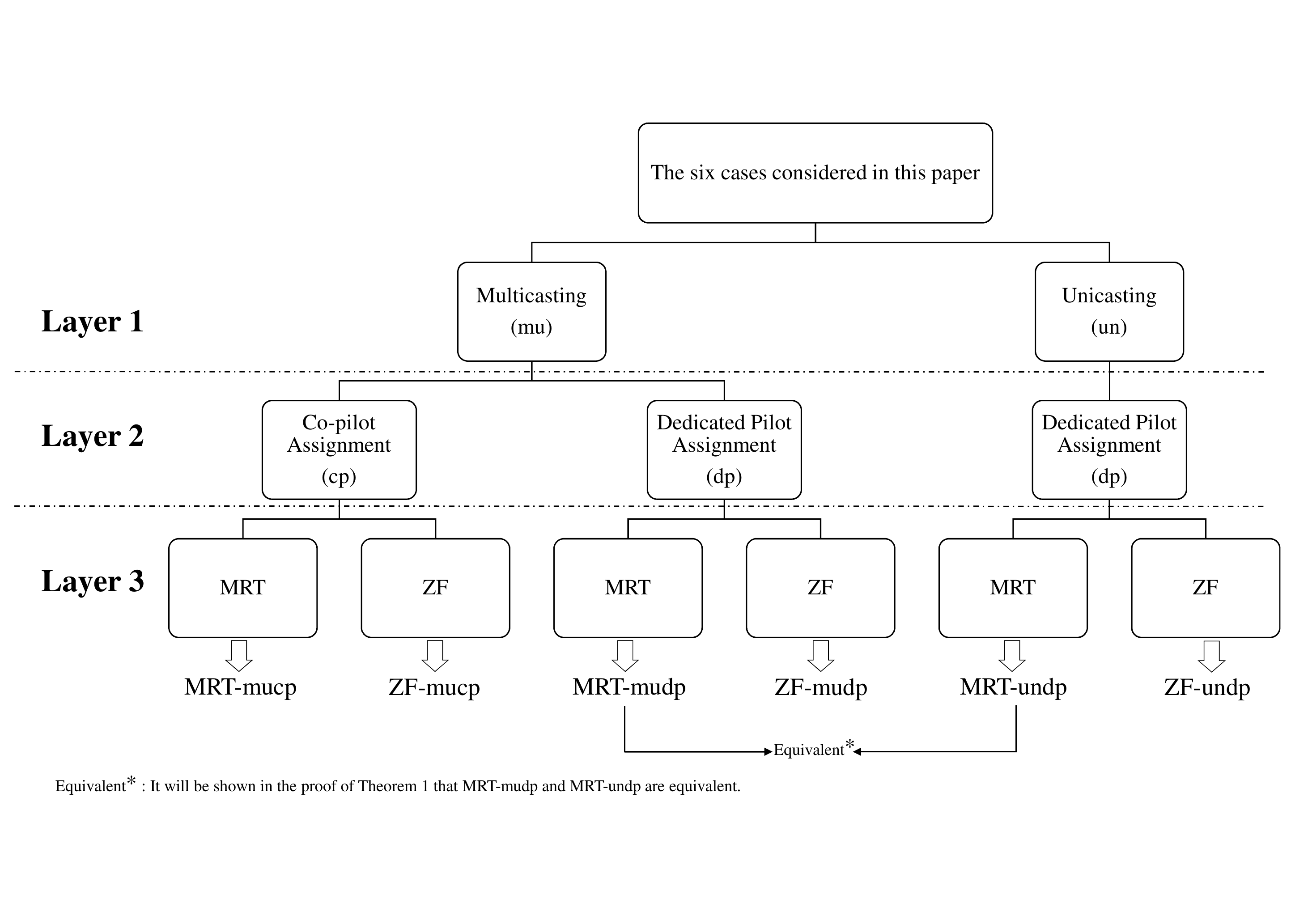}
	\caption{The six considered scenarios in this paper.}
	\label{figint}
\end{figure}

In this paper, we answer the aforementioned questions while considering a multi-group massive MIMO multicasting system with realistic CSI acquisition. Our main contributions are as follows:
\begin{itemize}
	\item We derive achievable SEs for each UT in the system considering the set-ups depicted in Fig. \ref{figint}.
	
	\item We formulate the MMF problem for each of the six scenarios in Fig. \ref{figint}. For an arbitrary pilot length, we find 1) the optimal uplink pilots powers; 2) the optimal downlink data transmission powers; and 3) the optimal SE for each UT in the system, all in closed-forms.
	
	\item Based on our analytical and numerical results, we draw a guideline for massive MIMO multicasting design. More precisely, given the number of BS antennas, the number of UTs, and the length of coherence interval, we determine the multicasting scheme that shall be used.
\end{itemize}

The remainder of this paper is organized as follows. Section II introduces the system model, the channel estimation, and elaborates the unicast and multicast transmissions. Section III presents the precoding schemes and their associated achievable SEs. Section IV studies the MMF problem for all set-ups of Fig. \ref{figint}. Section V presents the numerical analysis and further detailed discussions. Section VI summarizes the paper and presents the main conclusions.

\textit{Notations:} The following notation is used throughout the paper. Scalars are denoted by lower
case letters whereas boldface lower (upper) case letters are used for vectors (matrices). We
denote by $\mathbf{I}_{G}$ the identity matrix of size $G$ and represent the $j$ column of $\mathbf{I}_{G}$ as $\mathbf{e}_{j,G}$. The symbol $\mathcal{CN} (.,.)$ denotes the circularly symmetric complex Gaussian distribution. The trace, transpose, conjugate transpose, and expectation operators are denoted by $\mathrm{tr}(.)$, $(.)^{T}$ , $(.)^{H}$, and $\mathbb{E}[.]$, respectively. We denote the cardinality of a set $\mathcal{G}$ by $\vert \mathcal{G} \vert$. 

\section{System and Signal Model}
We consider multi-group multicasting in a single-cell massive MIMO system. We assume the system has one BS with $N$ antennas and it transmits $G$ data streams toward $G$ multicasting groups. We denote the set of indices of these $G$ multicasting groups as $\mathcal{G}$, i.e., $\mathcal{G} = \{ 1, \ldots , G \}$. We assume the $j$th data stream, $j \in \{1,\ldots,G\}$, is of interest for $K_{j}$ single antenna UTs, and we say these $K_{j}$ UTs belong to the $j$th multicasting group. We denote the set of indices of all the UTs in $j$th multicasting group as $\mathcal{K}_{j}$, i.e. $\mathcal{K}_{j} = \{ 1, \ldots , K_{j} \}$. Therefore $\vert \mathcal{G} \vert = G$ and $\vert \mathcal{K}_{j} \vert = K_{j}$. We assume each UT is assigned to just one multicasting group, i.e. $\mathcal{K}_{i} \cap \mathcal{K}_{j} =\emptyset  \; \forall i,j \in \mathcal{G}, i \neq j$. We denote the total number of UTs in the system as $K_{tot} = \sum_{j=1}^{G} K_{j}$.

We consider a block flat-fading channel model where $C_{B}$ (in Hz) is the coherence bandwidth and $C_{T}$ (in seconds) is the coherence time. Hence the channels are static within a coherence interval of $T=C_{B} C_{T}$ symbols. We assume the BS does not have a priori CSI but estimates the channels by uplink pilots transmission using a TDD protocol, exploiting channel-reciprocity. The procedure is detailed next. Under these assumptions, we represent the channel between the BS and UT $k$ in multicasting group $j$ as $\mathbf{g}_{jk}$. We assume all the UTs have independent Rayleigh fading channels, as it well-matches non-line-of-sight measurements \cite{XGaoMeasurment}. This implies that $\mathbf{g}_{jk} \sim \mathcal{CN}(\mathbf{0}, \beta_{jk} \mathbf{I}_{N}) \forall k, j$, where $\beta_{jk}$ represents the large-scale fading. 
\subsection{Channel Estimation}
The BS uses uplink pilot transmission to estimate the channels to the UTs in the system. As detailed in Section I, it can be performed either by dedicated pilot assignment \cite{ngo2013energy,hoydis2013massive}, or by co-pilot assignment \cite{YangMulticat}. The dedicated pilot approach sacrifices more resources, e.g., time-frequency slots in each coherence interval, to achieve a better estimation of the channel of each UT in the system. On the other hand, the co-pilot approach enforces deliberate pilot contamination among UTs of each multicasting group in order to reduce the consumed time-frequency resources. In the sequel we elaborate the channel estimation under each of these scenarios.

\subsubsection{Channel Estimation with Dedicated Pilot Assignment}
The dedicated pilot assignment uses one pilot per UT, so it requires $K_{tot}$ pilots per coherence interval. Denoting the pilot length as $\tau_{p}^{dp}$, to have orthogonal pilots we have $\tau_{p}^{dp} \geq K_{tot}$. Under dedicated pilot assignment, the minimum mean-square error (MMSE) estimate of the channel of UT $k$ in group $j$ is
\begin{align}
\label{est_gjk_dp}
\hat{\mathbf{g}}_{jk}^{dp} = \dfrac{\sqrt{\tau_{p}^{dp} p_{jk}^{u}} \beta_{jk} }{1 + \tau_{p} p_{jk}^{u} \beta_{jk}} \left(  \sqrt{\tau_{p}^{dp} p_{jk}^{u}}  \mathbf{g}_{jk} + \mathbf{n} \right) 
\end{align}
where $ \mathbf{n} \sim \mathcal{CN}(\mathbf{0}, \mathbf{I}_{N})$ is the normalized additive noise and $p_{jk}^{u}$ is the uplink pilot power of UT $k$ in group $j$. Therefore we have $\hat{\mathbf{g}}_{jk}^{dp} \sim \mathcal{CN} ( \mathbf{0}, \gamma_{jk}^{dp} \mathbf{I}_{N})$ with $\gamma_{jk}^{dp} =  \dfrac{\tau_{p}^{dp} p_{jk}^{u} \beta_{jk}^{2} }{1 + \tau_{p}^{dp} p_{jk}^{u} \beta_{jk}} $.  Also the estimation error is $\tilde{\mathbf{g}}_{jk}^{dp} = \hat{\mathbf{g}}_{jk}^{dp} - \mathbf{g}_{jk}  \sim \mathcal{CN}(\mathbf{0}, (\beta_{jk} - \gamma_{jk}^{dp}) \mathbf{I}_{N}) $. Moreover, we denote the $N \times K_{tot}$ matrix obtained by stacking the estimated channel of all the UTs in the system as $\hat{\mathbf{G}}_{dp} = [\hat{\mathbf{G}}_{1},\ldots,\hat{\mathbf{G}}_{G}]$, where $ \hat{\mathbf{G}}_{j} = [\hat{\mathbf{g}}_{j1}^{dp},\ldots,\hat{\mathbf{g}}_{jK_{j}}^{dp}] \; \forall j \in \mathcal{G}$.

\subsubsection{Channel Estimation with Co-pilot Assignment}
The co-pilot assignment uses one pilot per multicast group, so it requires $G$ pilots per coherence interval. Denoting the pilot length as $\tau_{p}^{cp}$, to have orthogonal pilots we need $\tau_{p}^{cp} \geq G$. Under co-pilot assignment the MMSE estimate of the channel of UT $k$ in multicasting group $j$ is
\begin{align}
\label{est_gjk_cp}
\hat{\mathbf{g}}_{jk}^{cp} = \dfrac{\sqrt{\tau_{p}^{cp} p_{jk}^{u}} \beta_{jk}}{ 1 + \tau_{p}^{cp} \sum_{m=1}^{K_{j}} p_{jm}^{u} \beta_{jm} } \left(  \sum_{m=1}^{K_j} \sqrt{\tau_{p}^{cp} p_{jm}^{u}} \mathbf{g}_{jm}  + \mathbf{n} \right) 
\end{align} 
where $\hat{\mathbf{g}}_{jk}^{cp} \sim \mathcal{CN}(\mathbf{0}, \gamma_{jk}^{cp}  \mathbf{I}_{N})$ with $\gamma_{jk}^{cp} = \dfrac{\tau_{p}^{cp} p_{jk}^{u} \beta_{jk}^{2}}{ 1 + \tau_{p}^{cp} \sum_{m=1}^{K_{j}} p_{jm}^{u} \beta_{jm}}$. From \eqref{est_gjk_cp} it is easy to observe that the channel estimate of each UT is contaminated by other UTs in its multicasting group. The estimation error of $\mathbf{g}_{jk}$ is $\tilde{\mathbf{g}}_{jk}^{cp} = \hat{\mathbf{g}}_{jk}^{cp} - \mathbf{g}_{jk} \sim \mathcal{CN}(\mathbf{0}, (\beta_{jk} - \gamma_{jk}^{cp}) \mathbf{I}_{N}) $. Moreover, we need the estimation of a linear combination of the channels of all the UTs within this multicasting group, which we denote as $\mathbf{g}_{j} = \sum_{k=1}^{K_j} \sqrt{\tau_{p}^{cp} p_{jk}^{u}} \mathbf{g}_{jk}$.\footnote{Note that for $K_{j}=1$, $\mathbf{g}_{j} = \sqrt{\tau_{p}^{cp} p_{jk}^{u}} \mathbf{g}_{jk}$.} Its MMSE estimate is 
\begin{align}
\label{estgj}
\hat{\mathbf{g}}_{j} = \dfrac{ \tau_{p}^{cp} \sum_{k=1}^{K_{j}} p_{jk}^{u} \beta_{jk}}{1 + \tau_{p}^{cp} \sum_{k=1}^{K_{j}} p_{jk}^{u} \beta_{jk}}  \left( \sum_{k=1}^{K_j} \sqrt{\tau_{p}^{cp} p_{jk}^{u}} \mathbf{g}_{jk}  + \mathbf{n} \right)                                                                                                       
\end{align}
and we have $\hat{\mathbf{g}}_{j} \sim  \mathcal{CN}(\mathbf{0},  \gamma_{j} \mathbf{I}_{N})$ with $\gamma_{j} = \dfrac{ (\tau_{p}^{cp} \sum_{k=1}^{K_{j}} p_{jk}^{u} \beta_{jk})^2}{1 + \tau_{p}^{cp} \sum_{k=1}^{K_{j}} p_{jk}^{u} \beta_{jk}}$. Also we denote the $N \times G$ matrix obtained by stacking the vectors $\hat{\mathbf{g}}_{j}$ $\forall j \in \mathcal{G}$ as $\hat{\mathbf{G}}_{cp} = [\hat{\mathbf{g}}_{1},\ldots,\hat{\mathbf{g}}_{{G}}]$.

\subsection{Transmission Mode: Unicast versus Multicast}
As motivated in Section I, we want to understand when it is beneficial to employ multicast transmission instead of unicast transmission. Therefore we consider both unicast and multicast transmissions in the sequel. Let us denote by $s_{i} \sim \mathcal{CN}(0,1) \; \forall i \in \mathcal{G}$ the signal requested by the UTs in the $i$th multicasting group, i.e., $\mathcal{K}_{i}$. We assume $s_{i}$ is independent across $i$. We stack them in a vector $\mathbf{s} = [s_{1}, \ldots, s_{G}]^{T}$.

\subsubsection{Unicast Transmission}
In unicast transmission we consider a $K_{tot} \times 1$ data vector $\mathbf{x}$ where
\begin{align}
\mathbf{x} = [\underbrace{s_{1},\ldots,s_{1}}_{K_{1}}, \underbrace{s_{2},\ldots,s_{2}}_{K_{2}}, \ldots, \underbrace{s_{G},\ldots,s_{G}}_{K_{G}}]^{T}.
\end{align}
Also the precoding matrix is an $N \times K_{tot}$ matrix $\mathbf{W}_{un} = [\mathbf{w}_{11},\ldots,\mathbf{w}_{jk}, \ldots, \mathbf{w}_{GK_{G}}]$, where $\mathbf{w}_{jk}$ is the precoding vector of UT $k$ in multicasting group $j$. We will provide more details on the exact structure of the precoding vectors in Section II. The received signal of UT $k$ in multicasting group $j$ during downlink transmission is 
\begin{align}
\label{unicasttransmission}
y_{jk} = \mathbf{g}_{jk}^{H} \mathbf{W}_{un} \mathbf{x} + n =  \mathbf{g}_{jk}^{H} \sum_{i=1}^{G} \sum_{t=1}^{K_{i}} \mathbf{w}_{it} s_{i} + n
\end{align}
where $n \sim \mathcal{CN}(0, 1)$ is the normalized noise.

\subsubsection{Multicast Transmission}
In the multicast case we use $\mathbf{s}$ as the data vector. Also the precoding matrix becomes an $N \times G$ matrix $\mathbf{W}_{mu} = [\mathbf{w}_{1},\ldots, \mathbf{w}_{G}]$ where $\mathbf{w}_{j}$ is the joint precoding vector of all the UTs in $j$th multicasting group. In this case, the received signal of UT $k$ in multicasting group $j$ is
\begin{align}
\label{multicasttransmission}
y_{jk} = \mathbf{g}_{jk}^{H} \mathbf{W}_{mu} \mathbf{s} + n =  \mathbf{g}_{jk}^{H} \sum_{i=1}^{G} \mathbf{w}_{i} s_{i} + n.
\end{align}

\section{Precoder Structures and Achievable SEs}
It is well known that in massive MIMO systems linear precoding schemes provide close-to-optimal performance \cite{Emil10Myth}. Also it has been shown that the asymptotically optimal precoders in massive MIMO multicasting are linear combinations of the channels \cite{sadeghi2015multi,Zhengzheng2014}. Therefore, in the sequel we consider two common linear precoding schemes in the context of massive MIMO systems, namely MRT and ZF \cite{yang2013performance}, and derive the achievable SE for them.


\subsection{Precoder Structure and Achievable SE for Unicast Transmission} 
Consider the $N \times K_{tot}$ precoding matrix $\mathbf{W}_{un} = [\mathbf{w}_{11},\ldots,\mathbf{w}_{jk}, \ldots, \mathbf{w}_{GK_{G}}]$ for unicast transmission with dedicated pilots. Then the MRT and ZF precoding vectors of UT $k$ in group $j$ are
\begin{align}
\label{MRTUNDP}
& \mathbf{w}_{jk}^{\rm{MRT-undp}} = \sqrt{\dfrac{p_{jk}^{dl}}{N \gamma_{jk}^{dp}}} \; \hat{\mathbf{g}}_{jk}^{dp}
\\
\label{ZFUNDP}
& \mathbf{w}_{jk}^{\rm{ZF-undp}} = \sqrt{p_{jk}^{dl} \gamma_{jk}^{dp} ( N-K_{tot})} \;\; \hat{\mathbf{G}}_{dp} (\hat{\mathbf{G}}^{H}_{dp} \hat{\mathbf{G}}_{dp})^{-1} \mathbf{e}_{\nu_{jk},K_{tot}}
\end{align}
where  $\nu_{jk}=\sum_{t=1}^{j-1} K_{t} + k$, $\mathbf{e}_{\nu_{jk},K_{tot}}$ is the $\nu_{jk}$th column of $\mathbf{I}_{K_{tot}}$, and $p_{jk}^{dl}$ is the downlink power allocated to this user. Note that for $\mathbf{w}_{jk}^{\rm{MRT}}$ and $\mathbf{w}_{jk}^{\rm{ZF}} $, we have $\mathbb{E}[\Vert \mathbf{w}_{jk}^{\rm{MRT}} \Vert^{2}] = p_{jk}^{dl}$ and $\mathbb{E}[\Vert \mathbf{w}_{jk}^{\rm{ZF}} \Vert^{2}] = p_{jk}^{dl}$. We denote the total utilized downlink power as $P_{dp}=\sum_{j=1}^{G} \sum_{k=1}^{K_{j}} p_{jk}^{dl}$. Given \eqref{MRTUNDP} and \eqref{ZFUNDP}, we can achieve the following SEs for the UTs in the system. 

\begin{proposition} \label{prop1}
	With MRT unicast transmission and dedicated pilot assignment, an achievable SE for user $k$ of group $i$ is
	\begin{align}
	\label{se_mrt_undp}
	\mathrm{SE}_{ik}^{\mathrm{MRT-undp}} = \left( 1 - \dfrac{\tau_{p}^{dp}}{T}\right)  \log_{2} (1+\mathrm{SINR}_{ik}^{\mathrm{MRT-undp}})
	\end{align}
	where $\mathrm{SINR}_{ik}^{\mathrm{MRT-undp}} = \dfrac{N  \gamma_{ik}^{dp} p_{ik}^{dl} }{ 1 +  \beta_{ik} P_{dp}}$ is the effective SINR of this user.
\end{proposition}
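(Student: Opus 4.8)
The plan is to obtain \eqref{se_mrt_undp} from the so-called use-and-then-forget capacity lower bound, in which user $k$ of group $i$ decodes $s_i$ knowing only the deterministic average of its effective channel. First I would split the received signal \eqref{unicasttransmission}: since $s_i$ is carried by every precoder $\mathbf{w}_{it}$, $t=1,\ldots,K_i$, the effective desired channel is $\mathbf{g}_{ik}^{H}\sum_{t=1}^{K_i}\mathbf{w}_{it}$, whose mean collapses to $\mathbb{E}[\mathbf{g}_{ik}^{H}\mathbf{w}_{ik}]$ because each $\mathbf{w}_{it}$ with $t\neq k$ is proportional to $\hat{\mathbf{g}}_{it}^{dp}$, which is zero-mean and, by the orthogonality of the dedicated pilots (hence the absence of pilot contamination), independent of $\mathbf{g}_{ik}$. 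Writing $y_{ik}$ as the sum of the coherent term $\mathbb{E}[\mathbf{g}_{ik}^{H}\mathbf{w}_{ik}]\,s_i$, the beamforming-gain fluctuation $(\mathbf{g}_{ik}^{H}\sum_{t}\mathbf{w}_{it}-\mathbb{E}[\mathbf{g}_{ik}^{H}\mathbf{w}_{ik}])\,s_i$, the inter-group interference $\mathbf{g}_{ik}^{H}\sum_{j\neq i}\sum_{t}\mathbf{w}_{jt}\,s_j$, and the noise $n$, the last three are uncorrelated with the first (the data symbols are independent of all channels and estimates and of one another), so lower-bounding their aggregate by an independent Gaussian of the same variance yields the achievable rate $\log_2(1+\mathrm{SINR}_{ik})$; multiplying by the fraction $1-\tau_p^{dp}/T$ of the coherence interval available for downlink data gives the pre-log factor.

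Next I would evaluate the moments entering $\mathrm{SINR}_{ik}$. For the numerator $|\mathbb{E}[\mathbf{g}_{ik}^{H}\mathbf{w}_{ik}]|^2$: substituting \eqref{MRTUNDP} and using the MMSE decomposition $\mathbf{g}_{ik}=\hat{\mathbf{g}}_{ik}^{dp}-\tilde{\mathbf{g}}_{ik}^{dp}$ with $\hat{\mathbf{g}}_{ik}^{dp}$ and $\tilde{\mathbf{g}}_{ik}^{dp}$ independent gives $\mathbb{E}[\mathbf{g}_{ik}^{H}\hat{\mathbf{g}}_{ik}^{dp}]=\mathbb{E}[\|\hat{\mathbf{g}}_{ik}^{dp}\|^2]=N\gamma_{ik}^{dp}$, so the numerator is $N\gamma_{ik}^{dp}p_{ik}^{dl}$. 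For the beamforming-gain variance $\mathbb{E}[|\mathbf{g}_{ik}^{H}\mathbf{w}_{ik}|^2]-|\mathbb{E}[\mathbf{g}_{ik}^{H}\mathbf{w}_{ik}]|^2$, I would compute $\mathbb{E}[|\mathbf{g}_{ik}^{H}\hat{\mathbf{g}}_{ik}^{dp}|^2]=\mathbb{E}[|\,\|\hat{\mathbf{g}}_{ik}^{dp}\|^2-(\tilde{\mathbf{g}}_{ik}^{dp})^{H}\hat{\mathbf{g}}_{ik}^{dp}\,|^2]$ by the same decomposition: the cross term vanishes, $\mathbb{E}[\|\hat{\mathbf{g}}_{ik}^{dp}\|^4]=N(N+1)(\gamma_{ik}^{dp})^2$, and conditioned on $\hat{\mathbf{g}}_{ik}^{dp}$ the inner product $(\tilde{\mathbf{g}}_{ik}^{dp})^{H}\hat{\mathbf{g}}_{ik}^{dp}$ is $\mathcal{CN}(0,(\beta_{ik}-\gamma_{ik}^{dp})\|\hat{\mathbf{g}}_{ik}^{dp}\|^2)$, contributing $(\beta_{ik}-\gamma_{ik}^{dp})N\gamma_{ik}^{dp}$; hence $\mathbb{E}[|\mathbf{g}_{ik}^{H}\mathbf{w}_{ik}|^2]=p_{ik}^{dl}(N\gamma_{ik}^{dp}+\beta_{ik})$ and the variance term equals $p_{ik}^{dl}\beta_{ik}$. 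Finally, for every $(j,t)\neq(i,k)$ the estimate $\hat{\mathbf{g}}_{jt}^{dp}$ is independent of $\mathbf{g}_{ik}$, so $\mathbb{E}[|\mathbf{g}_{ik}^{H}\mathbf{w}_{jt}|^2]=\tfrac{p_{jt}^{dl}}{N\gamma_{jt}^{dp}}\,\gamma_{jt}^{dp}\,\mathbb{E}[\|\mathbf{g}_{ik}\|^2]=p_{jt}^{dl}\beta_{ik}$, and all cross terms between distinct precoders vanish since each is zero-mean and independent of the rest.

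Collecting the denominator, the noise contributes $1$, the beamforming-gain uncertainty contributes $p_{ik}^{dl}\beta_{ik}$, and the residual interference contributes $\sum_{(j,t)\neq(i,k)}p_{jt}^{dl}\beta_{ik}$; their sum is $1+\beta_{ik}\sum_{j,t}p_{jt}^{dl}=1+\beta_{ik}P_{dp}$, so $\mathrm{SINR}_{ik}=\dfrac{N\gamma_{ik}^{dp}p_{ik}^{dl}}{1+\beta_{ik}P_{dp}}$, which is \eqref{se_mrt_undp}. I expect the one delicate point to be the fourth-order moment $\mathbb{E}[|\mathbf{g}_{ik}^{H}\hat{\mathbf{g}}_{ik}^{dp}|^2]$: one must not treat $\mathbf{g}_{ik}$ and $\hat{\mathbf{g}}_{ik}^{dp}$ as independent but instead exploit MMSE orthogonality together with the conditional Gaussianity of the estimation-error inner product; the remaining steps are routine bookkeeping with independence and the Gaussian moments $\mathbb{E}[\|\mathbf{x}\|^2]=N\sigma^2$, $\mathbb{E}[\|\mathbf{x}\|^4]=N(N+1)\sigma^4$ for $\mathbf{x}\sim\mathcal{CN}(\mathbf{0},\sigma^2\mathbf{I}_N)$.
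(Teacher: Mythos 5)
Your proof is correct, and it is exactly the use-and-then-forget bounding technique that the paper invokes (but omits) for Proposition~\ref{prop1}, including the paper's footnoted subtlety that the same-symbol transmissions $\mathbf{g}_{ik}^{H}\mathbf{w}_{it}s_{i}$, $t\neq k$, have zero mean and therefore add nothing to the coherent signal and are counted as interference. The moment computations ($|\mathbb{E}[\mathbf{g}_{ik}^{H}\mathbf{w}_{ik}]|^{2}=N\gamma_{ik}^{dp}p_{ik}^{dl}$, variance $p_{ik}^{dl}\beta_{ik}$, and $p_{jt}^{dl}\beta_{ik}$ per interfering precoder, summing to $1+\beta_{ik}P_{dp}$) all check out.
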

\begin{proof}
	The proof follows the conventional bounding\footnote{In Propositions 1 and 2, the achievable SE is obtained by employing the use and then forget (UatF) bounding technique\cite{marzetta2016fundamentals,jose2011pilot}. Compared to the classic application of UatF in massive MIMO, here we have a subtle technicality as follows. The interference caused by the transmission to the other UTs in group $i$ is uncorrelated with the effective transmission to user $k$ in group $i$, however the message is the same. Therefore the transmission to the other UTs within a multicast group does not contribute to the desired signal power and act as interference.} technique in \cite{marzetta2016fundamentals} and is omitted for brevity.
\end{proof}


\begin{proposition}
	With ZF unicast transmission and dedicated pilot assignment, an achievable SE for user $k$ of group $i$ is
	\begin{align}
	\label{se_zf_undp}
	\mathrm{SE}_{ik}^{\mathrm{ZF-undp}} = \left( 1 - \dfrac{\tau_{p}^{dp}}{T} \right)  \log_{2} (1+\mathrm{SINR}_{ik}^{\mathrm{ZF-undp}})
	\end{align}
	where $\mathrm{SINR}_{ik}^{\mathrm{ZF-undp}} = \dfrac{(N-K_{tot}) \gamma_{ik}^{dp} p_{ik}^{dl} }{1 + (\beta_{ik} - \gamma_{ik}^{dp}) P_{dp}}$ is the effective SINR of this user.
\end{proposition}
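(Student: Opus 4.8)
The plan is to derive \eqref{se_zf_undp} with the use-and-then-forget (UatF) bound, exactly as in Proposition~\ref{prop1}, now exploiting the interference-nulling structure of the ZF precoder \eqref{ZFUNDP}. First I would write the true channel as $\mathbf{g}_{ik}=\hat{\mathbf{g}}_{ik}^{dp}-\tilde{\mathbf{g}}_{ik}^{dp}$ and use that, for jointly Gaussian quantities with orthogonal training pilots, the estimation error $\tilde{\mathbf{g}}_{ik}^{dp}\sim\mathcal{CN}(\mathbf{0},(\beta_{ik}-\gamma_{ik}^{dp})\mathbf{I}_{N})$ is independent of the whole estimate matrix $\hat{\mathbf{G}}_{dp}$, hence of every ZF precoder built from it. The algebraic backbone is that $\hat{\mathbf{g}}_{ik}^{dp}=\hat{\mathbf{G}}_{dp}\mathbf{e}_{\nu_{ik},K_{tot}}$, so the product $\hat{\mathbf{g}}_{ik}^{dp\,H}\hat{\mathbf{G}}_{dp}(\hat{\mathbf{G}}_{dp}^{H}\hat{\mathbf{G}}_{dp})^{-1}$ telescopes to $\mathbf{e}_{\nu_{ik},K_{tot}}^{T}$, giving $\hat{\mathbf{g}}_{ik}^{dp\,H}\mathbf{w}_{jt}^{\rm{ZF-undp}}=\sqrt{p_{jt}^{dl}\gamma_{jt}^{dp}(N-K_{tot})}\,\mathbf{e}_{\nu_{ik},K_{tot}}^{T}\mathbf{e}_{\nu_{jt},K_{tot}}$, which equals $\sqrt{p_{ik}^{dl}\gamma_{ik}^{dp}(N-K_{tot})}$ for $(j,t)=(i,k)$ and vanishes otherwise.

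Substituting \eqref{ZFUNDP} into \eqref{unicasttransmission} and applying this, the received signal of user $k$ in group $i$ collapses to the deterministic desired term $\sqrt{p_{ik}^{dl}\gamma_{ik}^{dp}(N-K_{tot})}\,s_{i}$, plus a leakage that passes entirely through the error, $-\sum_{j=1}^{G}\sum_{t=1}^{K_{j}}\tilde{\mathbf{g}}_{ik}^{dp\,H}\mathbf{w}_{jt}^{\rm{ZF-undp}}\,s_{j}$, plus unit-variance noise. The UatF bound takes the deterministic coefficient as the effective channel, so the SINR numerator is $(N-K_{tot})\gamma_{ik}^{dp}p_{ik}^{dl}$; note that, unlike in Proposition~\ref{prop1}, the ``same-message'' subtlety is immaterial here, because ZF nulling makes the contribution of the co-group precoders $\mathbf{w}_{it}^{\rm{ZF-undp}}$, $t\neq k$, to the desired coefficient exactly zero. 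For the denominator, conditioning on $\hat{\mathbf{G}}_{dp}$ and using independence gives, for each group $j$, $\mathbb{E}[\,|\tilde{\mathbf{g}}_{ik}^{dp\,H}\sum_{t=1}^{K_{j}}\mathbf{w}_{jt}^{\rm{ZF-undp}}|^{2}\,]=(\beta_{ik}-\gamma_{ik}^{dp})\,\mathbb{E}[\,\|\sum_{t=1}^{K_{j}}\mathbf{w}_{jt}^{\rm{ZF-undp}}\|^{2}\,]$; since the $s_{j}$ are independent and zero-mean, the denominator is $1+(\beta_{ik}-\gamma_{ik}^{dp})\sum_{j=1}^{G}\mathbb{E}[\,\|\sum_{t=1}^{K_{j}}\mathbf{w}_{jt}^{\rm{ZF-undp}}\|^{2}\,]$.

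The only nontrivial step is to show $\sum_{j=1}^{G}\mathbb{E}[\,\|\sum_{t=1}^{K_{j}}\mathbf{w}_{jt}^{\rm{ZF-undp}}\|^{2}\,]=P_{dp}$. Writing $\hat{\mathbf{G}}_{dp}=\hat{\mathbf{Z}}\mathbf{D}^{1/2}$ with $\hat{\mathbf{Z}}$ having i.i.d.\ $\mathcal{CN}(0,1)$ entries and $\mathbf{D}$ the diagonal matrix of the variances $\gamma^{dp}_{jk}$, one has $(\hat{\mathbf{G}}_{dp}^{H}\hat{\mathbf{G}}_{dp})^{-1}=\mathbf{D}^{-1/2}(\hat{\mathbf{Z}}^{H}\hat{\mathbf{Z}})^{-1}\mathbf{D}^{-1/2}$, and since $\hat{\mathbf{Z}}^{H}\hat{\mathbf{Z}}$ is complex Wishart with $N$ degrees of freedom, $\mathbb{E}[(\hat{\mathbf{Z}}^{H}\hat{\mathbf{Z}})^{-1}]=\tfrac{1}{N-K_{tot}}\mathbf{I}_{K_{tot}}$, so $\mathbb{E}[(\hat{\mathbf{G}}_{dp}^{H}\hat{\mathbf{G}}_{dp})^{-1}]$ is \emph{diagonal}. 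This yields both the normalization $\mathbb{E}[\|\mathbf{w}_{jt}^{\rm{ZF-undp}}\|^{2}]=p_{jt}^{dl}$ already stated after \eqref{ZFUNDP} and the vanishing of the co-group cross terms $\mathbb{E}[(\mathbf{w}_{jt_{1}}^{\rm{ZF-undp}})^{H}\mathbf{w}_{jt_{2}}^{\rm{ZF-undp}}]=0$ for $t_{1}\neq t_{2}$, hence $\sum_{j=1}^{G}\mathbb{E}[\,\|\sum_{t}\mathbf{w}_{jt}^{\rm{ZF-undp}}\|^{2}\,]=\sum_{j,t}p_{jt}^{dl}=P_{dp}$. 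Plugging numerator and denominator into the UatF bound gives $\mathrm{SINR}_{ik}^{\rm{ZF-undp}}=\frac{(N-K_{tot})\gamma_{ik}^{dp}p_{ik}^{dl}}{1+(\beta_{ik}-\gamma_{ik}^{dp})P_{dp}}$, and multiplying $\log_{2}(1+\mathrm{SINR}_{ik}^{\rm{ZF-undp}})$ by the pilot-overhead prelog $1-\tau_{p}^{dp}/T$ yields \eqref{se_zf_undp}. The main obstacle is thus the Wishart-expectation bookkeeping behind that diagonality; once the ZF telescoping identity and the independence of the estimation error are in hand, everything else is immediate.
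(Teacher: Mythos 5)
Your proposal is correct and follows exactly the use-and-then-forget route (the paper's Lemma~\ref{MainLemma}) that the paper invokes for this proposition while omitting the details, in the same style as its Appendices A and B. The two ingredients you supply explicitly --- the telescoping identity $\hat{\mathbf{g}}_{ik}^{dpH}\hat{\mathbf{G}}_{dp}(\hat{\mathbf{G}}_{dp}^{H}\hat{\mathbf{G}}_{dp})^{-1}=\mathbf{e}_{\nu_{ik},K_{tot}}^{T}$, which makes the desired coefficient deterministic and kills the intra-group subtlety, and the inverse-Wishart expectation $\mathbb{E}[(\hat{\mathbf{Z}}^{H}\hat{\mathbf{Z}})^{-1}]=\tfrac{1}{N-K_{tot}}\mathbf{I}_{K_{tot}}$, which gives both the precoder normalization and the vanishing co-group cross terms so that the leakage power is $(\beta_{ik}-\gamma_{ik}^{dp})P_{dp}$ --- are precisely the bookkeeping the paper leaves to the reader, and they are handled correctly.
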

\begin{proof}
	The proof follows the conventional bounding technique in \cite{marzetta2016fundamentals} and is omitted for brevity.
\end{proof}


\subsection{Precoder Structure and Achievable SEs for Multicast Transmission}
As detailed in Section II.A, the required CSI for multicast transmission can be achieved either by dedicated pilot assignment or by co-pilot assignment. In the sequel we present the precoder structure and achievable SEs for both cases.


\subsubsection{Precoder Structure and Achievable SE for Multicast Transmission with Dedicated Pilot Assignment}
If dedicated pilot assignment is employed then the MRT and ZF precoding vectors of $j$th multicast group are
\begin{align}
\label{MRTMUDP}
& \mathbf{w}_{j}^{\rm{MRT-mudp}} = \sum_{k=1}^{K_{j}} \sqrt{\dfrac{p_{jk}^{dl}}{N \gamma_{jk}^{dp}}} \; \hat{\mathbf{g}}_{jk}^{dp}
\\
\label{ZFMUDP}
& \mathbf{w}_{j}^{\rm{ZF-mudp}} = (\mathbf{I}_{N} - \hat{\mathbf{G}}_{-j} (\hat{\mathbf{G}}_{-j}^{H} \hat{\mathbf{G}}_{-j})^{-1} \hat{\mathbf{G}}_{-j}^{H} ) \sum_{k=1}^{K_{j}} \sqrt{\mu_{jk}} \hat{\mathbf{g}}_{jk}^{dp} 
\end{align} 
where $p_{jk}^{dl}$ is the downlink power of UT $k$ in group $j$, $\hat{\mathbf{G}}_{-j} = [\hat{\mathbf{G}}_{1}, \ldots,\hat{\mathbf{G}}_{j-1},\hat{\mathbf{G}}_{j+1},\ldots,\hat{\mathbf{G}}_{G}]$, and $\mu_{jk} = \sqrt{\dfrac{p_{jk}^{dl}}{(N - \nu_{j}) \gamma_{jk}^{dp}}}$ with $\nu_{j} = K_{tot} - K_{j}$. For $\mathbf{w}_{j}^{\rm{MRT-mudp}}$ and $\mathbf{w}_{j}^{\rm{ZF-mudp}}$ we have $\mathbb{E}[\Vert \mathbf{w}_{j}^{\rm{MRT-mudp}} \Vert^{2}] = \sum_{k=1}^{K_{j}} p_{jk}^{dl}$ and $\mathbb{E}[\Vert \mathbf{w}_{j}^{\rm{ZF-mudp}} \Vert^{2}] = \sum_{k=1}^{K_{j}} p_{jk}^{dl}$.

Note that there is a subtle difference between ZF-undp and ZF-mudp. The ZF-undp scheme ensures that (within the limitations of channel estimation errors) any UT is immune to the transmissions intended for all other UTs, in its own multicasting group and also in other multicasting groups. Therefore it requires $N \geq K_{tot}$. However, ZF-mudp just ensures that the UTs within each multicasting group are rendered immune (within the limitations of channel estimation errors) to the transmissions to the rest of UTs in other multicasting groups and every UT experiences intra-group interference from the transmissions intended for the other UTs in its own group. Hence it requires $N \geq (K_{tot} - \max_{j \in \mathcal{G}} K_{j})$.

\begin{remark}
	\label{remZFmudp}
	Notice that \eqref{ZFMUDP} is a generalized version of the precoder proposed in \cite{MeysamMultiComplexity}, since that it accounts for imperfect CSI. As the precoder presented in \cite{MeysamMultiComplexity} outperforms the SDR based multicasting schemes, this generalization works as a benchmark and enable us to indirectly compare our proposed methods with the SDR based algorithms. This is of particular interest, as the SDR-based algorithms, which are assuming perfect CSI is available at both BS and UTs, are the baseline schemes used in the literature \cite{sidiropoulos2006transmit,karipidis2008quality,christopoulos2014weighted,xiang2013coordinated}.
\end{remark}

Given \eqref{MRTMUDP} and \eqref{ZFMUDP}, we can achieve the following SEs.

\begin{theorem}
	\label{T-MRT-mudp}
	With MRT multicast transmission and dedicated pilot assignment, an achievable SE for user $k$ of group $i$ is
	\begin{align}
	\label{se_mrt_mudp}
	\mathrm{SE}_{ik}^{\mathrm{MRT-mudp}} = \left( 1 - \dfrac{\tau_{p}^{dp}}{T}\right)  \log_{2} (1+\mathrm{SINR}_{ik}^{\mathrm{MRT-mudp}}).
	\end{align}
	where $\mathrm{SINR}_{ik}^{\mathrm{MRT-mudp}} = \dfrac{N \gamma_{ik}^{dp} p_{ik}^{dl}}{1 +  \beta_{ik} P_{dp}}$ is the effective SINR of this user.
\end{theorem}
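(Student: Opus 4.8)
The plan is to apply the use-and-then-forget (UatF) bounding technique, exactly as in Propositions 1 and 2, but now tracking carefully how the multicast precoder $\mathbf{w}_j^{\mathrm{MRT-mudp}}$ in \eqref{MRTMUDP} distributes signal and interference. First I would write the received signal \eqref{multicasttransmission} for user $k$ of group $i$ as the sum of a deterministic ``average'' desired term $\mathbb{E}[\mathbf{g}_{ik}^H \mathbf{w}_i]\, s_i$, a ``beamforming gain uncertainty'' term $(\mathbf{g}_{ik}^H \mathbf{w}_i - \mathbb{E}[\mathbf{g}_{ik}^H \mathbf{w}_i])\, s_i$, the inter-group interference $\sum_{l\neq i} \mathbf{g}_{ik}^H \mathbf{w}_l\, s_l$, and the noise $n$. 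Because the MRT precoder uses the MMSE estimates $\hat{\mathbf{g}}_{jk}^{dp}$, which under dedicated pilots are mutually independent across all $(j,k)$, the decomposition $\mathbf{g}_{ik} = \hat{\mathbf{g}}_{ik}^{dp} - \tilde{\mathbf{g}}_{ik}^{dp}$ with independent estimate and error is the key structural fact I would exploit.

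Next I would compute the three second-order quantities. For the desired signal power: $\mathbb{E}[\mathbf{g}_{ik}^H \mathbf{w}_i^{\mathrm{MRT-mudp}}] = \sqrt{p_{ik}^{dl}/(N\gamma_{ik}^{dp})}\,\mathbb{E}[\mathbf{g}_{ik}^H \hat{\mathbf{g}}_{ik}^{dp}]$, since all cross terms with $\hat{\mathbf{g}}_{im}^{dp}$, $m\neq k$, vanish by independence and zero mean; using $\mathbb{E}[\mathbf{g}_{ik}^H \hat{\mathbf{g}}_{ik}^{dp}] = \mathbb{E}[\|\hat{\mathbf{g}}_{ik}^{dp}\|^2] = N\gamma_{ik}^{dp}$ gives a desired power of $N\gamma_{ik}^{dp} p_{ik}^{dl}$. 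The noise contributes $1$. For the effective interference-plus-uncertainty power I would use $\mathbb{E}\big[\big|\mathbf{g}_{ik}^H \mathbf{W}_{mu}\mathbf{s}\big|^2\big]$ over all groups, subtract the coherent desired part, and show that what remains equals $\beta_{ik} P_{dp}$: the crucial point — flagged in the footnote to Proposition 1 — is that the transmission $\mathbf{g}_{ik}^H \mathbf{w}_i s_i$ carries the \emph{same} message $s_i$, so the ``extra'' beamforming terms toward the other users $m\neq k$ in group $i$ are \emph{uncorrelated} with the coherent term $\mathbb{E}[\mathbf{g}_{ik}^H \mathbf{w}_i]$ but do \emph{not} add to the useful signal, and instead land in the interference bucket. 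Collecting the total ``transmit energy seen at user $(i,k)$'' one gets $\sum_{l\in\mathcal{G}}\sum_{m\in\mathcal{K}_l} \mathbb{E}[|\mathbf{g}_{ik}^H \mathbf{w}_{lm}^{\mathrm{MRT}}|^2]$-type bookkeeping, where by the independence of the true channel $\mathbf{g}_{ik}$ from all estimates $\hat{\mathbf{g}}_{lm}^{dp}$ with $(l,m)\neq(i,k)$ each such term reduces to $\beta_{ik}\, p_{lm}^{dl}$, and the $(i,k)$ term yields $\beta_{ik} p_{ik}^{dl}$ as well once the coherent part $N\gamma_{ik}^{dp} p_{ik}^{dl}$ is removed; summing over all $(l,m)$ produces exactly $\beta_{ik}\sum_{l,m} p_{lm}^{dl} = \beta_{ik} P_{dp}$.

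Assembling these pieces into the standard UatF lower bound $\log_2(1+\mathrm{SINR})$ with the worst-case-uncorrelated-Gaussian-noise argument, and multiplying by the pre-log loss factor $(1-\tau_p^{dp}/T)$ for the pilot overhead, yields \eqref{se_mrt_mudp} with $\mathrm{SINR}_{ik}^{\mathrm{MRT-mudp}} = N\gamma_{ik}^{dp} p_{ik}^{dl}/(1+\beta_{ik}P_{dp})$. The main obstacle — and the only place this differs from textbook massive-MIMO MRT analysis — is handling the intra-group terms correctly: one must be careful that the summed precoder $\mathbf{w}_i = \sum_m \sqrt{p_{im}^{dl}/(N\gamma_{im}^{dp})}\,\hat{\mathbf{g}}_{im}^{dp}$ contributes coherently to user $k$ only through its own $m=k$ component, while the $m\neq k$ components, though carrying the same symbol $s_i$, are statistically orthogonal to $\hat{\mathbf{g}}_{ik}^{dp}$ and therefore must be counted as interference rather than signal — this is exactly why the SINR numerator has the single term $N\gamma_{ik}^{dp}p_{ik}^{dl}$ and not a square of a sum, while the denominator still only sees $\beta_{ik}P_{dp}$ in total. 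Verifying this separation rigorously via the independence structure of the MMSE estimates under dedicated pilots is the heart of the argument; the rest is routine moment computation for Gaussian vectors.
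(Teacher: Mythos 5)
Your proof is correct, but it follows a genuinely different route from the paper. The paper proves this theorem in essentially one line, by observing that the transmitted waveforms coincide: since $\mathbf{w}_{j}^{\mathrm{MRT-mudp}}=\sum_{k}\mathbf{w}_{jk}^{\mathrm{MRT-undp}}$ and every user in group $j$ requests the same symbol $s_{j}$, one has $\mathbf{W}_{un}\mathbf{x}=\sum_{j}\sum_{k}\mathbf{w}_{jk}^{\mathrm{MRT-undp}}s_{j}=\sum_{j}\mathbf{w}_{j}^{\mathrm{MRT-mudp}}s_{j}=\mathbf{W}_{mu}\mathbf{s}$, so each user receives exactly the same signal as under MRT-undp and the SINR and SE of Proposition \ref{prop1} carry over verbatim. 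You instead redo the UatF bound from scratch for the multicast precoder, and your bookkeeping is right: the coherent part gives $\vert\mathbb{E}[\mathbf{g}_{ik}^{H}\mathbf{w}_{i}]\vert^{2}=N\gamma_{ik}^{dp}p_{ik}^{dl}$; the inter-group terms contribute $\beta_{ik}p_{lm}^{dl}$ each; the own-group $m\neq k$ components and the residual self term (after removing the coherent part) likewise contribute $\beta_{ik}p_{im}^{dl}$ each, summing to $\beta_{ik}P_{dp}$; and you correctly classify the intra-group components carrying $s_{i}$ as zero-mean, uncorrelated effective noise, which is exactly the subtlety flagged in the footnote to Proposition \ref{prop1}, so Lemma \ref{MainLemma} applies and the stated SINR follows. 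What each approach buys: the paper's equivalence argument is essentially free given Proposition \ref{prop1} (whose own proof is omitted) and makes Corollary \ref{C-MMF-MRT-mudp} immediate, whereas your direct computation is self-contained and supplies precisely the moment calculations the paper leaves to the reader, at the cost of re-deriving what the equivalence makes automatic.
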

\begin{proof}
	The proof follows by showing that when we have a common message for all the users in each multicasting group, the MRT-mudp is equivalent with MRT-undp: 
	\begin{align*}
	\mathbf{W}_{un} \mathbf{x} = \sum_{j=1}^{G} \sum_{k=1}^{K_{j}} \mathbf{w}_{jk}^{\mathrm{MRT-undp}} s_{j} =  \sum_{j=1}^{G} \mathbf{w}_{j}^{\mathrm{MRT-mudp}} s_{j} = \mathbf{W}_{mu} \mathbf{s}.
	\end{align*}
	Hence the SINR and SE are the same as Proposition \ref{prop1}.
\end{proof}

\begin{theorem}
	\label{TZFmudp}
	With ZF multicast transmission and dedicated pilot assignment, an achievable SE for user $k$ of group $i$ is
	\begin{align}
	\label{se_zf_mudp}
	\mathrm{SE}_{ik}^{\mathrm{ZF-mudp}} = \left( 1 - \dfrac{\tau_{p}^{dp}}{T}\right)  \log_{2} (1+\mathrm{SINR}_{ik}^{\mathrm{ZF-mudp}}).
	\end{align}
	where $\mathrm{SINR}_{ik}^{\mathrm{ZF-mudp}} = \dfrac{(N - \nu_{i}) \gamma_{ik}^{dp} p_{ik}^{dl}}{1+ \gamma_{ik}^{dp} \sum_{m=1}^{K_{i}} p_{im}^{dl}   +  (\beta_{ik} - \gamma_{ik}^{dp}) P_{dp}}$ is the effective SINR of this user.
\end{theorem}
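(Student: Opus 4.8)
The plan is to invoke the use‑and‑then‑forget (UatF) bound exactly as in Propositions~\ref{prop1} and~2, so that the achievable SE has the form $\left(1-\tau_{p}^{dp}/T\right)\log_{2}(1+\mathrm{SINR}_{ik}^{\mathrm{ZF-mudp}})$ with (writing $\mathbf{w}_{j}\equiv\mathbf{w}_{j}^{\mathrm{ZF-mudp}}$ throughout)
\[
\mathrm{SINR}_{ik}^{\mathrm{ZF-mudp}}=\frac{\bigl|\mathbb{E}[\mathbf{g}_{ik}^{H}\mathbf{w}_{i}]\bigr|^{2}}{\mathbb{E}\bigl[\,|\mathbf{g}_{ik}^{H}\mathbf{w}_{i}|^{2}\,\bigr]-\bigl|\mathbb{E}[\mathbf{g}_{ik}^{H}\mathbf{w}_{i}]\bigr|^{2}+\sum_{l\neq i}\mathbb{E}\bigl[\,|\mathbf{g}_{ik}^{H}\mathbf{w}_{l}|^{2}\,\bigr]+1},
\]
so all that remains is to evaluate the signal mean, the signal second moment, and the inter‑group second moments. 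The structural facts I would lean on are: (i) $\mathbf{g}_{ik}=\hat{\mathbf{g}}_{ik}^{dp}-\tilde{\mathbf{g}}_{ik}^{dp}$ with the estimate and the error independent (MMSE orthogonality); (ii) under dedicated pilots the estimates of distinct UTs are mutually independent, so in particular $\hat{\mathbf{g}}_{ik}^{dp}$ and all $\{\hat{\mathbf{g}}_{im}^{dp}\}_{m\neq k}$ are independent, and $\hat{\mathbf{g}}_{ik}^{dp}$ is independent of $\hat{\mathbf{G}}_{-i}$, hence of the orthogonal projector $\mathbf{P}_{-i}:=\mathbf{I}_{N}-\hat{\mathbf{G}}_{-i}(\hat{\mathbf{G}}_{-i}^{H}\hat{\mathbf{G}}_{-i})^{-1}\hat{\mathbf{G}}_{-i}^{H}$, whose rank is $N-\nu_{i}$ almost surely; and (iii) $\mathbf{w}_{l}$ lies in the range of $\mathbf{P}_{-l}$, and for $l\neq i$ the matrix $\mathbf{P}_{-l}$ annihilates every column of $\hat{\mathbf{G}}_{i}$, in particular $\hat{\mathbf{g}}_{ik}^{dp}$.

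The inter‑group terms and the signal mean are then routine. By fact~(iii), $\hat{\mathbf{g}}_{ik}^{dpH}\mathbf{w}_{l}=0$ for $l\neq i$, so $\mathbf{g}_{ik}^{H}\mathbf{w}_{l}=-\tilde{\mathbf{g}}_{ik}^{dpH}\mathbf{w}_{l}$; since $\tilde{\mathbf{g}}_{ik}^{dp}\sim\mathcal{CN}(\mathbf{0},(\beta_{ik}-\gamma_{ik}^{dp})\mathbf{I}_{N})$ is independent of $\mathbf{w}_{l}$, we get $\mathbb{E}[|\mathbf{g}_{ik}^{H}\mathbf{w}_{l}|^{2}]=(\beta_{ik}-\gamma_{ik}^{dp})\,\mathbb{E}[\|\mathbf{w}_{l}\|^{2}]=(\beta_{ik}-\gamma_{ik}^{dp})\sum_{m=1}^{K_{l}}p_{lm}^{dl}$, and summing over $l\neq i$ gives $(\beta_{ik}-\gamma_{ik}^{dp})\bigl(P_{dp}-\sum_{m=1}^{K_{i}}p_{im}^{dl}\bigr)$. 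For the signal mean, write $\mathbf{w}_{i}=\mathbf{P}_{-i}\sum_{m=1}^{K_{i}}c_{im}\hat{\mathbf{g}}_{im}^{dp}$ where $c_{im}$ is the coefficient from \eqref{ZFMUDP}, fixed by the normalization $\mathbb{E}[\|\mathbf{w}_{i}\|^{2}]=\sum_{m}p_{im}^{dl}$ to satisfy $c_{im}^{2}\gamma_{im}^{dp}(N-\nu_{i})=p_{im}^{dl}$. Conditioning on $\mathbf{P}_{-i}$ and using fact~(ii), only the $m=k$ summand of $\mathbb{E}[\hat{\mathbf{g}}_{ik}^{dpH}\mathbf{P}_{-i}\hat{\mathbf{g}}_{im}^{dp}]$ survives, equal to $c_{ik}\gamma_{ik}^{dp}\,\mathbb{E}[\mathrm{tr}\,\mathbf{P}_{-i}]=c_{ik}\gamma_{ik}^{dp}(N-\nu_{i})$; squaring and substituting $c_{ik}^{2}$ yields the numerator $(N-\nu_{i})\gamma_{ik}^{dp}p_{ik}^{dl}$.

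The main obstacle, and the only genuine computation, is the signal second moment $\mathbb{E}[|\mathbf{g}_{ik}^{H}\mathbf{w}_{i}|^{2}]$, because $\hat{\mathbf{g}}_{ik}^{dp}$ appears both outside and inside $\mathbf{w}_{i}$, so this is not merely a scaled norm. I would split $\mathbf{g}_{ik}^{H}\mathbf{w}_{i}=\hat{\mathbf{g}}_{ik}^{dpH}\mathbf{w}_{i}-\tilde{\mathbf{g}}_{ik}^{dpH}\mathbf{w}_{i}$; the two parts are uncorrelated (the error is zero‑mean and independent of $\hat{\mathbf{g}}_{ik}^{dp}$ and $\mathbf{P}_{-i}$), so the second moment is the sum of the two second moments, the error part contributing $(\beta_{ik}-\gamma_{ik}^{dp})\sum_{m=1}^{K_{i}}p_{im}^{dl}$ as before. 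For the estimate part I would expand $\bigl|\sum_{m}c_{im}\,\hat{\mathbf{g}}_{ik}^{dpH}\mathbf{P}_{-i}\hat{\mathbf{g}}_{im}^{dp}\bigr|^{2}$, condition on $\mathbf{P}_{-i}$, and use independence across UTs: every cross term ($m\neq m'$) and every self term with $m=m'\neq k$ collapses, via $\mathbb{E}[\hat{\mathbf{g}}_{im}^{dp}\hat{\mathbf{g}}_{im}^{dpH}]=\gamma_{im}^{dp}\mathbf{I}_{N}$ and $\mathbf{P}_{-i}^{2}=\mathbf{P}_{-i}$, to $\sum_{m\neq k}c_{im}^{2}\gamma_{im}^{dp}\gamma_{ik}^{dp}(N-\nu_{i})$; the $m=k$ self term is $c_{ik}^{2}\,\mathbb{E}\bigl[(\hat{\mathbf{g}}_{ik}^{dpH}\mathbf{P}_{-i}\hat{\mathbf{g}}_{ik}^{dp})^{2}\bigr]$, and given $\mathbf{P}_{-i}$ the quadratic form is $\gamma_{ik}^{dp}$ times a sum of $N-\nu_{i}$ i.i.d.\ unit‑mean exponentials, whose second moment is $(N-\nu_{i})^{2}+(N-\nu_{i})$, giving $c_{ik}^{2}(\gamma_{ik}^{dp})^{2}\bigl((N-\nu_{i})^{2}+(N-\nu_{i})\bigr)$. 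Substituting $c_{im}^{2}\gamma_{im}^{dp}=p_{im}^{dl}/(N-\nu_{i})$ reduces the estimate part to $p_{ik}^{dl}\gamma_{ik}^{dp}(N-\nu_{i})+\gamma_{ik}^{dp}\sum_{m=1}^{K_{i}}p_{im}^{dl}$. Finally I would assemble the denominator as (total received power) $-$ (signal power) $+\,1$: adding $\mathbb{E}[|\mathbf{g}_{ik}^{H}\mathbf{w}_{i}|^{2}]=p_{ik}^{dl}\gamma_{ik}^{dp}(N-\nu_{i})+\beta_{ik}\sum_{m=1}^{K_{i}}p_{im}^{dl}$ to the inter‑group sum, subtracting $(N-\nu_{i})\gamma_{ik}^{dp}p_{ik}^{dl}$, and adding $1$, the intra‑group power terms combine as $\beta_{ik}\sum_{m}p_{im}^{dl}-(\beta_{ik}-\gamma_{ik}^{dp})\sum_{m}p_{im}^{dl}=\gamma_{ik}^{dp}\sum_{m}p_{im}^{dl}$, leaving exactly $1+\gamma_{ik}^{dp}\sum_{m=1}^{K_{i}}p_{im}^{dl}+(\beta_{ik}-\gamma_{ik}^{dp})P_{dp}$, which is the claimed denominator.
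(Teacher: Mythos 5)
Your proof is correct and takes essentially the same route as the paper's Appendix A: the UatF bound of Lemma \ref{MainLemma}, with the coherent term $\sqrt{\mu_{ik}}\gamma_{ik}^{dp}(N-\nu_i)$, the vanishing cross terms, the fourth-moment evaluation $\mathbb{E}\bigl[(\hat{\mathbf{g}}_{ik}^{dpH}\mathbf{C}_{i}\hat{\mathbf{g}}_{ik}^{dp})^{2}\bigr]=(\gamma_{ik}^{dp})^{2}\bigl((N-\nu_i)^2+(N-\nu_i)\bigr)$, and the estimation-error interference $(\beta_{ik}-\gamma_{ik}^{dp})\sum_m p_{jm}^{dl}$ computed exactly as there. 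The only cosmetic difference is bookkeeping: you anchor the effective channel to $\mathbb{E}[\mathbf{g}_{ik}^{H}\mathbf{w}_{i}]$ and fold the error into $\mathrm{var}(\mathbf{g}_{ik}^{H}\mathbf{w}_{i})$, while the paper anchors it to $\mathbb{E}[\hat{\mathbf{g}}_{ik}^{dpH}\mathbf{w}_{i}]$ and keeps the $\tilde{\mathbf{g}}_{ik}^{dpH}\mathbf{w}_{j}$ terms separate; these coincide because the error is zero-mean, independent of the precoders, and $\hat{\mathbf{g}}_{ik}^{dpH}\mathbf{w}_{l}=0$ for $l\neq i$.
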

\begin{proof}
	The proof is given in Appendix A.
\end{proof}


\subsubsection{Precoder Structure for Multicast Transmission with Co-pilot Assignment}
If co-pilot assignment is utilized then the MRT and ZF precoding vectors of $j$th multicast group are
\begin{align}
\label{MRTMUCP}
& \mathbf{w}_{j}^{\rm{MRT-mucp}} = \sqrt{\dfrac{p_{j}^{dl}}{N \gamma_{j}}} \; \hat{\mathbf{g}}_{j}
\\
\label{ZFMUCP}
& \mathbf{w}_{j}^{\rm{ZF-mucp}} = \sqrt{p_{j}^{dl} \gamma_{j} ( N-G)} \;\; \hat{\mathbf{G}}_{cp} (\hat{\mathbf{G}}_{cp}^{H} \hat{\mathbf{G}}_{cp})^{-1} \mathbf{e}_{j,G} 
\end{align} 
where $p_{j}^{dl}$ is the downlink power of the precoding vector of group $j$. Note that for $\mathbf{w}_{j}^{\rm{MRT-mucp}} $ and $\mathbf{w}_{j}^{\rm{ZF-mucp}}$ we have $\mathbb{E}[\Vert \mathbf{w}_{j}^{\rm{MRT-mucp}}  \Vert^{2}] =  p_{j}^{dl}$ and $\mathbb{E}[\Vert \mathbf{w}_{j}^{\rm{ZF-mucp}} \Vert^{2}] =  p_{j}^{dl}$. We denote the utilized downlink power as $P_{cp} = \sum_{j=1}^{G} p_{j}^{dl}$. By using MRT as in \eqref{MRTMUCP}, it has been shown that the following achievable SE for user $k$ of group $i$ can be obtained \cite{YangMulticat}
\begin{align}
\label{se_mrt_mucp}
\mathrm{SE}_{ik}^{\mathrm{MRT-mucp}} = \left( 1 - \dfrac{\tau_{p}^{cp}}{T} \right)  \log_{2} (1+\mathrm{SINR}_{ik}^{\mathrm{MRT-mucp}})
\end{align}
where $\label{sinr_mrt_mucp} \mathrm{SINR}_{ik}^{\mathrm{MRT-mucp}} = \dfrac{ N \gamma_{ik}^{cp} p_{i}^{dl}}{1+\beta_{ik}P_{cp}}$ is the effective SINR of this user. By using ZF as in \eqref{ZFMUCP}, we can achieve the following SE.

\begin{theorem}
	\label{Theorem3}
	With ZF multicast transmission and co-pilot assignment, an achievable SE for user $k$ of group $i$ is
	\begin{align}
	\label{se_zf_mucp}
	\mathrm{SE}_{ik}^{\mathrm{ZF-mucp}} = \left( 1 - \dfrac{\tau_{p}^{cp}}{T}\right)  \log_{2} (1+\mathrm{SINR}_{ik}^{\mathrm{ZF-mucp}}).
	\end{align}
	where $	\label{sinr_zf_mucp} \mathrm{SINR}_{ik}^{\mathrm{ZF-mucp}} = \dfrac{(N-G) \gamma_{ik}^{cp} p_{i}^{dl} }{1+(\beta_{ik} - \gamma_{ik}^{cp})P_{cp}}$ is the effective SINR of this user.
\end{theorem}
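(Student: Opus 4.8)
The plan is to derive the SINR with the use-and-then-forget (UatF) bound, mirroring the MRT-mucp result \eqref{se_mrt_mucp} but exploiting zero-forcing orthogonality. Writing $\mathbf{w}_l$ for $\mathbf{w}_l^{\mathrm{ZF-mucp}}$, the received signal of user $k$ in group $i$ from \eqref{multicasttransmission} splits as
\[
y_{ik} = \mathbb{E}[\mathbf{g}_{ik}^{H}\mathbf{w}_i]\,s_i + \bigl(\mathbf{g}_{ik}^{H}\mathbf{w}_i - \mathbb{E}[\mathbf{g}_{ik}^{H}\mathbf{w}_i]\bigr)s_i + \sum_{l\neq i}\mathbf{g}_{ik}^{H}\mathbf{w}_l\,s_l + n ,
\]
and treating the first term as the useful signal and the last three as mutually uncorrelated effective noise (the $s_l$ being i.i.d., zero-mean, unit-variance) gives the achievable rate $(1-\tau_p^{cp}/T)\log_2(1+\mathrm{SINR}_{ik}^{\mathrm{ZF-mucp}})$ with
\[
\mathrm{SINR}_{ik}^{\mathrm{ZF-mucp}} = \frac{\bigl|\mathbb{E}[\mathbf{g}_{ik}^{H}\mathbf{w}_i]\bigr|^{2}}{1 + \sum_{l=1}^{G}\mathbb{E}\bigl[|\mathbf{g}_{ik}^{H}\mathbf{w}_l|^{2}\bigr] - \bigl|\mathbb{E}[\mathbf{g}_{ik}^{H}\mathbf{w}_i]\bigr|^{2}} ,
\]
the pre-log factor accounting for the $\tau_p^{cp}$ pilot symbols.

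The key step --- and essentially the only non-mechanical one --- is to notice that under co-pilot assignment the per-user estimate and the group estimate are proportional: comparing \eqref{est_gjk_cp} and \eqref{estgj} gives $\hat{\mathbf{g}}_{ik}^{cp} = \rho_{ik}\,\hat{\mathbf{g}}_i$ with $\rho_{ik} = \sqrt{\tau_p^{cp}p_{ik}^{u}}\,\beta_{ik}\big/\bigl(\tau_p^{cp}\sum_{m=1}^{K_i}p_{im}^{u}\beta_{im}\bigr)$, and then a one-line comparison of $\gamma_{ik}^{cp}$ with $\gamma_i$ yields the identity $\rho_{ik}^{2}\gamma_i = \gamma_{ik}^{cp}$. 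I would then use the MMSE decomposition $\mathbf{g}_{ik} = \hat{\mathbf{g}}_{ik}^{cp} - \tilde{\mathbf{g}}_{ik}^{cp} = \rho_{ik}\hat{\mathbf{g}}_i - \tilde{\mathbf{g}}_{ik}^{cp}$, where $\tilde{\mathbf{g}}_{ik}^{cp}\sim\mathcal{CN}(\mathbf{0},(\beta_{ik}-\gamma_{ik}^{cp})\mathbf{I}_N)$ is independent of the whole matrix $\hat{\mathbf{G}}_{cp}$ --- it is MMSE-orthogonal to $\hat{\mathbf{g}}_i$ and it is a function of group-$i$ quantities only, hence independent of the group-$l$ channels and pilot noise for $l\neq i$ --- and therefore independent of every precoder $\mathbf{w}_l$. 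The exact ZF identity $\hat{\mathbf{g}}_i^{H}\hat{\mathbf{G}}_{cp}(\hat{\mathbf{G}}_{cp}^{H}\hat{\mathbf{G}}_{cp})^{-1}\mathbf{e}_{l,G} = \mathbf{e}_{i,G}^{H}\mathbf{e}_{l,G}$ then gives $\hat{\mathbf{g}}_i^{H}\mathbf{w}_l = \sqrt{p_l^{dl}\gamma_l(N-G)}\;\delta_{il}$, which is \emph{deterministic}; this is precisely where ZF beats MRT, since the residual self-interference term becomes degenerate.

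The remainder is routine. Since $\tilde{\mathbf{g}}_{ik}^{cp}$ is zero-mean and independent of $\mathbf{w}_i$, one gets $\mathbb{E}[\mathbf{g}_{ik}^{H}\mathbf{w}_i] = \rho_{ik}\sqrt{p_i^{dl}\gamma_i(N-G)}$, so the numerator is $\rho_{ik}^{2}p_i^{dl}\gamma_i(N-G) = (N-G)\gamma_{ik}^{cp}p_i^{dl}$ by the identity above. For each term of the sum, expanding with the decomposition and dropping the cross term by independence gives $\mathbb{E}[|\mathbf{g}_{ik}^{H}\mathbf{w}_l|^{2}] = \rho_{ik}^{2}\,\mathbb{E}[|\hat{\mathbf{g}}_i^{H}\mathbf{w}_l|^{2}] + \mathbb{E}[|(\tilde{\mathbf{g}}_{ik}^{cp})^{H}\mathbf{w}_l|^{2}]$; the first piece equals $(N-G)\gamma_{ik}^{cp}p_i^{dl}$ when $l=i$ and $0$ otherwise, so it cancels the subtracted $|\mathbb{E}[\mathbf{g}_{ik}^{H}\mathbf{w}_i]|^{2}$ in the denominator, while the second piece equals $(\beta_{ik}-\gamma_{ik}^{cp})\,\mathbb{E}[\|\mathbf{w}_l\|^{2}] = (\beta_{ik}-\gamma_{ik}^{cp})p_l^{dl}$ by the normalization $\mathbb{E}[\|\mathbf{w}_l^{\mathrm{ZF-mucp}}\|^{2}]=p_l^{dl}$ recorded after \eqref{ZFMUCP}. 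Summing over $l$ and using $P_{cp}=\sum_{l=1}^{G}p_l^{dl}$ leaves the denominator $1+(\beta_{ik}-\gamma_{ik}^{cp})P_{cp}$, which is the claimed SINR. I do not anticipate a real obstacle; the two places needing a little care are the independence of $\tilde{\mathbf{g}}_{ik}^{cp}$ from the other groups' precoders and the verification of $\rho_{ik}^{2}\gamma_i = \gamma_{ik}^{cp}$, both of which are short.
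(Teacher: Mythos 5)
Your proposal is correct and takes essentially the same route as the paper's Appendix B: both proofs rest on the proportionality $\hat{\mathbf{g}}_{ik}^{cp}=\rho_{ik}\hat{\mathbf{g}}_{i}$ (equivalently $\rho_{ik}^{2}\gamma_{i}=\gamma_{ik}^{cp}$), the exact ZF identity making $\hat{\mathbf{g}}_{i}^{H}\mathbf{w}_{l}^{\mathrm{ZF-mucp}}$ deterministic and $\delta_{il}$-structured, the independence of $\tilde{\mathbf{g}}_{ik}^{cp}$ from all precoders, and the normalization $\mathbb{E}[\Vert\mathbf{w}_{l}^{\mathrm{ZF-mucp}}\Vert^{2}]=p_{l}^{dl}$. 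The only cosmetic difference is that you start from the generic UatF expression and let the effective-channel variance collapse, whereas the paper substitutes the proportionality into the received signal first so that the deterministic effective channel of its Lemma \ref{MainLemma} appears from the outset.
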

\begin{proof}
	The proof is given in Appendix B.
\end{proof}

In Theorem \ref{Theorem3} we obtained a simple closed-form for the SINR of ZF-mucp, while the precoder is entirely based on the composite channels, e.g., $\hat{\mathbf{g}}_{j} \; \forall j \in \mathcal{G}$. This is because we took advantage of the fact that $\forall j \in \mathcal{G}, \forall k \in \mathcal{K}_{j}$, $ \hat{\mathbf{g}}_{jk}^{cp}$ and $ \hat{\mathbf{g}}_{j}$ are equal up to a scalar coefficient. Hence ZF-mucp can cancel the inter-group interference, within the limitation of the channel estimates, which leads to the obtained simple closed-form for the SINR of ZF-mucp. The proof details are given in Appendix B.

\begin{remark}
	\label{Rem-MRTtoZF}
	Note that when we switch from MRT to ZF in the above scenarios, e.g., from Proposition 1 to Proposition 2, the SINR terms always change in a particular way. The signal power in the numerator reduces by a factor of $\frac{N - \kappa}{N}$, where $\kappa$ depends on the considered scenario. Also the interference in the denominator reduces from $\beta_{ik} P_{dp}$ to $(\beta_{ik} - \gamma_{ik}^{dp}) P_{dp}$ or from $\beta_{ik} P_{cp}$ to $(\beta_{ik} - \gamma_{ik}^{cp}) P_{cp}$. This is due to the fact that ZF uses these $\kappa$ degrees of freedom to cancel the interference toward other UTs at the cost of reducing the received power of each UT. 
\end{remark}

\section{Max-Min Fairness Problem}
The MMF problem is the common problem of interest in multicasting systems, where we maximize the minimum of a metric of interest given some constraints on the resources. For the sake of simplicity, the existing works in the literature \cite{sidiropoulos2006transmit,karipidis2008quality,YangMulticat,MeysamMultiComplexity,Zhengzheng2014,christopoulos2015multicast,christopoulos2014weighted,xiang2013coordinated,zhou2015joint,sadeghi2015multi} consider the SINR as the metric of interest and the available power at the BS as the resource constraint, while ignoring CSI acquisition. Here we consider a more general problem formulation for MMF that accounts for the CSI acquisition. We choose the SE as our metric of interest and also we set our resource constraints as 1) the available power at the BS; 2) the uplink training power limit of the UTs; and 3) the length of the pilots. Therefore the MMF problem for dedicated pilot assignment is
\begin{align}
\label{MMF_dp_SE}
\mathcal{P}1: \max_{\tau_{p}^{dp}, \{p_{jk}^{dl}\}, \{p_{jk}^{u}\}} \min_{\forall j \in \mathcal{G}} & \min_{\forall k \in \mathcal{K}_{j}} \quad \; (1 - \frac{\tau_{p}^{dp}}{T}) \log_{2}(1 + \mathrm{SINR}_{jk}^{\mathrm{dp}})
\\
& s.t. \quad \quad p_{jk}^{u} \leq  p^{utot}_{jk} \quad \; \forall \; k \in \; \mathcal{K}_{j}, \forall \; j \in \; \mathcal{G} \tag{\ref{MMF_dp_SE}-C1}
\\
\label{poweryek1_SE}
&  \quad \quad  \quad \; P_{dp} = \sum_{j=1}^{G} \sum_{k=1}^{K_{j}} p_{jk}^{dl} \leq P	\tag{\ref{MMF_dp_SE}-C2}
\\
& \; \quad  \quad \quad \tau_{p}^{dp} \in \{K_{tot}, \ldots,T \} \tag{\ref{MMF_dp_SE}-C3}
\end{align}
where $p^{utot}_{jk}$ is the maximum pilot power of user $k$ in group $j$, and $P$ is the total available power at the BS. Similarly, the MMF problem for co-pilot assignment is

\begin{align}
\label{MMF_cp_SE}
\mathcal{P}2: \max_{\tau_{p}^{cp},\{p_{j}^{dl}\}, \{p_{jk}^{u}\}} \min_{\forall j \in \mathcal{G}} & \min_{\forall k \in \mathcal{K}_{j}} \quad (1 - \frac{\tau_{p}^{cp}}{T}) \log_{2}(1 + \mathrm{SINR}_{jk}^{\mathrm{cp}})
\\
& s.t. \quad \quad p_{jk}^{u} \leq  p^{utot}_{jk} \quad \; \forall \; k \in \; \mathcal{K}_{j}, \forall \; j \in \; \mathcal{G} \tag{\ref{MMF_cp_SE}-C1}
\\
\label{powerdo2_SE}
&  \quad \quad  \quad \; P_{cp} = \sum_{j=1}^{G} p_{j}^{dl} \leq P 	\tag{\ref{MMF_cp_SE}-C2} 
\\
& \; \quad  \quad \quad \tau_{p}^{dp} \in \{G, \ldots ,T \} \tag{\ref{MMF_cp_SE}-C3}.
\end{align}
Note that the constraints \eqref{poweryek1_SE} and \eqref{powerdo2_SE} are due to the total available power at the BS, but are slightly different. When we use a dedicated pilot per UT, we obtain a dedicated estimate of the channel of each user. Hence in the downlink we can decide on the amount of power we allocate to the UTs on a per UT basis, e.g., $p_{jk}^{dl}$. On the other hand, for co-pilot transmission, the channel estimates of all UTs within a multicasting group are different just by a scalar coefficient. Hence we just can allocate the power on a per group basis, e.g., $p_{j}^{dl}$. It is straightforward to show that for both $\mathcal{P}1$ and $\mathcal{P}2$, the constraints \eqref{poweryek1_SE} and \eqref{powerdo2_SE} should be met with equality. To see this, assume the contrary, e.g., at the optimal solution of $\mathcal{P}2$ we have $P > P_{cp} = \sum_{j=1}^{G} p_{j}^{dl} $. Then one can increase all the $p_{j}^{dl}$ by a factor of $\frac{P}{P_{cp}}$. This increases each UT's SE, hence improves the minimum SE of the system. This contradicts our assumption. Consequently at the optimal solution of $\mathcal{P}2$, $P = P_{cp}$. In the remainder of this section, we find the optimal solutions to $\mathcal{P}1$ and $\mathcal{P}2$ for the six considered scenarios of Fig. \ref{figint}.

To solve $\mathcal{P}1$ and $\mathcal{P}2$, we use a two-step approach. First, we solve them for any arbitrary value of $\tau_{p}^{dp}$ or $\tau_{p}^{cp}$ and determine their optimal solution in closed-form. Second, we find the optimal value of $\tau_{p}^{dp}$ or $\tau_{p}^{cp}$ by searching over the finite discrete set of all the possible values, thanks to the closed-form obtained in the first step. Given an arbitrary $\tau_{p}^{dp}$, as logarithm is a strictly increasing function, $\mathcal{P}1$ can be replaced with a problem $\mathcal{P}^{\prime}1$ as follows
\begin{align}
\label{MMF_dp_SINR}
\mathcal{P}^{\prime}1: \max_{\{p_{jk}^{dl}\}, \{p_{jk}^{u}\}} \min_{\forall j \in \mathcal{G}} & \min_{\forall k \in \mathcal{K}_{j}} \quad  \mathrm{SINR}_{jk}^{\mathrm{dp}}
\\
& s.t. \quad \quad \text{\ref{MMF_dp_SE}-C1 and } P_{dp} = P. \notag
\end{align}
Similarly, $\mathcal{P}2$ can be replaced with a problem $\mathcal{P}^{\prime}2$ as follows
\begin{align}
\label{MMF_cp_SINR}
\mathcal{P}^{\prime}2: \max_{\{p_{j}^{dl}\}, \{p_{jk}^{u}\}} \min_{\forall j \in \mathcal{G}} & \min_{\forall k \in \mathcal{K}_{j}} \quad  \mathrm{SINR}_{jk}^{\mathrm{cp}}
\\
& s.t. \quad \quad \text{\ref{MMF_cp_SE}-C1 and } P_{cp} = P. \notag
\end{align}


\subsection{MMF solution for MRT-undp}
\begin{theorem}
	\label{T-MMF-MRT-undp}
	Consider $\mathcal{P}^{\prime}1$ with MRT-undp, then at the optimal solution all the UTs receive the same SINR and it is equal to  
	\begin{align}
	\label{SINR-MMF-MRT-undp}
	\Gamma =  NP \left(\sum_{i=1}^{G} \sum_{k=1}^{K_{j}} \frac{1+\beta_{ik}P}{\gamma_{ik}^{dp*}} \right)^{-1}		
	\end{align}
	with $\gamma_{ik}^{dp*}  =  \dfrac{\tau_{p}^{dp} p_{ik}^{utot} \beta_{ik}^{2} }{1 + \tau_{p}^{dp} p_{ik}^{utot} \beta_{jt}}$. The optimal uplink training and downlink transmission powers of UT $k$ in group $i$ are
	\begin{align}
	p_{ik}^{u*}  =& \; p_{ik}^{utot} 
	\\
	p_{ik}^{dl*} =& \; 	\dfrac{1 + \beta_{ik} P}{ \gamma_{ik}^{dp*} N} \; \Gamma .
	\end{align}
\end{theorem}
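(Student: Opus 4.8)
The plan is to first eliminate the uplink powers by a monotonicity argument, then recognize the remaining downlink problem as a classical max--min power allocation over a simplex and solve it in closed form. Since the theorem is stated for $\mathcal{P}'1$ specialized to MRT-undp, I work directly with $\mathrm{SINR}_{ik}^{\mathrm{MRT-undp}} = N\gamma_{ik}^{dp} p_{ik}^{dl}/(1+\beta_{ik}P_{dp})$.

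\emph{Step 1 (saturate the uplink powers).} I would observe that $\gamma_{ik}^{dp}$, as a function of $p_{ik}^u$, has the form $ax/(1+bx)$ with $a=\tau_p^{dp}\beta_{ik}^2>0$ and $b=\tau_p^{dp}\beta_{ik}>0$, hence is strictly increasing. In the SINR expression $p_{ik}^u$ enters \emph{only} through $\gamma_{ik}^{dp}$, and $\gamma_{ik}^{dp}$ appears only in the SINR of user $k$ in group $i$ (this decoupling is exactly what dedicated pilots with $\tau_p^{dp}\ge K_{tot}$ buy us: no pilot contamination). Therefore raising $p_{ik}^u$ to its cap $p_{ik}^{utot}$ weakly increases $\mathrm{SINR}_{ik}$ and leaves every other user's SINR unchanged, so it weakly increases the $\min$ objective of $\mathcal{P}'1$. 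Hence there is an optimal solution with $p_{ik}^{u*}=p_{ik}^{utot}$, i.e. $\gamma_{ik}^{dp}=\gamma_{ik}^{dp*}$, for all $i,k$; and, as already noted before $\mathcal{P}'1$ was stated, the downlink budget is active, $P_{dp}=P$.

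\emph{Step 2 (reduce and solve).} With $\gamma_{ik}^{dp*}$ and $P_{dp}=P$ substituted, $\mathcal{P}'1$ becomes
\[
\max_{p_{ik}^{dl}\ge 0,\ \sum_{i,k}p_{ik}^{dl}=P}\ \min_{i,k}\ c_{ik}\,p_{ik}^{dl},\qquad c_{ik}:=\frac{N\gamma_{ik}^{dp*}}{1+\beta_{ik}P}.
\]
For any target level $t$, the cheapest way to guarantee $c_{ik}p_{ik}^{dl}\ge t$ for all $(i,k)$ is $p_{ik}^{dl}=t/c_{ik}$, using total power $\sum_{i,k}t/c_{ik}$; so level $t$ is attainable within budget $P$ iff $t\le P\big(\sum_{i,k}1/c_{ik}\big)^{-1}$. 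Thus the optimal value is $\Gamma:=P\big(\sum_{i,k}1/c_{ik}\big)^{-1}$, attained only at $p_{ik}^{dl*}=\Gamma/c_{ik}$ (any feasible allocation with common minimum $\Gamma$ must have $p_{ik}^{dl}\ge\Gamma/c_{ik}$ with equal sums, forcing equality, which also shows all SINRs equalize). Expanding $c_{ik}$ gives $p_{ik}^{dl*}=\Gamma(1+\beta_{ik}P)/(N\gamma_{ik}^{dp*})$ and $\Gamma=NP\big(\sum_{i,k}(1+\beta_{ik}P)/\gamma_{ik}^{dp*}\big)^{-1}$, matching the statement, and by construction $\mathrm{SINR}_{ik}^{\mathrm{MRT-undp}}=c_{ik}p_{ik}^{dl*}=\Gamma$ for every user.

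The proof is short, and the only place needing genuine care is Step 1: one must verify that the uplink power of a user is truly decoupled from every other user's SINR (valid here precisely because $\gamma_{ik}^{dp}$ depends on $p_{ik}^u$ alone under dedicated pilots) and that the monotonicity of $x\mapsto ax/(1+bx)$ makes saturation optimal. Once the problem is reduced to the weighted simplex max--min above, Step 2 is a textbook allocation and reduces to a one-line computation.
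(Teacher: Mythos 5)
Your proof is correct and follows essentially the same route as the paper's Appendix C: saturate the uplink powers by monotonicity of $\gamma_{ik}^{dp}$ in $p_{ik}^{u}$ (valid since dedicated pilots decouple the users), then equalize the downlink SINRs and read off $\Gamma$ from the total-power constraint $P_{dp}=P$. The only difference is cosmetic: where the paper proves SINR equalization by a power-exchange contradiction argument, you obtain it (together with uniqueness of the allocation) directly from the target-level feasibility computation $t\le P\bigl(\sum_{i,k}1/c_{ik}\bigr)^{-1}$, which yields the same closed forms.
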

\begin{proof}
	The proof is given in Appendix C.
\end{proof}


\subsection{MMF solution for ZF-undp}
\begin{theorem}
	\label{T-MMF-ZF-undp}
	Consider $\mathcal{P}^{\prime}1$ with ZF-undp, then at the optimal solution all the UTs receive the same SINR and it is equal to  
	\begin{align}
	\label{SINRZFundp}
	\Gamma = \dfrac{(N-K_{tot}) P}{\sum_{i=1}^{G} \sum_{k=1}^{K_{i}} \dfrac{1 + (\beta_{ik} - \gamma_{ik}^{dp*}) P}{\gamma_{ik}^{dp*}}}
	\end{align}
	with $\gamma_{ik}^{dp*}  =  \dfrac{\tau_{p}^{dp} p_{ik}^{utot} \beta_{ik}^{2} }{1 + \tau_{p}^{dp} p_{ik}^{utot} \beta_{ik}}$. The optimal uplink training and downlink transmission powers of UT $k$ in group $i$ are
	\begin{align}
	\label{upTrZFundp}
	p_{ik}^{u*}  =& \; p_{ik}^{utot} 
	\\
	\label{dlTrZFundp}
	p_{ik}^{dl*} =&   \dfrac{1 + (\beta_{ik} - \gamma_{ik}^{dp*}) P}{ \gamma_{ik}^{dp*} ( N-K_{tot})} \; \Gamma .
	\end{align}
\end{theorem}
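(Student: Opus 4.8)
The plan is to mirror the two-stage argument used for MRT-undp in Appendix C (Theorem \ref{T-MMF-MRT-undp}), adjusted via the correspondence noted in Remark \ref{Rem-MRTtoZF}. Recall from the discussion preceding $\mathcal{P}^{\prime}1$ that the downlink power budget is tight, so $P_{dp}=P$ holds throughout, and observe that $\mathrm{SINR}_{ik}^{\mathrm{ZF-undp}}$ depends on the uplink powers only through $\gamma_{ik}^{dp}$, i.e., only through the training power of the \emph{same} user. First I would dispose of the uplink variables, then handle the downlink allocation.

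\textbf{Step 1: optimal uplink powers.} The map $p_{ik}^{u}\mapsto\gamma_{ik}^{dp}=\tau_{p}^{dp}p_{ik}^{u}\beta_{ik}^{2}/(1+\tau_{p}^{dp}p_{ik}^{u}\beta_{ik})$ is strictly increasing, and since $\gamma_{ik}^{dp}<\beta_{ik}$ one checks that $\mathrm{SINR}_{ik}^{\mathrm{ZF-undp}}=(N-K_{tot})\gamma_{ik}^{dp}p_{ik}^{dl}/(1+(\beta_{ik}-\gamma_{ik}^{dp})P)$ is strictly increasing in $\gamma_{ik}^{dp}$ --- the numerator grows and the positive denominator shrinks. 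No other user's SINR contains $p_{ik}^{u}$ (the only inter-user coupling term, $P_{dp}$, is pinned to $P$), so raising each $p_{ik}^{u}$ to its cap $p_{ik}^{utot}$ can only increase the objective and respects all constraints. Hence $p_{ik}^{u*}=p_{ik}^{utot}$ and $\gamma_{ik}^{dp}=\gamma_{ik}^{dp*}$ at the optimum.

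\textbf{Step 2: downlink powers and equalization.} With the $\gamma_{ik}^{dp*}$ now fixed constants and $P_{dp}=P$ fixed, each $\mathrm{SINR}_{ik}^{\mathrm{ZF-undp}}$ is affine and strictly increasing in its own $p_{ik}^{dl}$ with a constant denominator, the users being linked only through $\sum_{i,k}p_{ik}^{dl}=P$. A standard exchange argument then shows all SINRs must be equal at the optimum: if one user had SINR above the current minimum, shifting an arbitrarily small amount of downlink power from it to a minimizing user strictly raises the minimum while keeping the sum constraint, contradicting optimality. Denoting the common value by $\Gamma$ and solving $\mathrm{SINR}_{ik}^{\mathrm{ZF-undp}}=\Gamma$ gives $p_{ik}^{dl}=\Gamma\,(1+(\beta_{ik}-\gamma_{ik}^{dp*})P)/((N-K_{tot})\gamma_{ik}^{dp*})$; summing over all users and imposing $\sum_{i,k}p_{ik}^{dl}=P$ yields $\Gamma$ as in \eqref{SINRZFundp}, and substituting $\Gamma$ back produces \eqref{dlTrZFundp} (together with \eqref{upTrZFundp} from Step 1).

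The monotonicity checks and the final linear solve are routine; the only delicate point is the equalization step, where one must verify that the power exchange is always feasible and strictly improving --- which it is, since the downlink powers range over a $(K_{tot}-1)$-simplex and $N>K_{tot}$ guarantees strictly positive SINRs so that every user is ``hungry'' for power. This is precisely the reasoning of Appendix C, and indeed the present proof is obtained from the MRT-undp proof by the replacements $N\to N-K_{tot}$ in the signal term and $\beta_{ik}P\to(\beta_{ik}-\gamma_{ik}^{dp*})P$ in the interference term, exactly as anticipated by Remark \ref{Rem-MRTtoZF}.
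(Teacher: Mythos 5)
Your proposal is correct and follows essentially the same route as the paper: the paper's own proof is just a sketch deferring to the Appendix C argument for MRT-undp (monotonicity of the SINR in $p_{ik}^{u}$ giving $p_{ik}^{u*}=p_{ik}^{utot}$, a power-exchange argument equalizing all SINRs, then solving for $\Gamma$ from $P_{dp}=P$), and your two steps reproduce exactly that, correctly noting the extra feature that raising $\gamma_{ik}^{dp}$ also shrinks the ZF denominator and that no other user's SINR depends on $p_{ik}^{u}$.
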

\textit{Proof Sketch.} The proof is similar to the proof of Theorem \ref{T-MMF-MRT-undp} and its sketch is presented for brevity. First it should be shown that for every UT $k$ in group $i$ its SINR is monotonically increasing with $p_{ik}^{u}$ which results in \eqref{upTrZFundp}. Then it should be shown that at the optimal solution all UTs will have the same SINR, which also determines \eqref{dlTrZFundp}. Now using this fixed value for the SINR and the downlink transmission power constraint, we obtain \eqref{SINRZFundp}. \qed

Remark \ref{Rem-MRTtoZF} described the similarities between the SE expressions with MRT and ZF, and the same pattern appears in the optimal solutions to the MMF problem. As we switch from the MRT to ZF in Theorems \ref{T-MMF-MRT-undp} and \ref{T-MMF-ZF-undp}, the coherent beamforming gain reduces from $N$ to $N-K_{tot}$. Also the interference in the denominator reduces from $\dfrac{ \beta_{ik}  P}{ \gamma_{ik}^{dp*}}$ to $\dfrac{ (\beta_{ik} - \gamma_{ik}^{dp*}) P}{ \gamma_{ik}^{dp*}}$. This is due to the fact that ZF uses the degrees of freedom provided by the large-scale antenna array to cancel the interference toward other UTs at the cost of reducing the desired signal power at each UT. 

\subsection{MMF solution for MRT-mudp}
\begin{corollary}
	\label{C-MMF-MRT-mudp}
	Consider $\mathcal{P}^{\prime}1$ with MRT-mudp, then at the optimal solution all the UTs receive the same SINR and it is equal to \eqref{SINR-MMF-MRT-undp}.
\end{corollary}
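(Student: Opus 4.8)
\emph{Proof proposal.} The plan is to argue that $\mathcal{P}^{\prime}1$ with MRT-mudp is \emph{literally the same} optimization problem as $\mathcal{P}^{\prime}1$ with MRT-undp, so that the statement follows verbatim from Theorem \ref{T-MMF-MRT-undp}. The starting point is Theorem \ref{T-MRT-mudp}, which already established that $\mathrm{SINR}_{ik}^{\mathrm{MRT-mudp}} = N \gamma_{ik}^{dp} p_{ik}^{dl} / (1 + \beta_{ik} P_{dp})$, an expression that coincides term by term with $\mathrm{SINR}_{ik}^{\mathrm{MRT-undp}}$ of Proposition \ref{prop1}. Hence the inner min--min objective of $\mathcal{P}^{\prime}1$ is the same function of the downlink powers $\{p_{jk}^{dl}\}$ and the uplink pilot powers $\{p_{jk}^{u}\}$ in both scenarios, and these are the same optimization variables in both cases.

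Next I would check that the feasible sets agree. Both scenarios employ dedicated pilots, so $\tau_{p}^{dp}$ is fixed, the pilot-power constraint \ref{MMF_dp_SE}-C1 is identical, and the estimation quality $\gamma_{ik}^{dp}$ depends on $p_{jk}^{u}$ in exactly the same way. For the downlink budget, the per-group precoder obeys $\mathbb{E}[\Vert \mathbf{w}_{j}^{\mathrm{MRT-mudp}} \Vert^{2}] = \sum_{k=1}^{K_{j}} p_{jk}^{dl}$, so the total radiated power is $P_{dp} = \sum_{j=1}^{G}\sum_{k=1}^{K_{j}} p_{jk}^{dl}$, exactly as in the unicast case; thus the constraint \eqref{poweryek1_SE}, which as shown must be active at the optimum, is the same. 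Consequently $\mathcal{P}^{\prime}1$ with MRT-mudp and $\mathcal{P}^{\prime}1$ with MRT-undp have identical objective, variables, and constraints, so Theorem \ref{T-MMF-MRT-undp} applies directly: at the optimum every UT attains the common SINR \eqref{SINR-MMF-MRT-undp}, with $p_{jk}^{u*} = p_{jk}^{utot}$ and the downlink powers stated there.

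There is essentially no hard step; the only point deserving care is that the equivalence is \emph{exact}, not asymptotic. This is where I would invoke the identity $\mathbf{W}_{un}\mathbf{x} = \sum_{j}\mathbf{w}_{j}^{\mathrm{MRT-mudp}} s_{j} = \mathbf{W}_{mu}\mathbf{s}$ from the proof of Theorem \ref{T-MRT-mudp}: replacing the $K_{j}$ unicast beams of group $j$ by the single multicast beam $\mathbf{w}_{j}^{\mathrm{MRT-mudp}}$ leaves the transmitted signal, and therefore every received SINR and the total transmitted power, unchanged. Given this, the corollary is immediate and no separate Lagrangian or KKT analysis is needed --- it is a direct transcription of Theorem \ref{T-MMF-MRT-undp}.
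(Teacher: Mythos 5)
Your proposal is correct and follows essentially the same route as the paper: both invoke the equivalence of MRT-mudp and MRT-undp established in Theorem \ref{T-MRT-mudp} (identical SINR expressions, identical variables and power constraints) and then transcribe Theorem \ref{T-MMF-MRT-undp}. Your extra check that the feasible sets coincide via $\mathbb{E}[\Vert \mathbf{w}_{j}^{\mathrm{MRT-mudp}} \Vert^{2}] = \sum_{k} p_{jk}^{dl}$ is a sound, slightly more explicit rendering of what the paper leaves implicit.
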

\begin{proof}
	From Theorem \ref{T-MRT-mudp}, we know MRT-mudp is equivalent to MRT-undp. Hence it provides the same  SINR for each UT. Therefore its optimal solution is the same as Theorem~\ref{T-MMF-MRT-undp}.
\end{proof}

\subsection{MMF solution for ZF-mudp}
\begin{theorem}
	\label{T-MMF-ZF-mudp}
	Consider $\mathcal{P}^{\prime}1$ with ZF-mudp, then at the optimal solution all the UTs receive the same SINR, i.e., $  \Gamma = \mathrm{SINR}_{ik}^{\mathrm{ZF-mudp*}} \; \forall i,k$, and it is the solution of the equation 
	\begin{align}
	\label{sinr_equation}
	P = \sum_{i=1}^{G}	\dfrac{\Gamma \Delta_{i}}{N-\nu_{i} - \Gamma K_{i}} 
	\end{align}
	where $\Delta_{i} = \sum_{k=1}^{K_{i}} \left( \dfrac{1}{\gamma_{ik}^{dp*}} + P \dfrac{\beta_{ik}}{\gamma_{ik}^{dp*}} - P \right) $ with $\gamma_{ik}^{dp*}  =  \dfrac{\tau_{p}^{dp} p_{ik}^{utot} \beta_{ik}^{2} }{1 + \tau_{p}^{dp} p_{ik}^{utot} \beta_{ik}}$ and $\Gamma < \min_{i \in \mathcal{G}} \{ \frac{N-\nu_{i}}{K_{i}} \}$. Also the optimal uplink training and downlink transmission powers of UT $k$ in group $i$ are
	\begin{align}
	p_{ik}^{u*}  =& \; p_{ik}^{utot} 
	\\
	p_{ik}^{dl*} =& \dfrac{\Gamma}{N-\nu_{i}} \left( \frac{1}{\gamma_{ik}^{dp*}} + P_{i}^{dl} + P \dfrac{\beta_{ik}}{\gamma_{ik}^{dp*}} - P \right) 
	\end{align}
	where $P_{i}^{dl} =  \dfrac{\Gamma \Delta_{i}}{N-\nu_{i} - \Gamma K_{i}} $.
\end{theorem}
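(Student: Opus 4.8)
The plan is to follow the two-step template of Theorem~\ref{T-MMF-MRT-undp}---optimize the uplink pilot powers first, then the downlink powers for the resulting frozen $\gamma_{ik}^{dp}$---but to insert an extra nested layer in the downlink step that copes with the intra-group interference term $\gamma_{ik}^{dp}\sum_{m}p_{im}^{dl}$, which has no analogue in the unicast SINR.

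First I would fix all downlink powers and argue that $\mathrm{SINR}_{ik}^{\mathrm{ZF\text{-}mudp}}$ is nondecreasing in $p_{ik}^{u}$. Rewriting the expression of Theorem~\ref{TZFmudp} as
\begin{align*}
\mathrm{SINR}_{ik}^{\mathrm{ZF\text{-}mudp}} = \frac{(N-\nu_i)\,\gamma_{ik}^{dp}\, p_{ik}^{dl}}{1 + \beta_{ik}P + \gamma_{ik}^{dp}\big(\textstyle\sum_{m=1}^{K_i} p_{im}^{dl} - P\big)},
\end{align*}
and using $\sum_{m} p_{im}^{dl}\le P_{dp}=P$, the coefficient multiplying $\gamma_{ik}^{dp}$ in the denominator is nonpositive, so the denominator is nonincreasing and the numerator increasing in $\gamma_{ik}^{dp}$; since $\gamma_{ik}^{dp}$ is increasing in $p_{ik}^{u}$ and, under dedicated pilots, $p_{ik}^{u}$ affects no other user, the optimum sets $p_{ik}^{u*}=p_{ik}^{utot}$ and fixes $\gamma_{ik}^{dp}$ at $\gamma_{ik}^{dp*}$ for the rest of the argument.

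Next I would treat the downlink powers in two nested stages. Introduce the per-group budgets $P_i^{dl}=\sum_{m=1}^{K_i}p_{im}^{dl}$, so $\sum_i P_i^{dl}=P$, and temporarily regard $(P_1^{dl},\dots,P_G^{dl})$ as fixed. Then inside group $i$ the denominator $1+\gamma_{ik}^{dp*}P_i^{dl}+(\beta_{ik}-\gamma_{ik}^{dp*})P$ no longer depends on the individual $p_{ik}^{dl}$, so each $\mathrm{SINR}_{ik}$ is a fixed positive multiple of $p_{ik}^{dl}$; maximizing the minimum of such quantities under $\sum_k p_{ik}^{dl}=P_i^{dl}$ forces all in-group SINRs to a common value $\Gamma_i$, and summing the corresponding $p_{ik}^{dl}$ gives $P_i^{dl}(N-\nu_i-\Gamma_i K_i)=\Gamma_i\Delta_i$, i.e.\ $\Gamma_i=P_i^{dl}(N-\nu_i)/(\Delta_i+P_i^{dl}K_i)$, a strictly increasing bijection between $P_i^{dl}\in[0,\infty)$ and $\Gamma_i\in[0,(N-\nu_i)/K_i)$ (here $\Delta_i>0$ since $\beta_{ik}\ge\gamma_{ik}^{dp*}$). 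For the outer stage I maximize $\min_i\Gamma_i$ over $(P_i^{dl})$ with $\sum_i P_i^{dl}=P$: because each $\Gamma_i$ depends only on its own $P_i^{dl}$ and is strictly increasing, the standard exchange argument (shifting budget away from any group with strictly larger $\Gamma_i$ toward a minimizing one strictly raises the minimum) forces $\Gamma_i=\Gamma$ for all $i$; substituting $P_i^{dl}=\Gamma\Delta_i/(N-\nu_i-\Gamma K_i)$ into $\sum_iP_i^{dl}=P$ yields exactly \eqref{sinr_equation}. Its right-hand side is continuous, vanishes at $\Gamma=0$, is strictly increasing, and diverges as $\Gamma\uparrow\min_i(N-\nu_i)/K_i$, so \eqref{sinr_equation} has a unique root $\Gamma$ in that interval, which also confirms the stated range constraint; back-substituting through the inner stage gives the claimed $p_{ik}^{dl*}$ with $P_i^{dl}=\Gamma\Delta_i/(N-\nu_i-\Gamma K_i)$.

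The hard part will be the bookkeeping of the downlink step: one must check that the in-group equalization is genuinely optimal even though $\gamma_{ik}^{dp*}P_i^{dl}$ appears in the denominators (handled by first parametrizing each subproblem by $P_i^{dl}$ and only then optimizing over the $P_i^{dl}$'s), and then verify monotonicity of $\Gamma_i(P_i^{dl})$ so that both the outer exchange argument and the uniqueness of the root of \eqref{sinr_equation} go through; the uplink step and the final algebraic substitutions are routine.
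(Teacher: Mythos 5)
Your proposal is correct, and it reaches the same optimality conditions as the paper but by a genuinely different route. The paper's Appendix D works directly on the full set of downlink powers: it shows all $K_{tot}$ SINRs must be equal at the optimum via a single perturbation argument (if $\mathrm{SINR}_{ik}>\mathrm{SINR}_{jt}$ for the minimizing $(j,t)$, reduce $p_{ik}^{dl}$ by a small $\delta$, which lowers the interference seen by every other UT), and only afterwards introduces $P_i^{dl}$ as bookkeeping to sum the per-UT powers and obtain \eqref{sinr_equation}. You instead decompose the downlink problem hierarchically: freeze per-group budgets $P_i^{dl}$, observe that given the budgets each group's SINRs are linear in its own $p_{ik}^{dl}$ with a fixed denominator so in-group equalization is immediate, derive the explicit increasing map $\Gamma_i(P_i^{dl})=P_i^{dl}(N-\nu_i)/(\Delta_i+P_i^{dl}K_i)$, and then equalize across groups by an exchange argument on the budgets alone. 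What your decomposition buys is a cleaner treatment of exactly the point that makes ZF-mudp harder than the unicast cases (the intra-group term $\gamma_{ik}^{dp*}P_i^{dl}$ in the denominator), plus existence, uniqueness, and the range condition $\Gamma<\min_i\{(N-\nu_i)/K_i\}$ essentially for free from the monotonicity and divergence of the right-hand side of \eqref{sinr_equation}; you also supply the rewriting $1+\beta_{ik}P+\gamma_{ik}^{dp}\bigl(\sum_m p_{im}^{dl}-P\bigr)$ with $\sum_m p_{im}^{dl}\le P$ to justify monotonicity in $p_{ik}^{u}$, a step the paper only asserts by analogy with Appendix C. The paper's direct argument is shorter and avoids having to verify that the problem really separates across groups once the budgets are fixed (which your write-up correctly notes holds because, with $P_{dp}=P$, each group's SINRs depend only on its own allocation and budget), so both proofs are valid; yours is somewhat longer but more self-contained.
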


\begin{proof}
	The proof is given in Appendix D.
\end{proof}
Note that as the right hand side of \eqref{sinr_equation} is an increasing function of $\Gamma$, its solution can simply be obtained by line search.


\subsection{MMF solution for MRT-mucp}
\begin{theorem}
	\label{T-MMF-MRT-mucp}
	Consider $\mathcal{P}^{\prime}2$ with MRT-mucp, then at the optimal solution all the UTs receive the same SINR and it is equal to  
	\begin{align}
	\label{MRTmucpSINR}
	\Gamma = \dfrac{NP}{\sum_{i=1}^{G} \dfrac{1 + \tau_{p}^{cp} \sum_{m=1}^{K_{i}} p_{im}^{u*} \beta_{im}}{\tau_{p}^{cp} \Upsilon_{i}}}
	\end{align}
	with $\Upsilon_{i} = \min_{t \in \mathcal{K}_{i}} \dfrac{\beta_{it}^{2} p_{it}^{tot}}{1+P \beta_{it}} \; \forall i \in \mathcal{G}$. The optimal uplink training and downlink transmission powers of UT $k$ in group $i$ are
	\begin{align}
	p_{ik}^{u*}  =& \dfrac{1+P \beta_{ik}}{\beta_{ik}^{2}} \; \Upsilon_{i} \quad \quad \forall i \in \mathcal{G}, K \in \mathcal{K}_{i}
	\\
	p_{i}^{dl*} =& \dfrac{\Gamma (1 + \tau_{p}^{cp} \sum_{m=1}^{K_{i}} p_{im}^{u*} \beta_{im})}{\tau_{p}^{cp} N \Upsilon_{i}} \quad \quad \forall j \in \mathcal{G}.
	\end{align}
\end{theorem}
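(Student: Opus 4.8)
The plan is to mirror the structure of the proof of Theorem~\ref{T-MMF-MRT-undp} (Appendix C), adapting it to the co-pilot SINR expression $\mathrm{SINR}_{ik}^{\mathrm{MRT-mucp}} = N \gamma_{ik}^{cp} p_i^{dl} / (1 + \beta_{ik} P_{cp})$. The essential new wrinkle compared to the dedicated-pilot case is that a single downlink power $p_i^{dl}$ and a single composite estimate $\hat{\mathbf g}_j$ must serve all $K_i$ users in a group, so the per-group SINR is governed by the \emph{weakest} user, and the uplink powers of the different users in a group interact through $\gamma_{ik}^{cp}$.

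First I would fix $\tau_p^{cp}$ and analyse monotonicity in the uplink powers. Writing $\gamma_{ik}^{cp} = \tau_p^{cp} p_{ik}^u \beta_{ik}^2 / (1 + \tau_p^{cp}\sum_m p_{im}^u \beta_{im})$, I would show that for fixed $\{p_{im}^u\}_{m\neq k}$ the SINR of user $k$ is increasing in $p_{ik}^u$, while the SINRs of the other users $m\neq k$ in the group are decreasing in $p_{ik}^u$ (since $p_{ik}^u$ enters only their denominators). This tension is resolved by observing that at the optimum the objective is limited by the minimum over $k\in\mathcal K_i$ of $\gamma_{ik}^{cp}$; the quantity that actually matters is $\gamma_{ik}^{cp}/(1+\beta_{ik}P)$ weighted appropriately, and one checks that the min-SINR within group $i$ is maximised by equalising the users, which forces $p_{ik}^u (1+P\beta_{ik})^{-1}\beta_{ik}^2$ to be constant across $k\in\mathcal K_i$ and equal to its largest feasible common value. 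Combined with constraint (\ref{MMF_cp_SE}-C1), $p_{ik}^u \le p_{ik}^{utot}$, this yields $\Upsilon_i = \min_{t\in\mathcal K_i} \beta_{it}^2 p_{it}^{tot}/(1+P\beta_{it})$ and hence $p_{ik}^{u*} = (1+P\beta_{ik})\Upsilon_i/\beta_{ik}^2$; substituting back gives a common per-group effective gain so that $\gamma_{ik}^{cp*}/(1+\beta_{ik}P) = \tau_p^{cp}\Upsilon_i/(1 + \tau_p^{cp}\sum_m p_{im}^{u*}\beta_{im})$, independent of $k$.

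Next, with the uplink powers pinned down, I would argue that at the optimum all groups receive the same SINR $\Gamma$. If some group had a strictly larger SINR than the minimum, one could shave a little downlink power from that group and redistribute it to raise the minimum, contradicting optimality; this is the same exchange argument used for $\mathcal P 2$ with constraint (\ref{MMF_cp_SE}-C2) active, $P_{cp}=P$. Setting $\mathrm{SINR}_{ik}^{\mathrm{MRT-mucp}} = \Gamma$ and solving for $p_i^{dl}$ gives $p_i^{dl*} = \Gamma (1 + \tau_p^{cp}\sum_m p_{im}^{u*}\beta_{im})/(\tau_p^{cp} N \Upsilon_i)$, exactly the claimed expression. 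Summing over $i$ and imposing $\sum_i p_i^{dl*} = P$ then yields the closed form (\ref{MRTmucpSINR}) for $\Gamma$.

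The main obstacle I anticipate is the coupled-monotonicity step: because the composite estimate $\hat{\mathbf g}_j$ is shared, increasing one user's pilot power helps that user but contaminates the estimate seen by its group-mates, so the "make all uplink powers maximal" shortcut valid in the dedicated-pilot proofs does not apply verbatim. The argument must instead show that the \emph{group-level} figure of merit $\Upsilon_i$ is maximised exactly at the stated $p_{ik}^{u*}$, i.e. that equalising $\beta_{ik}^2 p_{ik}^u/(1+P\beta_{ik})$ across the group and pushing the binding user to its cap is optimal; a careful look reveals that $\gamma_{ik}^{cp}/(1+\beta_{ik}P)$, after eliminating the common denominator $1+\tau_p^{cp}\sum_m p_{im}^u\beta_{im}$, is what enters the final sum, and that both numerator and the shared denominator are controlled by the vector $\{p_{im}^u\}$ in a way that makes the stated choice optimal. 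Everything else is the routine equalise-and-substitute bookkeeping already rehearsed in Theorems~\ref{T-MMF-MRT-undp}--\ref{T-MMF-ZF-mudp}.
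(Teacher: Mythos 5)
Your proposal is correct and follows essentially the same route as the paper's Appendix~E: within-group equalization of $p_{ik}^{u}\beta_{ik}^{2}/(1+\beta_{ik}P)$ pushed up to the binding cap (the ``check'' your sketch calls for is exactly the paper's argument of slightly reducing the over-provisioned user's pilot power, which shrinks the shared contamination denominator and raises the group minimum), cross-group SINR equalization via a downlink power exchange with $P_{cp}=P$ active, and then the equalize-and-substitute bookkeeping giving $p_{i}^{dl*}$ and the closed form for $\Gamma$. The only difference is the order of the two equalization steps (you fix the uplink powers before the cross-group argument, the paper does the reverse), which is immaterial because the uplink optimization is separable per group and independent of the downlink power split.
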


\begin{proof}
	The proof is given in Appendix E.
\end{proof}


\subsection{MMF solution for ZF-mucp}
\begin{theorem}
	\label{T-MMF-ZF-mucp}
	Consider $\mathcal{P}^{\prime}2$ with ZF-mucp, then at the optimal solution all the UTs receive the same SINR and it is equal to 
	\begin{align}
	\label{ZFmucpSINR}
	\Gamma &= \dfrac{P (N-G)}{\sum_{j=1}^{G} \frac{1}{\Delta_{j}}}
	\end{align}
	with $\Delta_{j} = \dfrac{\tau_{p}^{cp} \Upsilon_{j}}{1 + \tau_{p}^{cp}(E_{j} - P \Upsilon_{j})}$, $E_{j} = K_{j} \Upsilon_{j} P + \Upsilon_{j} \sum_{m=1}^{K_{j}} \dfrac{1}{\beta_{jm}}$, and $\Upsilon_{j} = \min_{k \in \mathcal{K}_{j}} \dfrac{p_{jk}^{utot} \beta_{jk}^{2}}{1+\beta_{jk}P} \; \forall j \in \mathcal{G}$. The optimal uplink training and downlink transmission powers of UT $k$ in group $i$ are
	\begin{align}
	p_{ik}^{u*} =&  \dfrac{1+\beta_{ik}P}{\beta_{ik}^{2}} \Upsilon_{i} \quad \quad \forall k \in \mathcal{K}_{i}, \forall i \in \mathcal{G}
	\\
	\label{ZFmucpDLpower}
	p_{i}^{dl*} =&  \left( \sum_{j=1}^{G} \frac{\Delta_{i}}{ \Delta_{j}} \right)^{\!\!\!-1} \!\! P  \quad \forall j \in \mathcal{G} .
	\end{align}
\end{theorem}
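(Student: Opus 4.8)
The plan is to follow the same two-step structure that worked for the earlier MMF theorems, specializing the generic SINR expression of Theorem \ref{Theorem3}, namely $\mathrm{SINR}_{ik}^{\mathrm{ZF-mucp}} = \frac{(N-G)\gamma_{ik}^{cp} p_{i}^{dl}}{1+(\beta_{ik}-\gamma_{ik}^{cp})P}$, after substituting $P_{cp}=P$ (which we already argued must hold with equality). First I would fix the pilot length $\tau_p^{cp}$ and treat $\mathcal{P}'2$ with this SINR. The key observation is that the uplink pilot powers $p_{jk}^u$ affect the SINR \emph{only} through $\gamma_{ik}^{cp} = \frac{\tau_p^{cp} p_{ik}^u \beta_{ik}^2}{1+\tau_p^{cp}\sum_{m=1}^{K_i} p_{im}^u \beta_{im}}$, and since the denominator $1+(\beta_{ik}-\gamma_{ik}^{cp})P$ is decreasing in $\gamma_{ik}^{cp}$ while the numerator is increasing, $\mathrm{SINR}_{ik}^{\mathrm{ZF-mucp}}$ is monotonically increasing in $\gamma_{ik}^{cp}$. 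However — and this is the subtlety specific to co-pilot — increasing $p_{ik}^u$ raises $\gamma_{ik}^{cp}$ but \emph{lowers} $\gamma_{im}^{cp}$ for the other users $m\neq k$ in the same group through the shared denominator. So one cannot simply set all uplink powers to their maxima; instead there is a genuine trade-off within each group.

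The core of the argument is to resolve this intra-group trade-off. I would argue that at the optimum all users in group $i$ must end up with equal SINR (otherwise the worst user's SINR could be improved), and since $p_i^{dl}$ is common to the whole group, equal SINR within the group forces $\gamma_{ik}^{cp}$ to be equal across $k\in\mathcal{K}_i$ — call this common value, up to the relevant normalization, $\Upsilon_i$. Setting $\gamma_{ik}^{cp}$ equal for all $k$ and writing $D_i := 1+\tau_p^{cp}\sum_m p_{im}^u\beta_{im}$, the relation $\gamma_{ik}^{cp}D_i = \tau_p^{cp}p_{ik}^u\beta_{ik}^2$ lets me solve for each $p_{ik}^u$ in terms of the common target and $D_i$; substituting back into the definition of $D_i$ gives a consistency equation whose feasibility, subject to $p_{ik}^u\le p_{ik}^{utot}$, is maximized by choosing the target as large as possible, i.e. $p_{ik}^u$ hits its cap for the bottleneck user $\arg\min_k \frac{p_{jk}^{utot}\beta_{jk}^2}{1+\beta_{jk}P}$. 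This is exactly how $\Upsilon_i = \min_{k\in\mathcal{K}_i}\frac{p_{ik}^{utot}\beta_{ik}^2}{1+\beta_{ik}P}$ arises, and it pins down $p_{ik}^{u*} = \frac{1+\beta_{ik}P}{\beta_{ik}^2}\Upsilon_i$ and hence the per-group effective channel quality $\gamma_{ik}^{cp*}$, from which the quantities $E_j$ and $\Delta_j$ in the statement are just bookkeeping for the resulting SINR-per-unit-power.

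With the uplink powers determined, the problem reduces to the standard downlink max-min power allocation across $G$ groups: we have $\mathrm{SINR}_i = \Delta_i\, p_i^{dl}$ for a known constant $\Delta_i$ per group, subject to $\sum_j p_j^{dl}=P$. Equalizing all SINRs to a common $\Gamma$ gives $p_i^{dl} = \Gamma/\Delta_i$, and the power constraint yields $\Gamma\sum_j 1/\Delta_j = P(N-G)$ after carrying the $(N-G)$ gain through, which is \eqref{ZFmucpSINR}; dividing $p_i^{dl}=\Gamma/\Delta_i$ by $\sum_j p_j^{dl}=P$ gives the normalized form \eqref{ZFmucpDLpower}. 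Finally, the outer optimization over the discrete set $\tau_p^{cp}\in\{G,\dots,T\}$ is handled by the closed form just as in the other scenarios. I expect the main obstacle to be the rigorous justification of the intra-group step — showing that equal $\gamma_{ik}^{cp}$ across the group is both necessary at optimality and that pushing the bottleneck user's pilot power to its cap is optimal — since this is where the shared-pilot coupling makes the co-pilot case genuinely different from the dedicated-pilot theorems; everything downstream is the familiar equal-SINR water-filling-type argument.
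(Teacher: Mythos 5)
Your overall route is the same as the paper's: equalize SINRs across groups and within each group via perturbation arguments, push the bottleneck user's pilot power to its cap to fix $\Upsilon_i$, and then solve the resulting per-group linear downlink max-min under $\sum_j p_j^{dl}=P$. However, your key intra-group step is mis-stated in a way that matters: equal SINR within group $i$ does \emph{not} force the $\gamma_{ik}^{cp}$ to be equal across $k$. Since $\mathrm{SINR}_{ik}^{\mathrm{ZF-mucp}}=(N-G)\gamma_{ik}^{cp}p_i^{dl}\big/\bigl(1+(\beta_{ik}-\gamma_{ik}^{cp})P\bigr)$ and the interference term depends on $\beta_{ik}$, equal SINR means $\gamma_{ik}^{cp}/\bigl(1+(\beta_{ik}-\gamma_{ik}^{cp})P\bigr)$ is constant in $k$; cross-multiplying cancels the bilinear terms and gives $\gamma_{ik}^{cp}\propto(1+\beta_{ik}P)$, i.e. $p_{ik}^{u}\beta_{ik}^{2}/(1+\beta_{ik}P)$ constant within the group, which is exactly the paper's relation \eqref{Upsiloneq}. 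If you literally ``set $\gamma_{ik}^{cp}$ equal for all $k$'' and solve $\gamma_{ik}^{cp}D_i=\tau_p^{cp}p_{ik}^u\beta_{ik}^2$, you get $p_{ik}^u\propto\beta_{ik}^{-2}$, the bottleneck user becomes $\argmin_{k}\,p_{ik}^{utot}\beta_{ik}^{2}$, and you would arrive at a different (wrong) $\Upsilon_i$, hence wrong $E_j$, $\Delta_j$ and $\Gamma$. Your final formulas are the correct ones, so the derivation as written is internally inconsistent; the phrase ``up to the relevant normalization'' is precisely where the $(1+\beta_{ik}P)$ weighting must be made explicit.

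For the two points you flag as the main obstacle, the paper resolves them with explicit perturbations, and you should do the same. (i) Necessity of equal SINR within a group: if $\mathrm{SINR}_{ik}^{\mathrm{ZF-mucp}}>\mathrm{SINR}_{it}^{\mathrm{ZF-mucp}}$, reduce $p_{ik}^{u}$ by a small $\delta>0$; through the shared denominator this raises $\gamma_{it}^{cp}$ (and every other user's $\gamma$ in the group) while the common $p_i^{dl}$ and all other groups are untouched, so the group minimum strictly improves, contradicting optimality. (ii) Optimality of the cap: once the within-group powers satisfy $p_{ik}^{u}=(1+\beta_{ik}P)\Upsilon_i/\beta_{ik}^{2}$, scaling them all up by a common factor increases every $\gamma_{ik}^{cp}$ (numerator scales linearly, denominator sublinearly), so $\Upsilon_i$ should be increased until the bottleneck user $\argmin_{k\in\mathcal{K}_i} p_{ik}^{utot}\beta_{ik}^{2}/(1+\beta_{ik}P)$ hits its constraint, which is the stated $p_{ik}^{u*}$. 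With these two steps made precise, the remainder of your argument (per-group $\Gamma=(N-G)\Delta_i p_i^{dl}$, equalize across groups, sum to $P$) coincides with the paper and yields \eqref{ZFmucpSINR} and \eqref{ZFmucpDLpower}.
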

\begin{proof}
	The proof is given in Appendix F.
\end{proof}

The achieved results (Theorems \ref{T-MMF-MRT-undp} to \ref{T-MMF-ZF-mucp} and Corollary \ref{C-MMF-MRT-mudp}), determine the optimal value of the SINR, the uplink training powers, and the downlink transmission powers in closed-form, for any given pilot length. These closed-form results enable us to find the optimal value of SE by simply searching over $\tau_{p}^{dp} \in \{K_{tot},\ldots,T\}$ or $\tau_{p}^{dp} \in \{G, \ldots, T\}$ and find the pilot length that provides the highest SE.

\section{Numerical Analysis and Further Discussions}
In this section, we use the results of Section IV to perform a numerical analysis and propose a guideline for multicasting design in massive MIMO systems. In our simulations we consider a system with $G$ multicasting groups where each group has $K$ UTs, i.e., $K_{i} = K \; \forall i \in \mathcal{G}$. The cell radius is considered to be $500$ meters and the UTs are randomly and uniformly distributed in the cell excluding an inner circle of radius $35$ meters. The large-scale fading parameters are modeled as $\beta_{ik} = \bar{d}/ x_{ik}^{\nu}$ where $\nu=3.76$ is the path-loss exponent and the constant $\bar{d} = 10^{-3.53}$ regulates the channel attenuation at $35$ meters \cite{3GPPmodel}. Also $ x_{ik}$ is the distance between UT $k$ in group $i$ and the BS in meters. At a carrier frequency of $2$ GHz, the transmission bandwidth (BW) is assumed to be $20$ MHz, the coherence bandwidth and coherence time are considered to be $300$ kHz and $2.5$ ms, which results in a coherence interval of length $750$ symbols for a vehicular system with speed of $108$ kilometers per hour \cite{marzetta2016fundamentals}. The noise power spectral density is considered to be $-174$ dBm/Hz.

Fig. \ref{Fig2} studies the effect of the system parameters, i.e., $G$, $K$, $N$, $p_{jk}^{utot}$, and $P$, on the optimal SEs that can be obtained for the six scenarios depicted in Fig. \ref{figint}. Figs. \ref{a}, \ref{c}, and \ref{e} represent the high SNR regime, where for the cell-edge, the training SNR is $-5.8$ dB (equivalent to $p_{jk}^{utot}=1$ Watt over the BW) and the downlink SNR is $10$ dB (equivalent to $P=40$ Watt over the BW). Also Figs. \ref{b}, \ref{d}, and \ref{f} are representing the low SNR regime, where for the cell-edge, the training SNR is $-15.8$ dB (equivalent to $p_{jk}^{utot}=0.1$ Watt over the BW) and the downlink SNR is $-5.8 $dB (equivalent to $P=1$ Watt over the BW).

\begin{figure}[]
	\centering
	\begin{subfigure}[b]{0.48\linewidth}
		\centering
		\includegraphics[width=1\columnwidth, trim={3.7cm 8.25cm 4.2cm 9cm},clip]{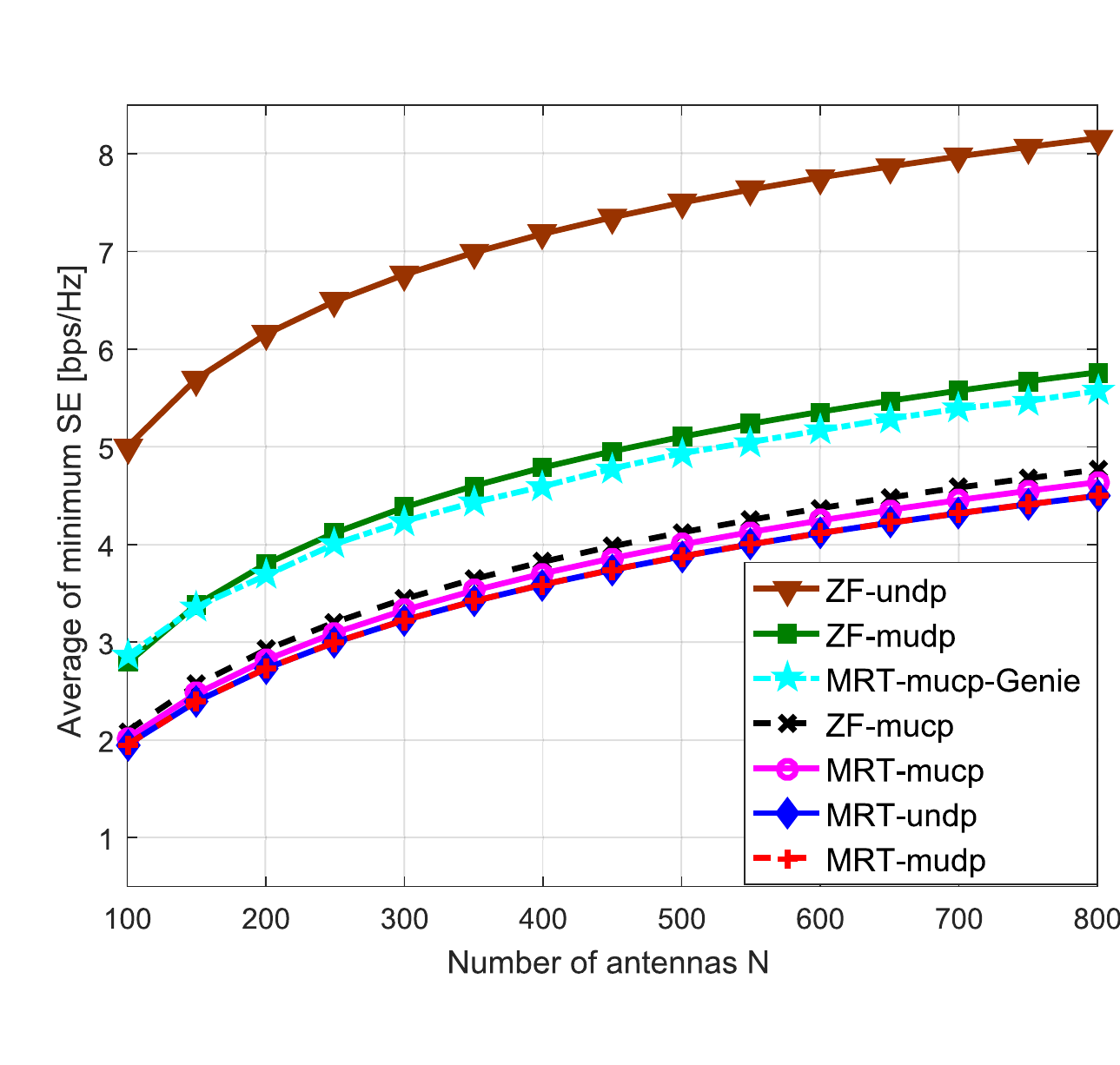}
		\caption{$G\!=\!3$, $K\!=\!10$, $P\!=\!40$, and $p^{utot}\!=\!1$ Watt.}
		\label{a}
	\end{subfigure}%
	~
	\begin{subfigure}[b]{0.48\linewidth}
		\centering
		\includegraphics[width=1\columnwidth, trim={4cm 8.25cm 4cm 9cm},clip]{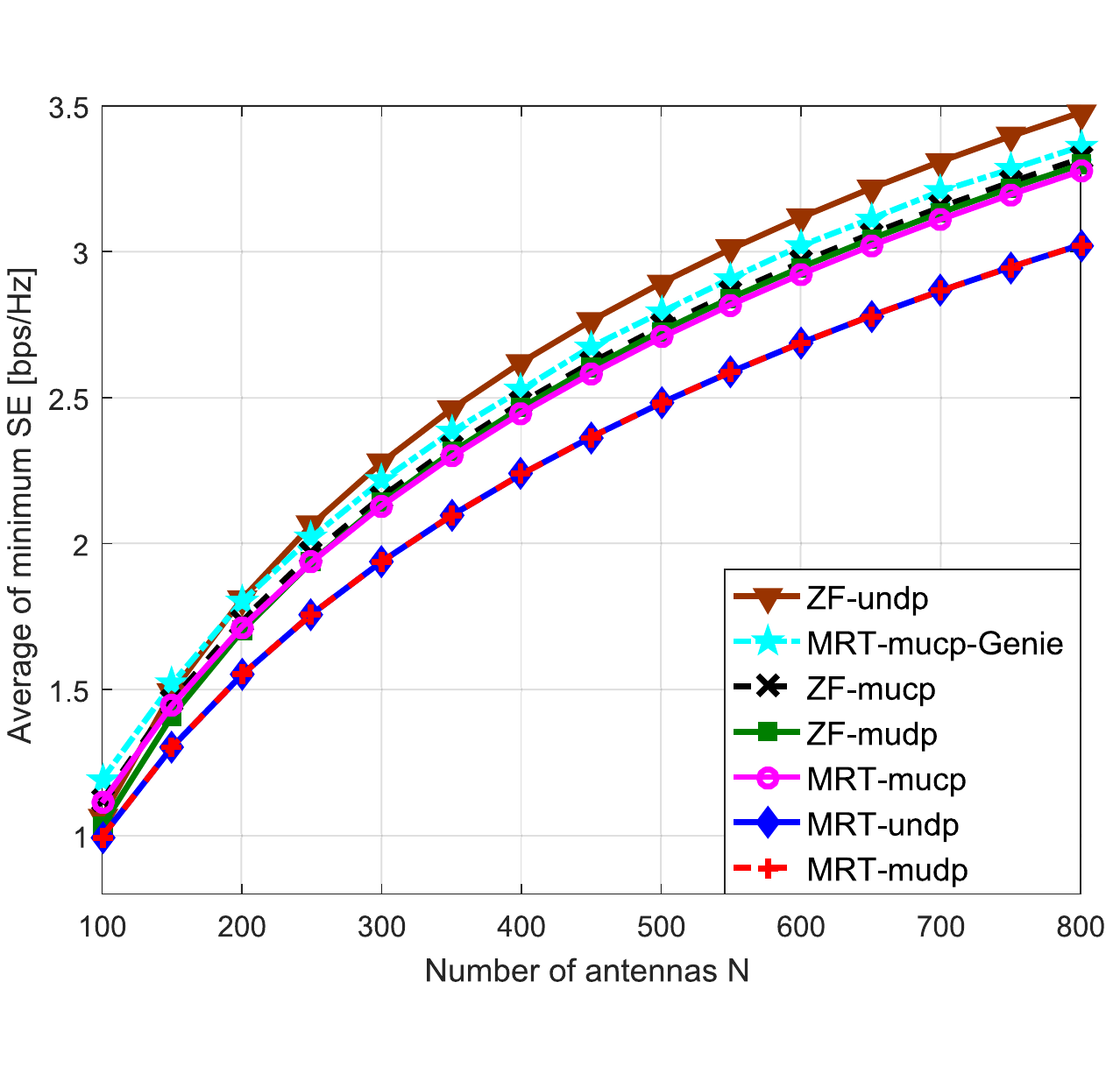}
		\caption{$G\!=\!3$, $K\!=\!10$, $P\!=\!1$, and $p^{utot}\!=\!0.1$ Watt.}
		\label{b}
	\end{subfigure}
	~
	\begin{subfigure}{0.48\linewidth}
		\centering
		\includegraphics[width=1\columnwidth, trim={3.7cm 8.25cm 4.2cm 9cm},clip]{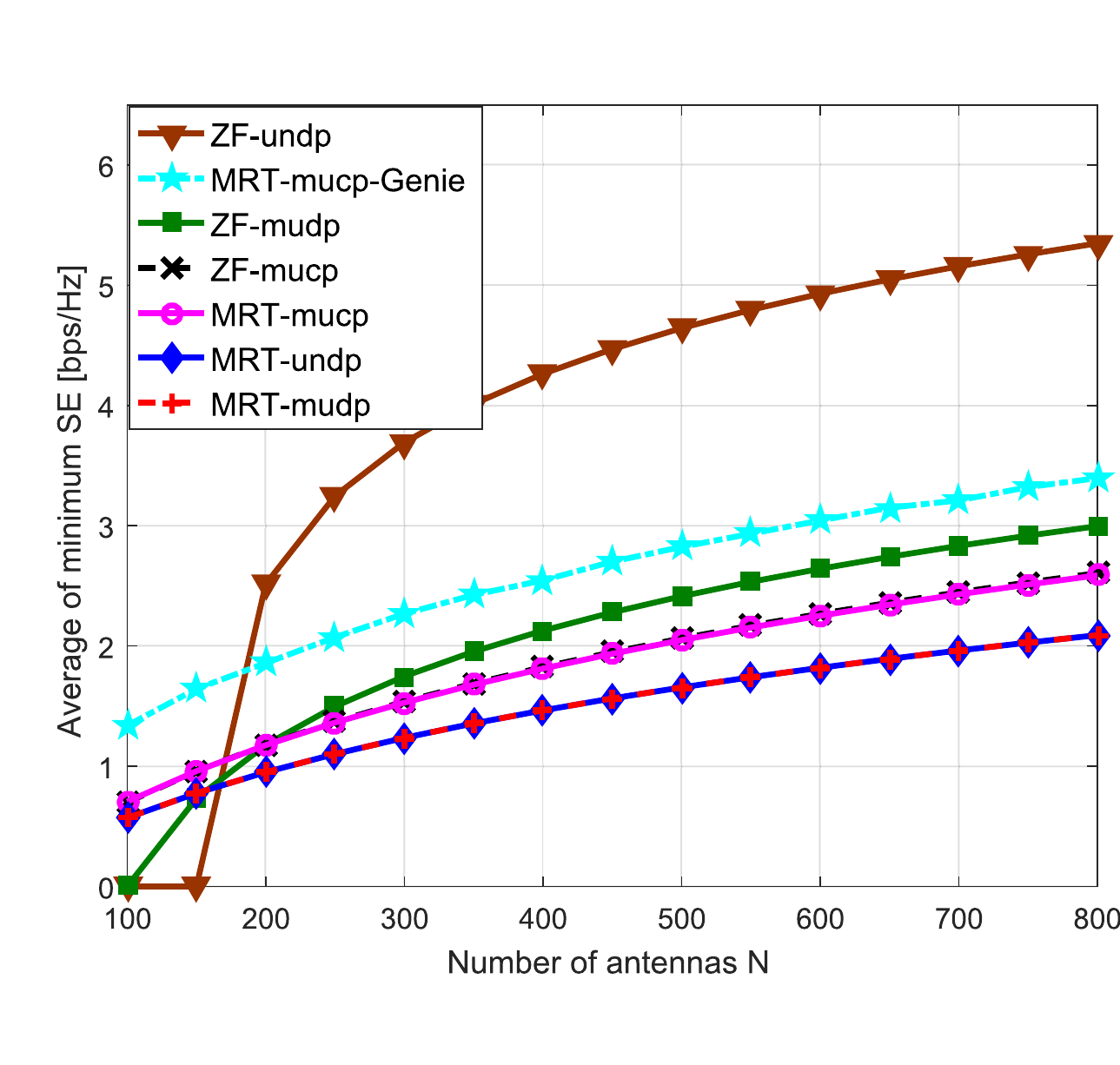}
		\caption{$G\!=\!3$, $K\!=\!50$, $P\!=\!40$, and $p^{utot}\!=\!1$ Watt.}
		\label{c}
	\end{subfigure}
	~
	\begin{subfigure}{0.48\linewidth}
		\centering
		\includegraphics[width=1\columnwidth,trim={4cm 8.25cm 4cm 9cm},clip]{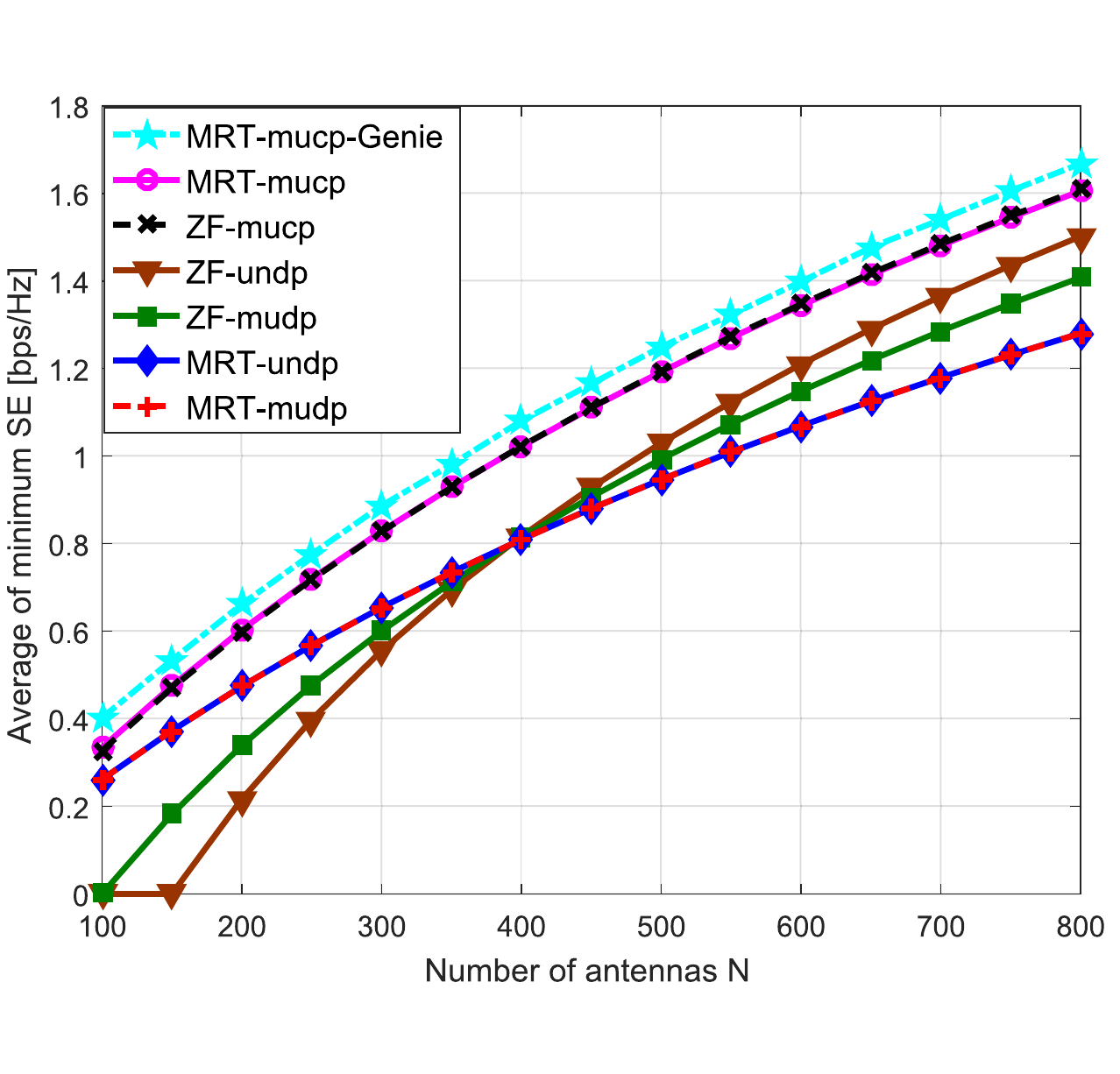}
		\caption{$G\!=\!3$, $K\!=\!50$, $P\!=\!1$, and $p^{utot}\!=\!0.1$ Watt.}
		\label{d}
	\end{subfigure}
	~
	\begin{subfigure}{0.48\linewidth}
		\centering
		\includegraphics[width=1\columnwidth, trim={3.7cm 8.25cm 4.2cm 9cm},clip]{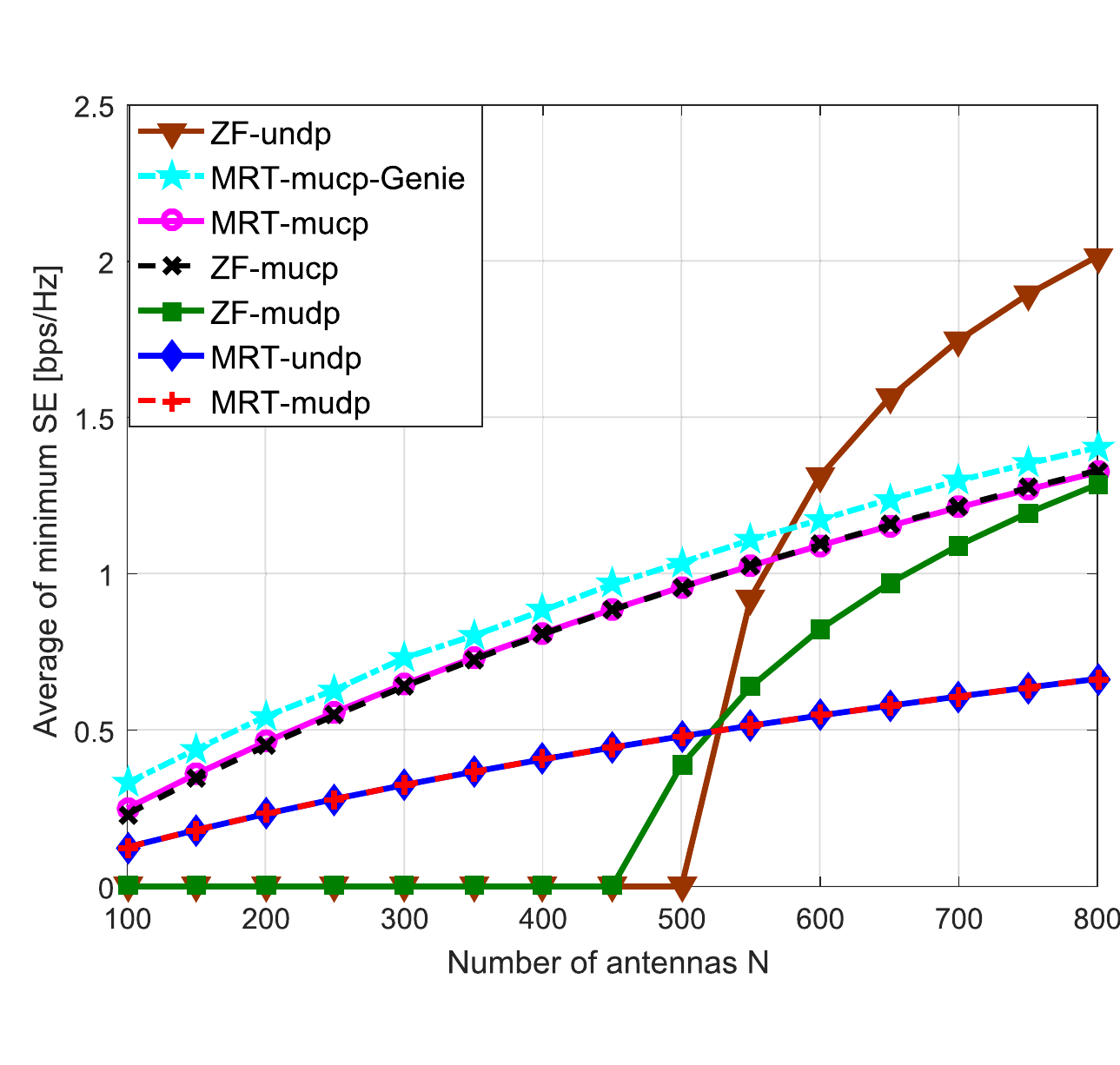}
		\caption{$G\!=\!10$, $K\!=\!50$, $P\!=\!40$, and $p^{utot}\!=\!1$ Watt.}
		\label{e}
	\end{subfigure}
	~
	\begin{subfigure}{0.48\linewidth}
		\centering
		\includegraphics[width=1\columnwidth,trim={4cm 8.25cm 4cm 8.9cm},clip]{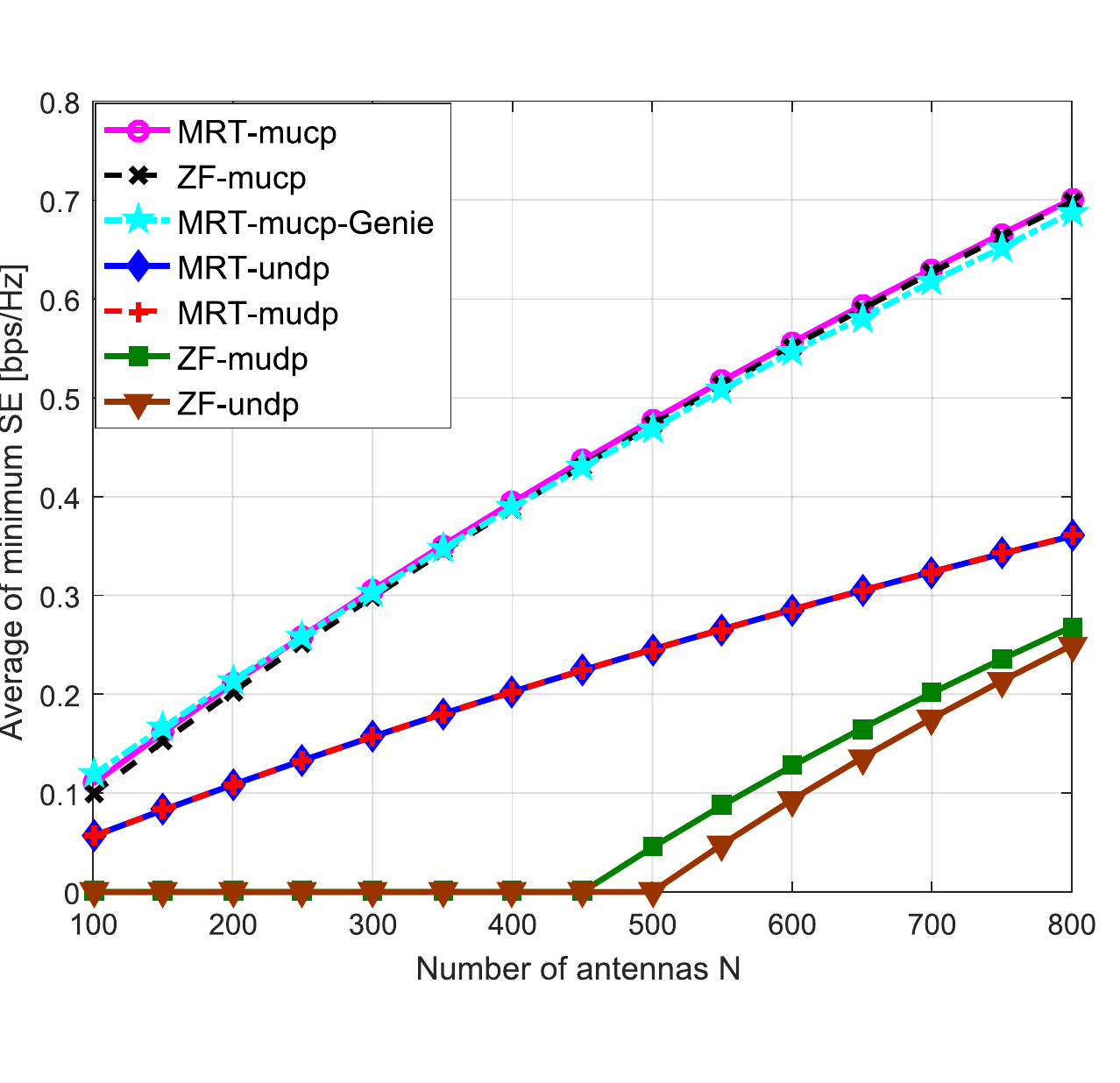}
		\caption{$G\!=\!10$, $K\!=\!50$, $P\!=\!1$, and $p^{utot}\!=\!0.1$ Watt.}
		\label{f}
	\end{subfigure}
	\caption{SE versus $N$ for different system setups.}
	\label{Fig2}
\end{figure}

From Fig. \ref{Fig2} we make the following observations:
\begin{itemize}
	\item The dedicated pilot assignment is more vulnerable to SNR reduction than the co-pilot assignment, comparing the two SNR regimes. For example, consider $N=600$, then the average reduction in SE of ZF-undp comparing Figs. \ref{a}, \ref{c}, \ref{e} respectively with Figs. \ref{b}, \ref{d}, \ref{f} is $6.85$ times while with MRT-mucp and ZF-mucp it is $1.69$ times. This is because the emphasis in dedicated pilot assignment is on achieving good channel estimates, while the co-pilot assignment is focusing on saving time-frequency resources. Hence in the low SNR regime as long as $K_{tot}$ is large enough, e.g. $K_{tot} \gtrapprox 0.2 N$ , MRT-mucp and ZF-mucp provide better performance than other schemes.
	
	\item  In the high SNR regime, ZF-undp significantly outperforms the co-pilot approaches as soon as $N$ becomes slightly bigger than $K_{tot}$ ($N \gtrapprox 1.15 K_{tot}$), as it can be verified from Figs. \ref{a}, \ref{c}, and \ref{e}. The reason is twofold. First, with dedicated pilot assignment a pilot contamination free channel estimation is achieved. Second, in high SNR regime $\tau_{p}^{dp}$ is close to $K_{tot}$, and as $N \gtrapprox 1.15 K_{tot}$ there are enough time-frequency resources for downlink transmission. While for co-pilot assignment the channel estimates are highly contaminated due to the shared pilots.
	
	\item It is plausible that MRT-mucp can provide a better SE than ZF-undp if there is downlink pilot transmission, as downlink pilot transmission can be done efficiently just by employing $G$ symbols of the coherence interval for downlink training \cite{NoDownlinkPilot}. Therefore in Fig. \ref{Fig2} we also have presented the minimum SE of MRT-mucp with genie UTs, i.e., MRT-mucp-Genie, where we assume the UTs \textit{perfectly} estimate their channels from $G$ downlink training symbols. Even in this case, in the high SNR regime, ZF-undp significantly outperforms the MRT-mucp with genie UTs, as soon as $N$ becomes slightly bigger than $K_{tot}$, e.g., $N \gtrapprox 1.2 K_{tot}$, see Figs. \ref{a}, \ref{c}, and \ref{e}.

	\item The SE of the co-pilot assignment approaches is more robust to adding more UTs to the system than the SE of the dedicated pilot assignment approaches. For example, consider $N=700$ and compare the SE of ZF-undp and MRT-mucp in Figs. \ref{c} and \ref{d} (where $K_{tot} = 150$) respectively with Figs. \ref{e} and \ref{f} (where $K_{tot} = 500$). For ZF-undp the SE reduces by a factor of $2.95$ (comparing Fig. \ref{c} with Fig. \ref{e}) and $7.77$ (comparing Fig. \ref{d} with Fig. \ref{f}) while for MRT-mucp it reduces by a factor of $2$ (comparing Fig. \ref{c} with Fig. \ref{e}) and $2.36$ (comparing Fig. \ref{d} with Fig. \ref{f}). This is because adding more UTs increases the pilot overhead in dedicated pilot assignment approaches while it has a slight effect for co-pilot approaches. Hence co-pilot approaches are more suitable for applications like DVB-H or mobile TV over wide areas with many users \cite{DVB-GFaria,DVB-Elhajjar}.
	
	\item As we increase $K_{tot}$ by adding more multicasting groups, e.g., in applications with large number of multicasting UTs such as DVB-H \cite{DVB-GFaria}, the downlink training becomes less important and can be neglected, e.g., compare Figs \ref{a}, \ref{c}, and \ref{e} or Figs \ref{b}, \ref{d}, and \ref{f}. This is because adding more groups requires more time-frequency resources for downlink training.
	
	\item MRT-mucp nearly provides the same SE as ZF-mucp, e.g. see Figs. \ref{c}, \ref{d}, \ref{e}. This is because the deliberate pilot contamination that was enforced to the precoder structure, \eqref{ZFMUCP}, prevents the ZF-based pecoder from suppressing the interference efficiently. Therefore due to the higher complexity of ZF, if the co-pilot strategy is employed, it is beneficial to use MRT-mucp rather than ZF-mucp.
	
	\item MRT-mucp always outperform MRT-undp and MRT-mudp, e.g., see Figs \ref{e} and \ref{f}. Hence if MRT is employed for multicasting, it is better to use the MRT-mucp scheme.
	
	\item In all of the considered setups in Fig. \ref{Fig2}, the maximum performance is either achieved by ZF-undp or MRT-mucp. Hence a multicasting system need to support these two transmission modes and switch between them depending on the system parameters.
	
	\item As detailed in Remark \ref{remZFmudp}, ZF-mudp is the generalized version of the precoder proposed in \cite{MeysamMultiComplexity} and it outperforms the SDR-based precoding schemes \cite{karipidis2008quality}. Also ZF-mudp is always outperformed by either MRT-mucp or ZF-undp. Therefore in a massive MIMO system that accounts for CSI acquisition, a system with hybrid transmission that switches between MRT-mucp and ZF-undp outperforms SDR-based approaches \cite{karipidis2008quality,MeysamMultiComplexity}. 
\end{itemize}

The aforementioned observations were achieved either at high or low SNR regime. Fig. \ref{SNR} verifies them for a wide range of SNR. Considering $N=300$, $G=4$, $K=50$, Fig \ref{SNRa} presents the SE of the proposed scheme for a fixed cell edge training SNR of $-5.8$ dB, while the cell edge downlink SNR is changing from $-20$ dB to $20$ dB. Fig. \ref{SNRb} presents the SE for a fixed cell edge downlink SNR of $10$ dB while the cell edge training SNR is changing from $-30$ dB to $5$ dB. Note that the same observation holds true, e.g., 1) MRT-mucp and ZF-mucp have the same performance; 2) The optimal performance is achieved by switching between MRT-mucp and ZF-undp; 3) at low SNR the co-pilot approaches perform better than the dedicated approaches, and the opposite holds for high SNR; and 4)  MRT-mucp always outperform MRT-undp and MRT-mudp. 

\begin{figure}[]
	\centering
	\begin{subfigure}[b]{0.48\linewidth}
		\centering
		\includegraphics[width=1\columnwidth, trim={4.1cm 8.4cm 4.4cm 9cm},clip]{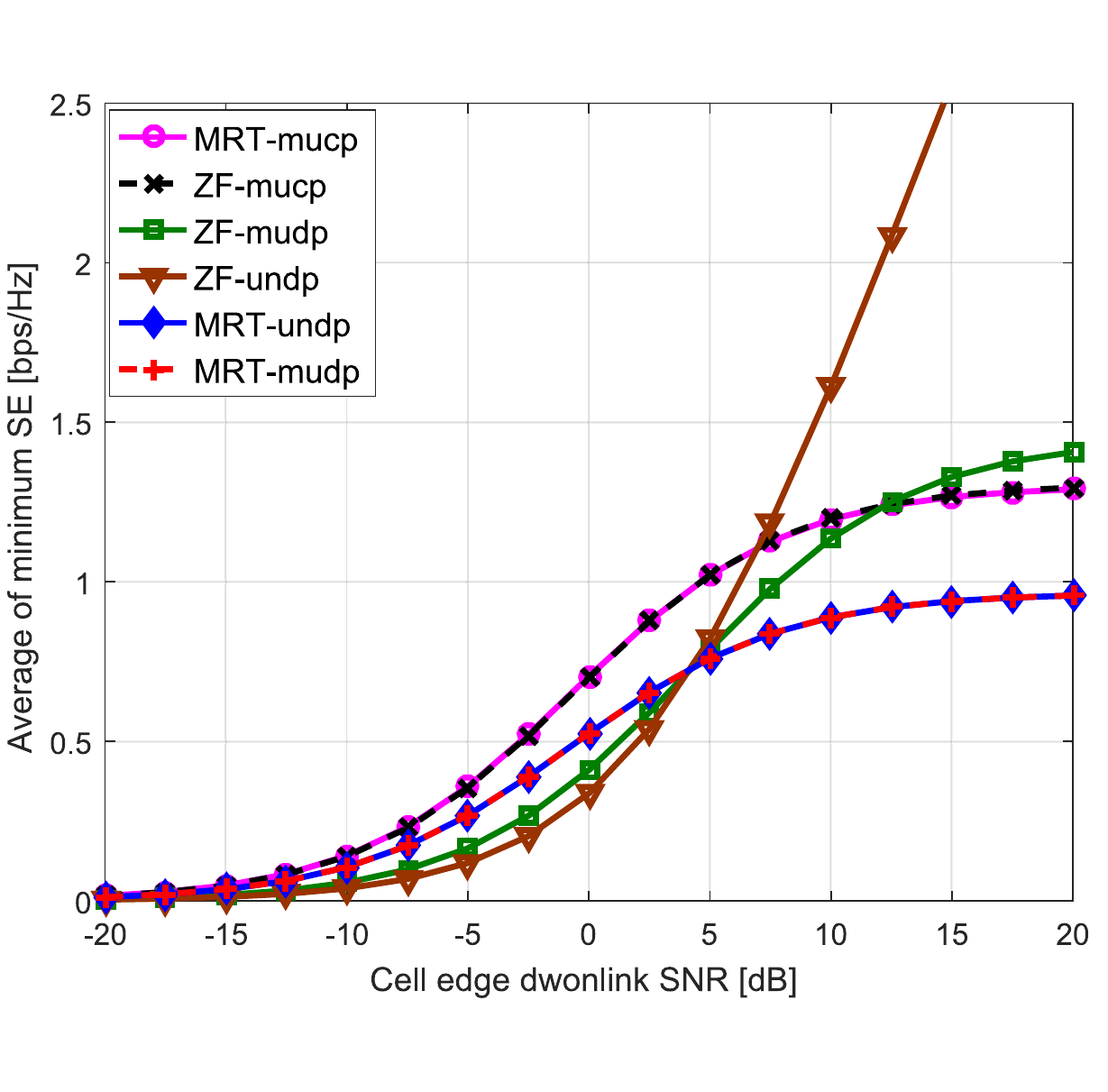}
		\caption{SE vs Downlink SNR.}
		\label{SNRa}
	\end{subfigure}%
	~
	\begin{subfigure}[b]{0.48\linewidth}
		\centering
		\includegraphics[width=1\columnwidth, trim={4cm 8.4cm 4.4cm 9.1cm},clip]{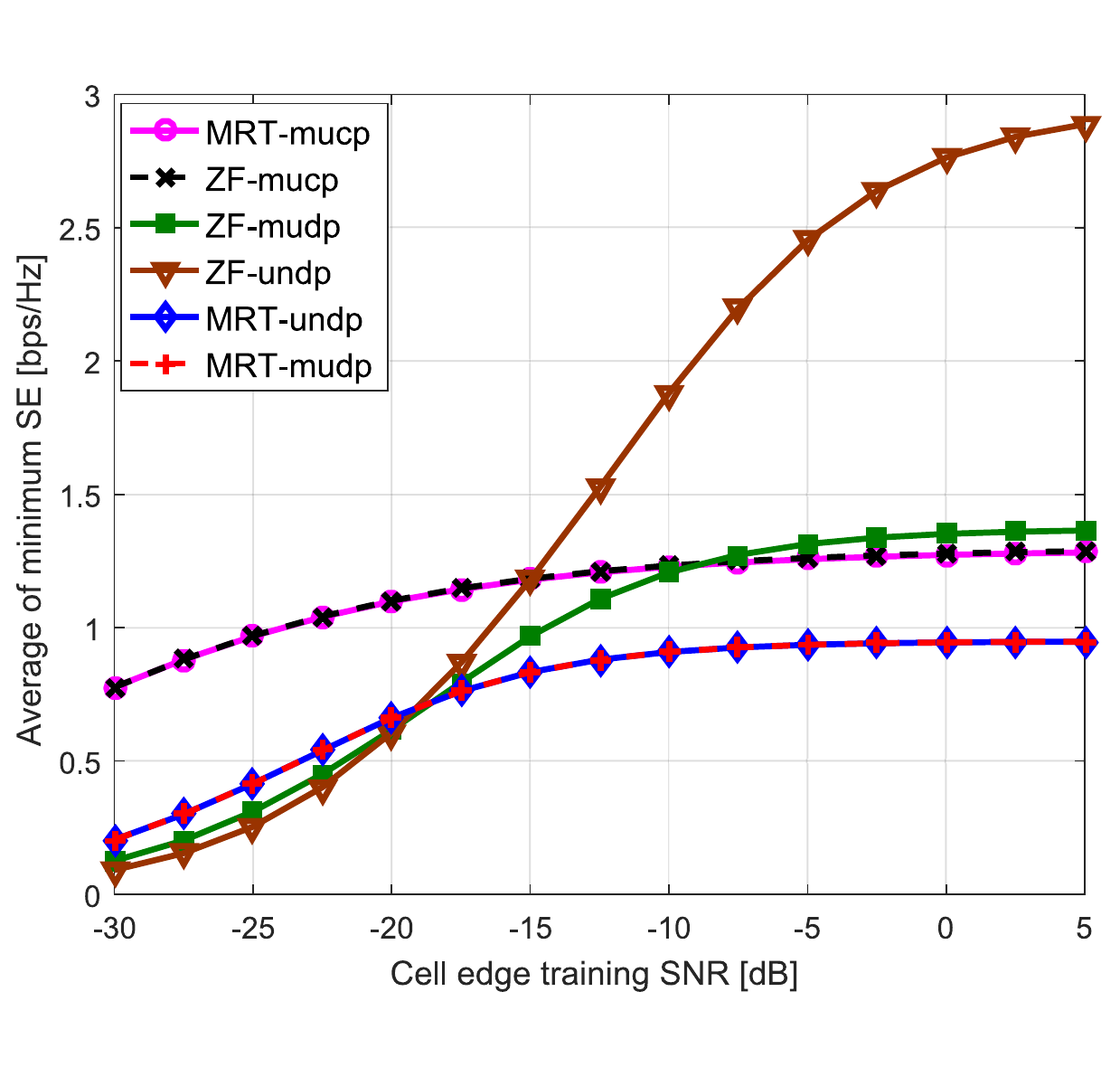}
		\caption{SE vs Training SNR.}
		\label{SNRb}
	\end{subfigure}
	\caption{SE versus SNR.}
	\label{SNR}
\end{figure}

As some of the state of the art multicast standards and applications, e.g. DVB-H and mobile TV, employ omnicast transmission \cite{DVB-GFaria,DVB-Elhajjar}, it is interesting to compare the performance of the proposed multicast schemes with omnicast transmission. Therefore in Fig. \ref{figOmni} we consider a system with $P=40$ Watt, $p^{uot}_{jk}=1$ Watt, and $G$ multicasting groups where $G$ is changing from $1$ to $30$ with $K$ UTs per group. It presents the minimum SE versus the number of multicasting groups for the proposed multicasting schemes and the omnicast transmission. For omnicast transmission we assume the channels are perfectly known at the UTs, and the minimum SE is computed as follows
\begin{align}
\label{OmniEq}
\mathrm{SE}_{\mathrm{Omnicast}} = \mathbb{E} \left[ \min_{\{ j \},\{ k \}} \mathbb{E} \left[\dfrac{1}{G} \log_{2} \left(1 + \frac{P \Vert \mathbf{h}_{jk} \Vert^{2} }{\sigma^{2}}\right) \rvert \beta_{jk} \right]  \right]
\end{align}
where the outer expectation is with respect to large-scale fading and the inner expectation is with respect to small-scale fading. Note that \eqref{OmniEq} provides an upper bound on the performance of an omnicast transmission as we assumed perfect channel knowledge at the UTs. In practice, terminals will have to rely on channel estimates obtained from downlink pilots. This pilot transmission is complicated by the fact that optimal training entails the transmission of mutually orthogonal pilots from each antenna; with a large number of antennas, this pilot overhead can be significant. A reduction of the pilot overhead, at the cost of some spatial diversity order loss, can be achieved by transmission into a pre-determined subspace \cite{meng2016omnidirectional,karlsson2014operation}.  Note that in independent Rayleigh fading, a conventional omnicast system that uses a single antenna is equivalent to the considered array \cite{karlsson2014operation}, while maximal dimensionality reduction applied. A corresponding achievable SE can be obtained from \cite{larsson2016joint}, by setting $\rho_b'=0$ in equation (49) therein\footnote{There is an $M^{\prime}$ parameter in equation (49) of \cite{larsson2016joint}, that in Fig. \ref{figOmni} we found its optimal value by exhaustive search, which gives us the best lower bound that can be obtained for omnicast transmission based on \cite{larsson2016joint}.}, which we refer to as omnicast with imperfect downlink CSI. From Fig. \ref{figOmni} one can see that for any $K_{tot}= GK$, at least ZF-undp or MRT-mucp provide significantly better performance than omnicast transmission. Note that even when we have $K_{tot}=1500$ UTs in the system, MRT-undp provides more than $3$ times higher SE than omnicast transmission. This highly motivates the application of massive MIMO in new multicasting standards \cite{DVB-GFaria,DVB-Elhajjar}.
\begin{figure}[]
	\centering
	\begin{subfigure}[b]{0.48\linewidth}
		\centering
		\includegraphics[width=1\columnwidth, trim={4.1cm 8.3cm 4.4cm 8.85cm},clip]{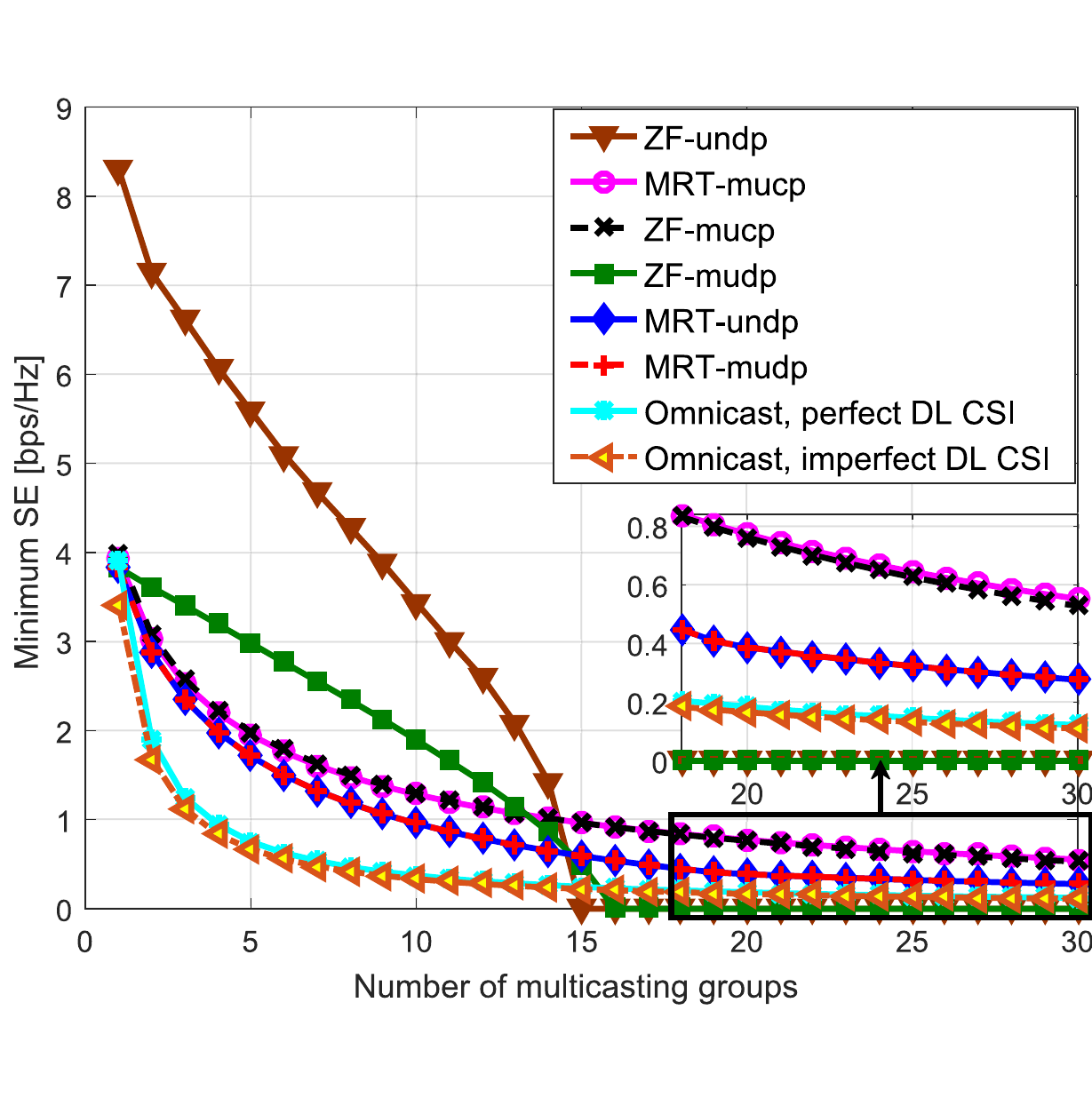}
		\caption{$K=20$ and $N=300$.}
		\label{Omnia}
	\end{subfigure}
	~
	\begin{subfigure}[b]{0.48\linewidth}
		\centering
		\includegraphics[width=1\columnwidth, trim={4cm 8.3cm 4.4cm 8.9cm},clip]{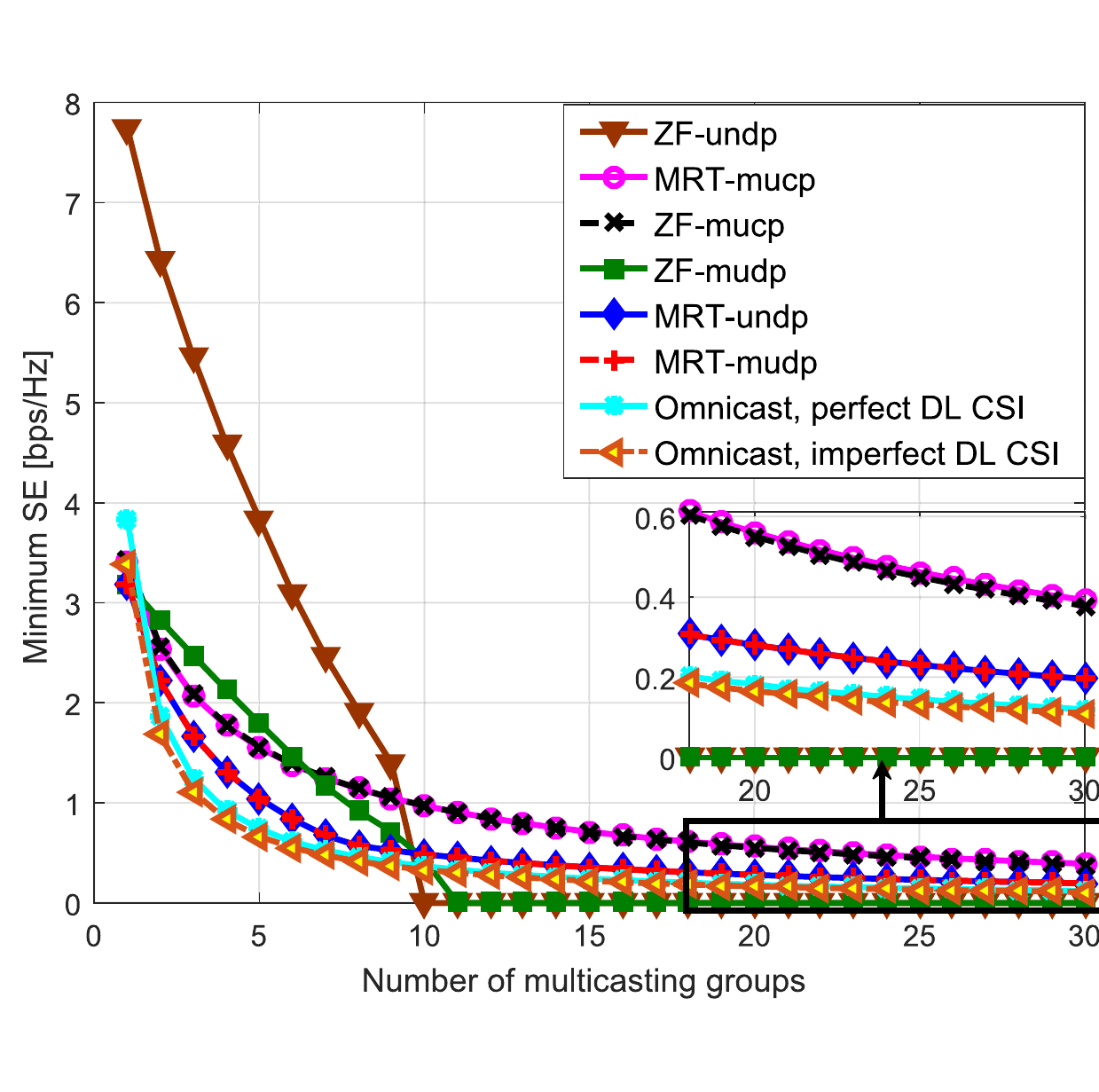}
		\caption{$K=50$ and $N=500$.}
		\label{Omnib}
	\end{subfigure}
	\caption{Comparison between Multicast and Omnicast transmissions.}
	\label{figOmni}
\end{figure}

Based on the numerical analysis provided in this section, Fig. \ref{MuRegimes} presents a guideline for multicasting in massive MIMO systems. Given the system parameters, it determines which scheme should be applied in different scenarios. Also based on our derived results in Section IV, we can explicitly specify the SE that can be obtained using this selected scheme.

\begin{figure}[]
	\centering
	\includegraphics[width=1\columnwidth, trim={0.1cm 5.5cm 3cm 6cm},clip]{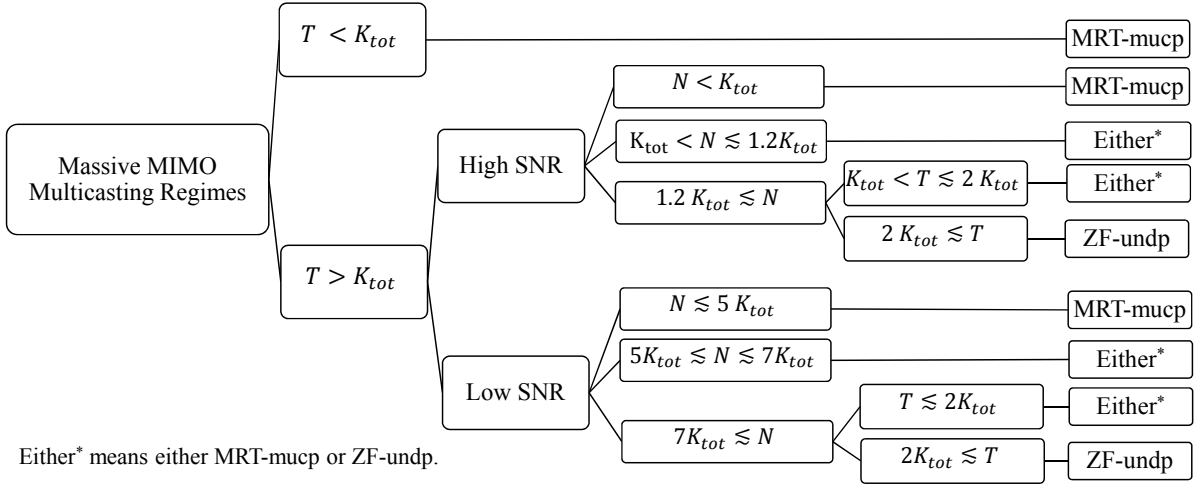}
	\caption{The massive MIMO multicasting regimes.}
	\label{MuRegimes}
\end{figure}

\section{Summary and Conclusion}
In this paper, we studied multi-group multicasting in the context of massive MIMO. First, we introduced different transmission technologies (multicast and unicast), different pilot assignment strategies (co-pilot or dedicated pilot assignment), and the two common precoding schemes in massive MIMO (MRT and ZF). The six possible combinations were outlined in Fig. \ref{figint}. Second, for each of these schemes we derived an achievable SE while accounting for the uplink pilot-based CSI acquisition. Third, for any given training length, we solved the max-min fairness problem for the proposed schemes and found the optimal uplink pilot powers, downlink precoding powers, and the optimal SEs, all in closed-forms. Fourth, based on the achieved results we evaluated the proposed schemes numerically and drew a guideline for practical multi-group massive MIMO multicasting design. We showed that a massive MIMO multicasting system need to support two transmission modes, i.e., MRT-mucp and ZF-undp, and switches between them depending on the system parameters.

\section*{Appendices}
The appendix provides the proof of proposed theorems and propositions. We will frequently use the following lemma, which can be proved by standard techniques (for example see Section II of \cite{marzetta2016fundamentals}).
\begin{lemma}
	\label{MainLemma}
	Consider a discrete memoryless channel with input $x \! \in \! \mathbb{C}$ and output $y\!=\!\!~h x +~ \! v +~\! n$, where $h$ is a deterministic channel coefficient, $v$ is a random interference with zero mean and power $\mathbb{E}[\vert v \vert^{2}] = p_{v}$ that is uncorrelated with $x$, and $n \sim \mathcal{CN}(0,\sigma^{2})$ is independent circularly symmetric complex Gaussian noise. Then if the input power is limited as $\mathbb{E}[\vert x \vert^{2}] = P$ and the channel response $h \in \mathbb{C}$ and interference power $p_{v} \in \mathbb{R}_{+}$ are known at the output, then $\mathrm{SINR} = \dfrac{P \vert h \vert^{2}}{p_{v} + \sigma^{2}}$ and $r = \log_{2} (1 + \mathrm{SINR})$ are the achievable SINR and SE for this channel.
\end{lemma}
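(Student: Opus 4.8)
The plan is to establish the statement as an achievable-rate (mutual-information) lower bound via the standard worst-case-noise / use-and-then-forget argument alluded to in the cited reference. I would take the channel input to be $x \sim \mathcal{CN}(0,P)$, which is a legitimate choice since we only need the existence of a coding scheme attaining the stated rate. The receiver observes $y = hx + v + n$ together with knowledge of the deterministic coefficient $h$ and of the interference power $p_v$; because $h$ is deterministic there is no conditioning subtlety in ``knowing $h$'', and $v+n$ is a zero-mean disturbance, uncorrelated with $x$, of total power $p_v + \sigma^2$.

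First I would write $I(x;y) = h(x) - h(x \mid y)$, with $h(x) = \log_2(\pi e P)$ for the Gaussian input. The key step is to upper bound $h(x \mid y)$: for any scalar $\alpha \in \mathbb{C}$ one has $h(x \mid y) \le h(x - \alpha y \mid y) \le h(x - \alpha y)$, since conditioning cannot increase differential entropy. I would then choose $\alpha$ to be the LMMSE coefficient $\alpha = h^{*} P / (|h|^2 P + p_v + \sigma^2)$ and compute, using only the second-order statistics (in particular $\mathbb{E}[x(v+n)^{*}] = 0$), the mean-square error $\mathbb{E}[|x - \alpha y|^2] = P(p_v+\sigma^2)/(|h|^2 P + p_v + \sigma^2)$. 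By the maximum-entropy property of the circularly symmetric complex Gaussian, $h(x - \alpha y) \le \log_2\bigl(\pi e\, \mathbb{E}[|x - \alpha y|^2]\bigr)$. Combining these yields
\[
I(x;y) \ \ge\ \log_2(\pi e P) - \log_2\!\left(\pi e\,\frac{P(p_v+\sigma^2)}{|h|^2 P + p_v + \sigma^2}\right) \ =\ \log_2\!\left(1 + \frac{|h|^2 P}{p_v + \sigma^2}\right),
\]
which is precisely $r = \log_2(1 + \mathrm{SINR})$ with $\mathrm{SINR} = P|h|^2/(p_v + \sigma^2)$; a standard channel-coding argument then converts this mutual-information bound into an achievable rate, and the associated SINR is achievable by definition.

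The main obstacle --- really the only point that needs care --- is that $v$ is merely \emph{uncorrelated} with $x$, not independent, so one cannot simply declare $v+n$ to be Gaussian noise independent of the input and invoke the AWGN capacity formula directly. The LMMSE-estimator bound above is designed precisely to sidestep this: it uses nothing about the joint law of $(x,v)$ beyond their covariance, so replacing the true disturbance by a worst-case Gaussian one of the same power gives a valid lower bound regardless of the actual distribution of $v$. An equivalent route would be to invoke the worst-case uncorrelated-noise lemma directly, as in Section~II of the cited reference, rather than re-deriving it through the LMMSE step; I would present whichever is shorter given the conventions already fixed in the paper, and I would remark that this lemma is exactly the tool that justifies the UatF-style bounds used in Propositions~\ref{prop1}--\ref{TZFmudp} and in the appendices.
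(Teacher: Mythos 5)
Your proof is correct and is essentially the same argument the paper relies on: the paper simply cites the standard worst-case uncorrelated-noise / LMMSE bounding technique of Section~II of the Marzetta et al.\ reference, and your Gaussian-input, $h(x\mid y)\le h(x-\alpha y)$ derivation with the LMMSE coefficient is exactly that technique written out, with the mean-square-error computation and the resulting SINR matching the stated expression.
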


\section*{Appendix A - Achievable SE with ZF-mudp}
Starting from \eqref{multicasttransmission} and applying \eqref{ZFMUDP} we have
\begin{align}
y_{ik} \!\! = \!  \underbrace{\mathbb{E}[\hat{\mathbf{g}}_{ik}^{dpH}  \mathbf{w}_{i}^{\mathrm{ZF\!-\!mudp}}]}_{h}  \underbrace{s_{i}}_{x} \! + \! \underbrace{( \hat{\mathbf{g}}_{ik}^{dpH}  \mathbf{w}_{i}^{\mathrm{ZF\!-\!mudp}} \!-\! \mathbb{E}[\hat{\mathbf{g}}_{ik}^{dpH}  \mathbf{w}_{i}^{\mathrm{ZF\!-\!mudp}}]) s_{i} \!-\!  \tilde{\mathbf{g}}_{ik}^{dpH} \! \sum_{j=1}^{G} \! \mathbf{w}_{j}^{\mathrm{ZF\!-\!mudp}} s_{j}}_{v} \! +  n. \label{hANDv}
\end{align}
Now using Lemma \ref{MainLemma} while considering $h$, $x$ and $v$ as shown in \eqref{hANDv}, we obtain the following effective SINR for UT $k$ in group $i$: 
\begin{align}
\label{SINRZFmudpaid}
\mathrm{SINR}_{ik}^{\mathrm{ZF-mudp}}   = \dfrac{\vert \mathbb{E}[\hat{\mathbf{g}}_{ik}^{dpH}  \mathbf{w}_{i}^{\mathrm{ZF-mudp}}]  \vert^{2}  }{1 + \mathrm{var}(\hat{\mathbf{g}}_{ik}^{dpH}  \mathbf{w}_{i}^{\mathrm{ZF-mudp}}) + \sum_{j=1}^{G} \mathbb{E}[ |\tilde{\mathbf{g}}_{ik}^{dpH} \mathbf{w}_{j}^{\mathrm{ZF-mudp}} |^{2} ] }.
\end{align}
Next we find the exact value of each term in \eqref{SINRZFmudpaid}. For the term $\mathbb{E}[\hat{\mathbf{g}}_{ik}^{dpH}  \mathbf{w}_{i}^{\mathrm{ZF-mudp}}]$ we have
\begin{align}
\label{powerterm}
&\mathbb{E}[\hat{\mathbf{g}}_{ik}^{dpH}  \mathbf{w}_{i}^{\mathrm{ZF-mudp}}]
=\sum_{m=1}^{K_{i}} \mathbb{E} \left[ \mathrm{tr} \left(  \sqrt{\mu_{im}} \hat{\mathbf{g}}_{im}^{dp} \hat{\mathbf{g}}_{ik}^{dpH}  (\mathbf{I}_{N} - \hat{\mathbf{G}}_{-i} (\hat{\mathbf{G}}_{-i}^{H} \hat{\mathbf{G}}_{-i})^{-1} \hat{\mathbf{G}}_{-i}^{H} ) \right) \right]
\\
&=\sum_{m=1}^{K_{i}} \mathrm{tr} \left( \mathbb{E} [   \sqrt{\mu_{im}} \hat{\mathbf{g}}_{im}^{dp} \hat{\mathbf{g}}_{ik}^{dpH} ] \mathbb{E} [(\mathbf{I}_{N} - \hat{\mathbf{G}}_{-i} (\hat{\mathbf{G}}_{-i}^{H} \hat{\mathbf{G}}_{-i})^{-1} \hat{\mathbf{G}}_{-i}^{H} )  ] \right) \notag
\\
&=  \sqrt{\mu_{ik}} \gamma_{ik}^{dp} \left( N -  \mathbb{E} \left[ \mathrm{tr} \left(  \hat{\mathbf{G}}_{-i} (\hat{\mathbf{G}}_{-i}^{H} \hat{\mathbf{G}}_{-i})^{-1} \hat{\mathbf{G}}_{-i}^{H}  \right) \right]  \right) 
= \sqrt{\mu_{ik}} \gamma_{ik}^{dp} (N - \nu_{i}). \notag
\end{align}
Now let us consider the interference term due to imperfect CSI. We have
\begin{align}
\label{interferenceZFmudp}
\mathbb{E}[ |\tilde{\mathbf{g}}_{ik}^{dpH} \mathbf{w}_{j}^{\mathrm{ZF-mudp}} |^{2} ] =&  \mathbb{E}[ \tilde{\mathbf{g}}_{ik}^{dpH} \mathbf{w}_{j}^{\mathrm{ZF-mudp}} \mathbf{w}_{j}^{\mathrm{ZF-mudp}H} \tilde{\mathbf{g}}_{ik}^{dp} ] 
\\
=& (\beta_{ik} - \gamma_{ik}^{dp}) \mathrm{tr} ( \mathbb{E}[\mathbf{w}_{j}^{\mathrm{ZF-mudp}} \mathbf{w}_{j}^{\mathrm{ZF-mudp}H}] ) = (\beta_{ik} - \gamma_{ik}^{dp}) \sum_{t=1}^{K_j} p_{jt}^{dl} . \notag
\end{align}
Now we need to calculate the variance term,
\begin{align}
\label{varZFmudp}
\mathrm{var}(\hat{\mathbf{g}}_{ik}^{dpH}  \mathbf{w}_{i}^{\mathrm{ZF-mudp}}) = \mathbb{E}[\vert \hat{\mathbf{g}}_{ik}^{dpH}  \mathbf{w}_{i}^{\mathrm{ZF-mudp}} \vert^{2}] - \vert \mathbb{E}[ \hat{\mathbf{g}}_{ik}^{dpH}  \mathbf{w}_{i}^{\mathrm{ZF-mudp}} ] \vert^{2} .
\end{align}
Denote $ \mathbf{C}_{i} = \mathbf{I}_{N} - \hat{\mathbf{G}}_{-i} (\hat{\mathbf{G}}_{-i}^{H} \hat{\mathbf{G}}_{-i})^{-1} \hat{\mathbf{G}}_{-i}^{H}  $. For the term $\mathbb{E}[\vert \hat{\mathbf{g}}_{ik}^{dpH}  \mathbf{w}_{i}^{\mathrm{ZF-mudp}} \vert^{2}]$ we have  
\begin{align*}\allowdisplaybreaks
&\mathbb{E}[\vert \hat{\mathbf{g}}_{ik}^{dpH}  \mathbf{w}_{i}^{\mathrm{ZF-mudp}} \vert^{2}] = \mathbb{E}[ \hat{\mathbf{g}}_{ik}^{dpH} \mathbf{C}_{i} \sum_{m=1}^{K_{i}} \sqrt{\mu_{im}} \hat{\mathbf{g}}_{im}^{dp} \sum_{t=1}^{K_{i}} \sqrt{\mu_{it}} \hat{\mathbf{g}}_{it}^{dpH} \mathbf{C}_{i} \hat{\mathbf{g}}_{ik}^{dp} ]
\\
&= \underbrace{\mathrm{tr} \left( \mathbb{E} \left[  \hat{\mathbf{g}}_{ik}^{dp} \hat{\mathbf{g}}_{ik}^{dpH} \mathbf{C}_{i} \! \left( \! \sum_{m=1, m\neq k}^{K_{i}} \sum_{t=1, t\neq k}^{K_{i}} \! \! \! \sqrt{\mu_{im} \mu_{it}} \hat{\mathbf{g}}_{im}^{dp} \hat{\mathbf{g}}_{it}^{dpH}  \!\! \right) \!\! \mathbf{C}_{i} \right] \right)}_{(i)}  
+
\underbrace{\mu_{ik} \mathbb{E} \left[ \left( \hat{\mathbf{g}}_{ik}^{dpH} \mathbf{C}_{i} \hat{\mathbf{g}}_{ik}^{dp} \right)^{2}  \right]}_{(ii)}  
\\
&+ \underbrace{\mathrm{tr} \left( \mathbb{E} \left[  \hat{\mathbf{g}}_{ik}^{dp} \hat{\mathbf{g}}_{ik}^{dpH} \mathbf{C}_{i}  \left( \sum_{t=1, t\neq k}^{K_{i}}  \sqrt{\mu_{ik} \mu_{it}} \hat{\mathbf{g}}_{ik}^{dp} \hat{\mathbf{g}}_{it}^{dpH} + \sum_{m=1, m\neq k}^{K_{i}} \sqrt{\mu_{im} \mu_{ik}} \hat{\mathbf{g}}_{im}^{dp} \hat{\mathbf{g}}_{ik}^{dpH}  \right) \mathbf{C}_{i} \right] \right)}_{(iii)}.
\end{align*} 
Notice that $(iii)$ is equal to zero due to the independency of $ \hat{\mathbf{g}}_{ik}^{dp} $ and $ \hat{\mathbf{g}}_{it}^{dp} \; \forall t \neq k , t \in \mathcal{K}_{i}$. The term $(i)$ reduces to
\begin{align*}
&\sum_{m=1, m\neq k}^{K_{i}} \mu_{im}  \mathrm{tr} \left( \mathbb{E} \left[  \hat{\mathbf{g}}_{ik}^{dp} \hat{\mathbf{g}}_{ik}^{dpH} \mathbf{C}_{i}   \hat{\mathbf{g}}_{im}^{dp} \hat{\mathbf{g}}_{im}^{dpH} \mathbf{C}_{i} \right] \right)
=\gamma_{ik}^{dp} \sum_{m=1, m\neq k}^{K_{i}} \mu_{im}  \mathrm{tr} \left( \mathbb{E} \left[ \mathbf{C}_{i}   \hat{\mathbf{g}}_{im}^{dp} \hat{\mathbf{g}}_{im}^{dpH} \mathbf{C}_{i} \right] \right)
\\
&= \gamma_{ik}^{dp} \sum_{m=1, m\neq k}^{K_{i}} \mu_{im} \gamma_{im}^{dp} \mathbb{E} \left[  \mathrm{tr} \left(  \mathbf{C}_{i} \right) \right] \stackrel{(a)}{=}  \gamma_{ik}^{dp} \sum_{m=1, m\neq k}^{K_{i}} p_{im}^{dl}  \notag
\end{align*} 
where in $(a)$ we used the fact that $N - \nu_{i} = \mathrm{tr} \left(  \mathbf{C}_{i} \right) $. For the term $(ii)$, denote $\hat{\mathbf{g}}_{ik}^{dp} = \sqrt{\gamma_{ik}^{dp}} \; \hat{\mathbf{h}}_{ik}$ with $\hat{\mathbf{h}}_{ik} \sim \mathcal{CN}(\mathbf{0},\mathbf{I}_{N})$, then we have
\begin{align}
\mu_{ik} \mathbb{E} \left[ \left( \hat{\mathbf{g}}_{ik}^{dpH} \mathbf{C}_{i} \hat{\mathbf{g}}_{ik}^{dp} \right)^{2}  \right] &= \mu_{ik} (\gamma_{ik}^{dp})^{2} \mathbb{E} \left[ \left( \hat{\mathbf{h}}_{ik}^{H} \mathbf{C}_{i} \hat{\mathbf{h}}_{ik} \right)^{2}  \right] = \dfrac{p_{ik}^{dl} \gamma_{ik}^{dp}}{N - \nu_{i}} \left(  \mathrm{tr}(\mathbf{C}_{i})^{2} + \mathrm{tr}(\mathbf{C}_{i}^{2}) \right)
\\
&=  p_{ik}^{dl} \gamma_{ik}^{dp}   (N -\nu_{i}) +  p_{ik}^{dl} \gamma_{ik}^{dp}.  \notag
\end{align}
Therefore $\mathrm{var}(\hat{\mathbf{g}}_{ik}^{dpH}  \mathbf{w}_{i}^{\mathrm{ZF-mudp}}) = \gamma_{ik}^{dp} \sum_{m=1}^{K_{i}} p_{im}^{dl}$. Now, inserting \eqref{powerterm}, \eqref{interferenceZFmudp}, and \eqref{varZFmudp} into \eqref{SINRZFmudpaid} and utilizing that the pilot length is $\tau_{p}^{dp}$, the SE is obtained as given in \eqref{se_zf_mudp}.

\section*{Appendix B - Achievable SE with ZF-mucp}
Starting from \eqref{multicasttransmission} and applying \eqref{ZFMUCP} we have
\begin{align}
y_{ik} \!\! =&  (\hat{\mathbf{g}}_{ik}^{cp} - \tilde{\mathbf{g}}_{ik}^{cp})^{\!H} \! \sum_{j=1}^{G} \! \mathbf{w}_{j}^{\mathrm{ZF\!-\!mucp}} \! s_{j} \! + \! n \! \stackrel{(a)}{=} \!\!  \frac{\sqrt{\tau_{p}^{cp} p_{ik}^{u}} \beta_{ik}}{\tau_{p}^{cp} \! \sum_{m=1}^{K_{i}} \! p_{im}^{u} \beta_{im}} \!\! \sum_{j=1}^{G} \! \hat{\mathbf{g}}_{i}^{H} \! \mathbf{w}_{j}^{\mathrm{ZF\!-\!mucp}} \! s_{j} \! - \! \tilde{\mathbf{g}}_{ik}^{cpH} \! \sum_{j=1}^{G} \! \mathbf{w}_{j}^{\mathrm{ZF\!-\!mucp}} \! s_{j} \! + \! n \notag
\\
=& \underbrace{\dfrac{\sqrt{\tau_{p}^{cp} p_{ik}^{u}} \beta_{ik}}{\tau_{p}^{cp} \sum_{m=1}^{K_{i}}  p_{im}^{u} \beta_{im}} \sqrt{p_{i}^{dl} \gamma_{i} ( N-G)} }_{h} \underbrace{s_{i}}_{x} -  \underbrace{\tilde{\mathbf{g}}_{ik}^{cpH} \sum_{j=1}^{G} \mathbf{w}_{j}^{\mathrm{ZF-mucp}} s_{j}}_{v} + n \label{aidproofZFmudp}
\end{align}
where in $(a)$ we used $\hat{\mathbf{g}}_{jk}^{cp}  =  \dfrac{\sqrt{\tau_{p}^{cp} p_{jk}^{u}} \beta_{jk}}{ \tau_{p}^{cp} \sum_{k=1}^{K_{j}} p_{jk}^{u} \beta_{jk}} \hat{\mathbf{g}}_{j}$. 
Now applying Lemma \ref{MainLemma} considering $h$, $x$ and $v$ as shown in \eqref{aidproofZFmudp}, we obtain the effective SINR 
\begin{align}
\label{help_sinr_mucp}
\mathrm{SINR}_{ik}^{\mathrm{ZF-mucp}}   = \dfrac{\dfrac{\tau_{p}^{cp} p_{ik}^{u} \beta_{ik}^{2} p_{i}^{dl} \gamma_{i} ( N-G)}{(\tau_{p}^{cp} \sum_{m=1}^{K_{i}}  p_{im}^{u} \beta_{im})^{2}}}{1 + \sum_{j=1}^{G} \mathbb{E}[ |\tilde{\mathbf{g}}_{ik}^{cpH} \mathbf{w}_{j}^{\mathrm{ZF-mucp}} |^{2} ] }.
\end{align}
In the above equation for the terms $\mathbb{E}[ |\tilde{\mathbf{g}}_{ik}^{cpH} \mathbf{w}_{j}^{\mathrm{ZF-mucp}} |^{2} ]$ we have
\begin{align}
&\mathbb{E}[ |\tilde{\mathbf{g}}_{ik}^{cpH} \mathbf{w}_{j}^{\mathrm{ZF-mucp}} |^{2} ]
= \mathbb{E}[ \tilde{\mathbf{g}}_{ik}^{cpH} \mathbf{w}_{j}^{\mathrm{ZF-mucp}} \mathbf{w}_{j}^{\mathrm{ZF-mucp}H} \tilde{\mathbf{g}}_{ik}^{cp}]
= \mathrm{tr} ( \mathbb{E}[ \tilde{\mathbf{g}}_{ik}^{cp} \tilde{\mathbf{g}}_{ik}^{cpH} \mathbf{w}_{j}^{\mathrm{ZF-mucp}} \mathbf{w}_{j}^{\mathrm{ZF-mucp}H} ] ) \notag
\\
&\stackrel{(a)}{=} \mathrm{tr} ( \mathbb{E}[ \tilde{\mathbf{g}}_{ik}^{cp} \tilde{\mathbf{g}}_{ik}^{cpH} ] \mathbb{E}[ \mathbf{w}_{j}^{\mathrm{ZF-mucp}} \mathbf{w}_{j}^{\mathrm{ZF-mucp}H} ] ) = (\beta_{ik} - \gamma_{ik}) \mathrm{tr} (\mathbb{E}[ \mathbf{w}_{j}^{\mathrm{ZF-mucp}} \mathbf{w}_{j}^{\mathrm{ZF-mucp}H} ] ) \notag
\\
&= (\beta_{ik} - \gamma_{ik}) \mathbb{E}[ \mathbf{w}_{j}^{\mathrm{ZF-mucp}H} \mathbf{w}_{j}^{\mathrm{ZF-mucp}} ] = (\beta_{ik} - \gamma_{ik}) p_{j}^{dl} \label{interferenceZFmucp}
\end{align}
where (a) is due to the fact that $\tilde{\mathbf{g}}_{ik}^{cp}$ and $\hat{\mathbf{g}}_{i}$ are independent. Inserting \eqref{interferenceZFmucp} into \eqref{help_sinr_mucp} and noting that the pilot length is $\tau_{p}^{cp}$, we obtain \eqref{se_zf_mucp} for the SE of this UT.


\section*{Appendix C - MMF problem for MRT-undp}
First note that $\mathrm{SINR}_{ik}^{\mathrm{MRT-undp}}\!$, given in Proposition \ref{prop1}, is monotonically increasing with respect to $\gamma_{ik}^{dp}$, and also $\gamma_{ik}^{dp}$ is monotonically increasing with respect to $p_{ik}^{u}$. Therefore, the optimal value for $p_{ik}^{u}$ is $p_{ik}^{u*} = p_{ik}^{utot}$ and $\gamma_{ik}^{dp*}  =  \dfrac{\tau_{p}^{dp} p_{ik}^{utot} \beta_{ik}^{2} }{1 + \tau_{p}^{dp} p_{ik}^{utot} \beta_{ik}}$. Now we prove that at the optimal solution $\mathrm{SINR}_{ik}^{\mathrm{MRT-undp}} = \mathrm{SINR}_{jt}^{\mathrm{MRT-undp}} =\Gamma \quad \forall k,t,i,j$. Assume the contrary, i.e., that UT $t$ in group $j$ has the minimum SINR and there exists a UT $k$ in a group $i$ with $(i,k) \neq (j,t)$ such that $\mathrm{SINR}_{ik}^{\mathrm{MRT-undp}} > \mathrm{SINR}_{jt}^{\mathrm{MRT-undp}} $. Then one can improve $\mathrm{SINR}_{jt}^{\mathrm{MRT-undp}}$ by changing $p_{ik}^{dl}$ and $p_{jt}^{dl}$ respectively to $p_{ik}^{dl} - \delta$ and $p_{jt}^{dl} + \delta$, where $0<\delta<(\mathrm{SINR}_{ik}^{\mathrm{MRT-undp}} - \mathrm{SINR}_{jt}^{\mathrm{MRT-undp}} )\dfrac{1+\beta_{ik}P}{N \gamma_{ik}^{dp*}}$. Note that this just changes the $\mathrm{SINR}_{ik}^{\mathrm{MRT-undp}}$ and $\mathrm{SINR}_{jt}^{\mathrm{MRT-undp}}$, and the other SINRs remain intact. By performing this process once (or repeating it multiple times, if we have multiple UTs with same minimum SINR), we can increase the minimum SINR of the system, which contradicts our optimality assumption. Hence at the optimal solution all the SINRs are equal. Therefore, $p_{ik}^{dl*} = \dfrac{\Gamma (1+\beta_{ik} P)}{N \gamma_{ik}^{dp*}}$. Now by summing over all UTs in all groups and performing some straightforward operations we can find $\Gamma =  NP \big(\sum_{j=1}^{G} \sum_{t=1}^{K_{j}} \frac{1+\beta_{jt}P}{\gamma_{jt}^{dp*}} \big)^{-1}$.

\section*{Appendix D - MMF problem for ZF-mudp}
Starting from $\mathrm{SINR}_{ik}^{\mathrm{ZF-mudp}}$, given in Theorem \ref{TZFmudp}, and similar to Appendix C we can show that the optimal value for $p_{ik}^{u}$ is $p_{ik}^{u*} = p_{ik}^{utot}$ and $\gamma_{ik}^{dp*}  =  \dfrac{\tau_{p}^{dp} p_{ik}^{utot} \beta_{ik}^{2} }{1 + \tau_{p}^{dp} p_{ik}^{utot} \beta_{ik}}$. Now we prove that at the optimal solution $\mathrm{SINR}_{ik}^{\mathrm{ZF-mudp}} = \mathrm{SINR}_{jt}^{\mathrm{ZF-mudp}} = \Gamma \; \forall k,t,i,j$. Assume the contrary, i.e., that UT $t$ in group $j$ has the minimum SINR, and there exists a UT $k$ in a group $i$ with $(i,k) \neq (j,t)$ such that $\mathrm{SINR}_{ik}^{\mathrm{ZF-mudp}} > \mathrm{SINR}_{jt}^{\mathrm{ZF-mudp}}$. Denote $a_{ik} = (N - \nu_{i}) \gamma_{ik}^{dp*} p_{ik}^{dl}$ and $b_{ik} = 1+ \gamma_{ik}^{dp*} \sum_{m=1}^{K_{i}} p_{im}^{dl}   + P (\beta_{ik} - \gamma_{ik}^{dp*})$. Then one can increase the minimum SINR of the system by reducing $p_{ik}^{dl}$ to $p_{ik}^{dl} - \delta$, where $0 < \delta < \dfrac{a_{ik}b_{jt}-a_{jt}b_{ik}}{(N-\nu_{i}) \gamma_{ik}^{dp*} b_{jt} - a_{jt} \gamma_{ik}^{dp*}}$, which contradicts the assumption. Therefore at the optimal solution all UTs have the same SINR. Now consider UTs $k$ and $t$ in $i$th multicasting group. Let us denote $P_{i}^{dl} = \sum_{m=1}^{K_{i}} p_{im}^{dl}$, then we have
\begin{align}
\Gamma_{i} =  \dfrac{ \gamma_{ik}^{dp*} p_{ik}^{dl}}{1+ \gamma_{ik}^{dp*} P_{i}^{dl}   + P (\beta_{ik} - \gamma_{ik}^{dp*})} 
=  
\dfrac{ \gamma_{it}^{dp*} p_{it}^{dl}}{1+ \gamma_{it}^{dp*} P_{i}^{dl}   + P (\beta_{it} - \gamma_{it}^{dp*})} 
\end{align}
with $\Gamma = (N-\nu_{i}) \Gamma_{i}$. Hence we can write
\begin{align}
\label{pik_dl_zf_mudp_proof}
p_{ik}^{dl*} = \dfrac{\Gamma}{ (N-\nu_{i})} (\frac{1}{\gamma_{ik}^{dp*}} + P_{i}^{dl} + P \dfrac{\beta_{ik}}{\gamma_{ik}^{dp*}} - P ).
\end{align}
Summing over the downlink power of all UTs in group $i$ and after some straightforward operations we obtain
$P_{i}^{dl} =  \Gamma \Delta_{i}(N-\nu_{i} - \Gamma K_{i})^{-1}$, where $\Delta_{i} = \sum_{k=1}^{K_{i}} (\frac{1}{\gamma_{ik}^{dp*}} + P \dfrac{\beta_{ik}}{\gamma_{ik}^{dp*}} - P ) $. Note that as $\forall i \in \mathcal{G} \; P_{i}^{dl} \geq 0$ and $\sum_{k=1}^{K_{i}} P_{i}^{dl} = P_{dp} = P$, we have $\Gamma < \min_{ i \in \mathcal{G}} \{ \frac{N-\nu_i}{K_i} \}$. Summing over all groups downlink powers we have \eqref{sinr_equation} and $\Gamma$ can be found by solving it.

\section*{Appendix E - MMF problem for MRT-mucp}
First we prove that at the optimal solution $ \mathrm{SINR}_{jk}^{\mathrm{MRT-mucp}} = \mathrm{SINR}_{it}^{\mathrm{MRT-mucp}} \; \forall t,k,i,j$. Let us denote the user with the minimum SINR in $i$th group as $kmin_{i}$, i.e.,  $kmin_{i} = \argmin_{k \in \mathcal{K}_{i}}$. Now we prove that at the optimal solution of $\mathcal{P}^{\prime}2$ we have $ \mathrm{SINR}_{jkmin_{j}}^{\mathrm{MRT-mucp}} =  \mathrm{SINR}_{ikmin_{i}}^{\mathrm{MRT-mucp}} \; \forall i,j$. Assume the contrary, then $\exists j,i \in \mathcal{G}$ such that $ \mathrm{SINR}_{jkmin_{j}}^{\mathrm{MRT-mucp}} > \mathrm{SINR}_{ikmin_{i}}^{\mathrm{MRT-mucp}}$. Now one can change $p_{j}^{dl}$ and $p_{i}^{dl}$ respectively to $p_{j}^{dl} - \delta$ and $p_{i}^{dl} + \delta$ with $0 < \delta < ( \mathrm{SINR}_{jkmin_{j}}^{\mathrm{MRT-mucp}} - \mathrm{SINR}_{ikmin_{i}}^{\mathrm{MRT-mucp}} ) \dfrac{1+\beta_{jkmin_{j}}P}{N \gamma_{jkmin_{j}}^{cp}}$ and improve the minimum SINR of the system\footnote{If we have multiple groups with equal value of minimum SINR, we can improve the minimum SINR of the system by repeating the same procedure multiple times.}, which contradicts our optimality assumption. Now we prove that at the optimal solution the SINR of all the users within each group are the same, i.e., $\mathrm{SINR}_{ik}^{\mathrm{MRT-mucp}} = \mathrm{SINR}_{it}^{\mathrm{MRT-mucp}} \; \forall k,t \in \mathcal{K}_{i}, \forall i \in \mathcal{G}$. Assume the contrary, $\exists k,t \in \mathcal{K}_{i}$ such that $ \mathrm{SINR}_{ik}^{\mathrm{MRT-mucp}} > \mathrm{SINR}_{it}^{\mathrm{MRT-mucp}} $. Then one can improve the minimum SINR of this group by reducing $p_{ik}^{u}$ to $p_{ik}^{u} - \delta$, where $0 < \delta < \dfrac{(1+\tau_{p}^{cp} \sum_{m=1}^{K_{i}} p_{im}^{u} \beta_{im})(1+\beta_{ik}P)}{\tau_{p}^{cp} \beta_{ik}^{2} N p_{i}^{dl}} (\mathrm{SINR}_{ik}^{\mathrm{MRT-cp}} - \mathrm{SINR}_{it}^{\mathrm{MRT-cp}})$. Hence at the optimal answer for group $i$ we have
\begin{align}
\Phi_{i} = \dfrac{\gamma_{ik}^{cp}}{1+\beta_{ik}P} = \dfrac{\gamma_{it}^{cp}}{1+\beta_{it}P} \; \forall t,k \in \mathcal{K}_{i}, \forall i \in \mathcal{G}
\end{align}
where $\Phi_{i}$ is a fixed number. Equivalently we have
\begin{align}
\label{Upsiloneq}
\Upsilon_{i} =  \dfrac{p_{ik}^{u} \beta_{ik}^{2}}{1+\beta_{ik}P} = \dfrac{p_{it}^{u} \beta_{it}^{2}}{1+\beta_{it}P}  \; \forall k,t \in \mathcal{K}_{i}, \forall i \in \mathcal{G}
\end{align}
where $\Upsilon_{i}$ is a fixed constant. Considering the fact that $\mathrm{SINR}_{ik}^{\mathrm{MRT-mucp}}$ is strictly increasing with respect to $p_{ik}^{u}$ and noting that $p_{ik}^{u} \leq p_{ik}^{utot}$, the optimal uplink power will be equal to
\begin{align}
p_{ik}^{u*} =  \dfrac{1+\beta_{ik}P}{\beta_{ik}^{2}} \Upsilon_{i}  \; \forall k \in \mathcal{K}_{i}, \forall i \in \mathcal{G}
\end{align} 
where $\Upsilon_{i} = \min_{k \in \mathcal{K}_{i}} \dfrac{p_{ik}^{utot} \beta_{ik}^{2}}{1+\beta_{ik}P} $. Therefor $\mathrm{SINR}_{ik}^{\mathrm{MRT-mucp}} =  \Upsilon_{i} \dfrac{N p_{i}^{dl} \tau_{p}^{cp}}{1 + \tau_{p}^{cp} \sum_{m=1}^{K_{i}} p_{im}^{u} \beta_{im} }$. As we already showed the SINR at the optimal point is equal among all UTs and we have $\Gamma = \mathrm{SINR}_{ik}^{\mathrm{MRT-mucp}} \forall i,k$. Hence we have $p_{i}^{dl*} = \Gamma (1 + \tau_{p}^{cp} \sum_{m=1}^{K_{i}} p_{im}^{u*} \beta_{im}) / \tau_{p}^{cp} N \Upsilon_{i}$. Now summing $ p_{i}^{dl*}$ over all groups and employing the total available power constraint we achieve \eqref{MRTmucpSINR}.

\section*{Appendix F - MMF problem for ZF-mucp}
First we prove that at the optimal solution $ \mathrm{SINR}_{jk}^{\mathrm{ZF-mucp}} = \mathrm{SINR}_{it}^{\mathrm{ZF-mucp}} \; \forall t,k,i,j$. Let us denote the user with the minimum SINR in $i$th group as $kmin_{i}$, i.e.,  $kmin_{i} = \argmin_{k \in \mathcal{K}_{i}}$. Now we prove that at the optimal solution $ \mathrm{SINR}_{jkmin_{j}}^{\mathrm{ZF-mucp}} = \mathrm{SINR}_{ikmin_{i}}^{\mathrm{ZF-mucp}}$. Assume the contrary, then $\exists j,i \in \mathcal{G}$ such that $ \mathrm{SINR}_{jkmin_{j}}^{\mathrm{ZF-mucp}} > \mathrm{SINR}_{ikmin_{i}}^{\mathrm{ZF-mucp}}$. Now one can change $p_{j}^{dl}$ and $p_{i}^{dl}$ respectively to $p_{j}^{dl} - \delta$ and $p_{i}^{dl} + \delta$ with $0 < \delta < \big( \mathrm{SINR}_{jkmin_{j}}^{\mathrm{ZF-mucp}} - \mathrm{SINR}_{ikmin_{i}}^{\mathrm{ZF-mucp}} \big) \dfrac{1+(\beta_{jkmin_{j}}-\gamma_{jkmin_{j}}^{cp})P}{(N-G) \gamma_{jkmin_{j}}^{cp}}$ and improve the minimum SINR of the system, which contradicts our optimality assumption. Now we prove that at the optimal answer the SINR of all the UTs within each group are the same, i.e., $\mathrm{SINR}_{ik}^{\mathrm{ZF-mucp}} = \mathrm{SINR}_{it}^{\mathrm{ZF-mucp}} \; \forall k,t \in \mathcal{K}_{i}, \forall i \in \mathcal{G}$. Assume the contrary, $\exists k,t \in \mathcal{K}_{i}$ such that $ \mathrm{SINR}_{ik}^{\mathrm{ZF-mucp}} > \mathrm{SINR}_{it}^{\mathrm{ZF-mucp}} $. Then one can improve the minimum SINR of this group by reducing $p_{ik}^{u}$ to $p_{ik}^{u} - \delta$, where $0 < \delta < \dfrac{1+(\beta_{ik}-\gamma_{ik}^{cp})P}{p_{i}^{dl} (N-G)}(\mathrm{SINR}_{ik}^{\mathrm{ZF-mucp}} - \mathrm{SINR}_{it}^{\mathrm{ZF-mucp}})$. Hence at the optimal answer the SINR of all users within group $i$ are equal and we have
\begin{align}
\Delta_{i} = \dfrac{\gamma_{ik}^{cp}}{1+(\beta_{ik}-\gamma_{ik}^{cp})P} = \dfrac{\gamma_{it}^{cp}}{1+(\beta_{it} - \gamma_{it}^{cp})P} \; \forall t,k \in \mathcal{K}_{i}, \forall i \in \mathcal{G}.
\end{align}
Equivalently we have $\gamma_{ik}^{cp}(1+P\beta_{it}) = \gamma_{it}^{cp}(1+P\beta_{ik}) \; \forall t,k \in \mathcal{K}_{i}, \forall i \in \mathcal{G}$. Therefore 
\begin{align}
\Upsilon_{i} =  \dfrac{p_{ik}^{u} \beta_{ik}^{2}}{1+\beta_{ik}P} =  \dfrac{p_{it}^{u} \beta_{it}^{2}}{1+\beta_{it}P} \; \forall t,k \in \mathcal{K}_{i}, \forall i \in \mathcal{G}
\end{align}
where $\Upsilon_{i}$ is a fixed constant. Now note that it is exactly the same as \eqref{Upsiloneq} and hence the optimal uplink powers are given as 
\begin{align}
p_{ik}^{u*} =  \dfrac{1+\beta_{ik}P}{\beta_{ik}^{2}} \Upsilon_{i}  \; \forall k \in \mathcal{K}_{i}, \forall i \in \mathcal{G}
\end{align} 
where $\Upsilon_{i} = \min_{k \in \mathcal{K}_{i}} \dfrac{p_{ik}^{utot} \beta_{ik}^{2}}{1+\beta_{ik}P} $. Using the above result and after straightforward calculation we obtain $\Delta_{i} = \dfrac{\tau_{p}^{cp} \Upsilon_{i}}{1 + \tau_{p}^{cp}(E_{i} - P \Upsilon_{i})} \; \forall i \in \mathcal{G}$, where $E_{i} = K_{i} \Upsilon_{i} P + \Upsilon_{i} \sum_{m=1}^{K_{i}} \dfrac{1}{\beta_{im}}$. Since we proved that the SINR is equal for all UTs, we have $\Gamma = \mathrm{SINR}_{ik}^{\mathrm{ZF-mucp}} = (N-G) \Delta_{i} p_{i}^{dl}$, where $\Gamma$ is a fixed constant. Now, $p_{i}^{dl} = \dfrac{\Gamma}{(N-G) \Delta_{i}}$, and summing over all downlink powers and using the total available power constraint we achieve \eqref{ZFmucpSINR} and \eqref{ZFmucpDLpower} for the $\Gamma$ and $p_{i}^{dl*}$, respectively.

\bibliographystyle{IEEEtran}
\bibliography{IEEEabrv,multicastingMassive}

\end{document}